\newcommand{\Av}{\mathcal{A}}
\def\G{\mathcal{G}}
\def\be{\begin{equation}}
\def\ee{\end{equation}}
\def\ba{\begin{eqnarray}}
\def\ea{\end{eqnarray}}
\def\bi{\begin{itemize}}
\def\ei{\end{itemize}}
\newcommand{\beq}{\begin{eqnarray}}
\newcommand{\eeq}{\end{eqnarray}}
\def\F{\mathcal{F}}
\def\A{\mathcal{A}}
\def\zb{\bar{z}}
\def\I{\mathcal{I}}
\def\O{\mathcal{O}}
\def\Tr{\text{Tr}}
\newcommand{\R}{\mathbb{R}}
\newcommand{\Z}{\mathbb{Z}}
\newcommand{\N}{\mathbb{N}}
\newcommand{\Op}{\mathcal{O}}
\newcommand{\Ib}{\mathcal{I}}
\newcommand{\tr}{\text{tr}}
\newtheorem{lemma}{Lemma}
\title{Infinite-dimensional hierarchy of recursive extensions for all sub$^n$-leading soft effects in Yang-Mills}
\author[a,1]{Silvia Nagy,\note{Corresponding author.}}
\author[b]{Javier Peraza,}
\author[a]{Giorgio Pizzolo}
\affiliation[a]{Department of Mathematical Sciences, Durham University, Durham, DH1 3LE, UK}
\affiliation[b]{Department of Mathematics and Statistics , Concordia University, Montreal, H3G 1M8, Canada}
\emailAdd{silvia.nagy@durham.ac.uk}
\emailAdd{javier.perazamartiarena@concordia.ca}
\emailAdd{giorgio.pizzolo@durham.ac.uk}
\abstract{Building on our proposal in \cite{Nagy:2024dme}, we present in detail the construction of the extended phase space for Yang-Mills at null infinity, containing the asymptotic symmetries and the charges responsible for sub$^n$-leading soft theorems at all orders. The generality of the procedure allows it to be directly applied to the computation of both tree and loop-level soft limits. We also give a detailed study of Yang-Mills equations under the radial expansion, giving a thorough construction of the radiative phase space for decays compatible with tree-level amplitudes for both light-cone and radial gauges. This gives rise to useful recursion relations at all orders between the field strength and the vector gauge coefficients. We construct the sub$^n$-leading charges recursively, and show a hierarchical truncation such that each charge subalgebra is closed, and their action in the extended phase space is canonical. We relate these results with the infinite-dimensional algebras that have been recently introduced in the context of conformal field theories at null infinity. We also apply our method to the computation of non-universal terms in the sub-leading charges arising in theories with higher derivative interaction terms.
}
\begin{document} 

\maketitle


\section{Introduction}\label{Introduction}
The infrared limit of gauge and gravity theories has allowed for the discovery of deep connections between the soft limits of scattering amplitudes, asymptotic symmetries, and the memory effect, organised in the so-called IR triangle \cite{Strominger:2017zoo}. This has lead to insights which have allowed for the development of a conjectured holographic principle for asymptotically flat spacetimes\footnote{See reviews \cite{Pasterski:2021raf,Raclariu:2021zjz} and references within.}.

In this article we will be focusing specifically on the relation between soft theorems and asymptotic symmetries. The former, first discovered many years ago \cite{Weinberg1965,Low1958}, govern the behaviour of scattering amplitudes in the limit where one or more of the particles have a vanishingly small energy. The latter are local symmetries that survive in the limit where we go to the null boundary of spacetime ($\I$). Then, unlike for standard gauge transformations, we can construct charges, which, via the Ward identities, provide proofs for the aforementioned soft theorems. This relies on constructing a well-defined phase space at $\I$, containing the free data of the theory, and on which symmetries act canonically. The above can be set up perturbatively, via an expansion in the small energy parameter. At leading order, where an elegant universal factorisation rule arises, the result has by now been established \cite{Strominger:2013lka,Strominger:2013jfa,He:2014cra}, with progress to certain sub-leading orders in both gravity and gauge theories \cite{Campiglia2015,Campiglia2020,Campiglia:2021oqz,Strominger2014a,Lysov:2014csa,Donnelly:2016auv,Speranza:2017gxd,Freidel:2020ayo,Freidel:2020svx,Freidel:2020xyx,Freidel:2021dxw,Ciambelli:2021nmv,Freidel:2021cjp,Freidel:2021dfs,Ciambelli:2021vnn,Geiller:2024bgf,Geiller:2022vto,Mao:2017tey,Bern:2014vva,Saha:2019tub}.

In a brief Letter \cite{Nagy:2024dme}, working in Yang-Mills (YM) theory, we proposed a mechanism for extending the canonical phase space to accommodate the construction of asymptotic symmetries and charges required for establishing the soft theorems at arbitrary sub-leading orders\footnote{See also \cite{Peraza:2023ivy} and \cite{Nagy:2022xxs}, where this has been achieved in the simplified set-ups of massless QED and self-dual theories in light-cone gauge, respectively.}. The proposal relies on a modification of the Stueckelberg procedure \cite{Stueckelberg:1938hvi}, originally introduced to restore broken gauge symmetry. This extends our phase space with a new Goldstone-type field, which transforms under the sub-leading asymptotic symmetries. We additionally put forward a set of recursion relation for the e.o.m., valid at any order in the radial expansion, which feed into recursion relations for the charges at all orders. 

In this longer article, we will expand on the proposal in \cite{Nagy:2024dme}, and present some new results that follow from our prescription. Firstly we will give further details on the extended phase space construction, and prove that it leads to appropriate charges, satisfying the expected charge algebra. We consider a very broad set-up here, allowing for the presence of field-dependent parameters with very general polyhomogeneous fall-offs in the radial expansion. We also made no assumptions at this stage about the fall-off in the coordinate dual to the energy ($u$ in Bondi coordinates), which means it can be applied in principle to loop-level soft theorems as well (see e.g. \cite{Sahoo:2018lxl, Pasterski:2022djr, Donnay:2022hkf, Agrawal:2023zea, Choi:2024ygx,Campiglia:2019wxe,AtulBhatkar:2019vcb,Sahoo:2020ryf}). 

We give explicit all-order expressions for the symmetry transformations of the extended phase space Goldstone-like Stueckelberg field, which enter the computation of the sub$^n$-charges. Interestingly, these turn out to be controlled by the Bernoulli numbers, which arise from the perturbative expansion of (an operator version of) the characteristic power series generating the so-called Todd polynomial \cite{ToddPol}.

The general construction is independent of gauge and coordinate choice, as well as the choice of hypersurface on which the initial data is defined. We choose to illustrate it here by working in Bondi coordinates $(u,r,z,\zb)$ around $\I^+$, and we look at two different gauge conditions, namely light-cone and radial gauge. In these two contexts, we explicitly derive recursion relations at all orders in the expansion in $r$, thus formally justifying the choice of free data in the radiative phase space. We additionally study the expansion in the $u$ coordinate, which, via a Fourier transform, is related to the energy expansion relevant for sub$^n$-leading soft theorems.  

Our procedure allows us to give an elegant construction for the charges at null infinity, valid at any sub$^n$-leading order. We render these charges finite on $\Ib^+$, by extending the renormalization procedure from \cite{Campiglia:2016hvg} to all orders. From here, one can then construct finite charges at the corner $\Ib^+_-$ (see e.g. \cite{Freidel:2019ohg,Peraza:2023ivy}). We find that the charges can be neatly organised in a radial expansion, which closely follows the expansion in the new Goldstone field\footnote{See e.g. \cite{Freidel:2023gue,Kampf:2023elx,He:2024skc} for some other works where Goldstone modes arise at null infinity.} introduced via the Stueckelberg trick. We establish the existence of a closed sub-algebra of charges at each order in the perturbative expansion, thus showing how a robust truncation of the procedure can be established at any desired order for practical calculations related to soft theorems. We also match with known results from the literature at level $n=0$ (\cite{Strominger:2013lka}) and $n=1$ (\cite{Campiglia:2021oqz}). Moreover, the charges themselves turn out to satisfy recursions that allow us to construct any order from previous orders (first noted in \cite{Campiglia:2021oqz} for leading and sub-leading), and also straightforwardly match the abelian hierarchy of sub$^n$-leading charges \cite{Campiglia:2018dyi}. The iterative approach to the construction of the charges has the advantage that we only use equations of motion and the phase space formalism, and we do not need to make any direct reference to the scattering language. We establish the correspondence with the usual expressions for the sub-leading charge coming from the Ward identities, showing how it can be indeed carried out at all orders. 

In the context of Ward identities the sub-leading charges generally contain universal and non-universal terms. Interestingly, we can also have \emph{quasi-universal} contributions to the sub-leading soft charge when looking at higher derivative interactions between the gauge vector in QED and a scalar \cite{Elvang:2016qvq, Laddha:2017vfh}. We extend the derivation of the quasi-universal contributions to the sub-leading charge in Yang-Mills for an interaction of the form $\phi \tr ( F^2 )$, using the dressing procedure in our framework. This example illustrates the applicability of our procedure to different theories.

One of the more intriguing discoveries of the celestial holography program has been the presence of certain infinite dimensional algebras in sectors of Yang-Mills and gravity (where it is called the $w_{1+\infty}$ algebra,  first defined by \cite{Bakas:1989xu} and connected to gravity in \cite{Strominger:2021mtt}, see also \cite{Pope:1989ew,Fairlie:1990wv,Pope:1991ig,Bu:2022iak,Bittleston:2023bzp,Bittleston:2024rqe,Taylor:2023ajd,Kmec:2024nmu,Monteiro:2022lwm,Himwich:2021dau,
Lipstein:2023pih,Lipstein:2023tapp,Adamo:2021lrv} for further work and deformations). The quantization of these algebras (known as $W_{1+\infty}$, \cite{Pope:1991zka}) is expected to be a quantum symmetry of the S-matrix, and it is interesting that one can trace its origin to classical symmetries and their charge algebra (see e.g. \cite{Strominger:2021mtt} for YM and \cite{Freidel:2021ytz, Geiller:2024bgf} for gravity). We find that a subset of our recursion relations encompasses the sectors displaying these symmetries.      

The article is organised as follows. In \autoref{sec_ext_phase_space} we give details of the general phase space calculation. In \autoref{eom_scri_plus}, working in Bondi coordinates around null infinity we give the e.o.m. at all orders in $r$, and the proofs of the recursion relations in both light-cone and radial gauge. The construction of the necessary charges and their recursion relations at all orders is given in \autoref{charges}. In \autoref{charge_algebra} the charge algebra valid at any sub-leading order is constructed and shown to close order by order. In \autoref{sec_charges_corners}, we show the link between the recursive formulas in \autoref{eom_scri_plus} and the charges constructed in \autoref{charges}, as well as showing that our results for the leading and sub-leading charges agree with previous proposals from the literature \cite{Strominger:2013lka,Campiglia:2021oqz} and the full abelian case \cite{Campiglia:2018dyi,Peraza:2023ivy}. In \autoref{Relations with infinite algebras} we identify, via the recursion relations, the subsectors of our theory that carry the infinite symmetry algebras along the lines of the discussion in \cite{Freidel:2023gue}. In \autoref{higher_oerder_der_int}, we discuss the quasi-universal contributions to the charges from higher order derivative interactions, using our prescription in a particular example for the case of sub-leading Yang-Mills (following the QED example given in \cite{Laddha:2017vfh}). In \autoref{Conclusions} we provide an outlook of the results, and propose future explorations. The appendices contain detailed computations, useful formulae and discussions that complement the main text. In \autoref{Extended phase space calculations}, we give details of calculations pertaining to the general extended phase space and we collect some useful formulae. In \autoref{app_YM_eq_r_exp}, we provide proofs for the derivation of the equations of motion and Bianchi identities at arbitrary order in the radial expansion, which were used in finding the recursive formulae of \autoref{eom_scri_plus}. In \autoref{YM e.o.m. in flat Bondi coordinates}, we turn to the so-called flat Bondi coordinates, which differ from the standard Bondi through the vanishing of the $g_{uu}$ component of the metric, as well as the celestial sphere being mapped into the complex plane $\mathbb{C}^2$ via the stereographic projection. We see that our procedure continues to hold, and find simplified recursion relations in these coordinates. In \autoref{sub_n_leading_charge_limit} we provide an explicit derivation of the charges via the limit in $t\rightarrow +\infty$, showing that the renormalization procedure can be applied. Finally, in \autoref{last_appendix} we show how our general large gauge parameters can be chosen to coincide with the particular generators of charges coming from the usual derivation of the soft theorems from the Ward identity. 

\section{Extended phase space in arbitrary gauge and coordinates}  \label{sec_ext_phase_space}
We begin by reviewing the general procedure introduced in \cite{Nagy:2024dme}. The starting point is YM theory with the standard equation of motion  
\be 
\label{eom_gen}
\mathcal{E}_\nu\equiv D^\mu \mathcal{F}_{\mu\nu}=0
\ee 
where
\be 
\mathcal{F}_{\mu\nu}=\partial_\mu \mathcal{A}_\nu-\partial_\nu \mathcal{A}_\mu - i [\mathcal{A}_\mu,\mathcal{A}_\nu]
\ee
The gauge fields transform in the usual way:
\be\label{gauge_gen}
\mathcal{A}'_\mu=e^{i\Lambda}\mathcal{A}_\mu e^{-i\Lambda}+ie^{i\Lambda}\partial_\mu e^{-i\Lambda}.
\ee
and we assume that $\mathcal{A}_\mu$ satisfies some gauge condition
\be \label{gf_gen}
\mathcal{G}(\A_\mu)=0
\ee
We will restrict to the residual gauge symmetries preserving the condition above, as is standard when studying asymptotic symmetries. For now we do not need to specify the coordinates. For explicit examples later on, we will work in some appropriately chosen coordinates (these could be e.g. Bondi, flat Bondi, or even light-cone coordinates). We find it convenient to adopt the following notation for the coordinates:
\be 
x^\mu=(\mathfrak{r},\vec{\mathbf{y}}), 
\ee 
where $\mathfrak{r}$ is the expansion parameter, and $\vec{\mathbf{y}}$ are the the coordinates of the hypersurface on which we define our free data\footnote{Most commonly, in Bondi coordinates, we have $\mathfrak{r}=r$ and $\vec{\mathbf{y}}=(u,z,\bar{z})$ describing $\mathcal{I}^+$. However, the procedure applies in other coordinates as well.}.

We consider a very general expansion for our gauge field, in terms of polyhomogeneous functions
\be \label{A_st_exp}
\mathcal{A}_\mu=\sum_{n,k}A_\mu^{(-n;k)}(\vec{\mathbf{y}})\frac{\text{log}^k \mathfrak{r}}{\mathfrak{r}^n} \ ,
\ee 
where $n$ and $k$ are chosen such that $\lim_{\mathfrak{r}\to\infty}\frac{\text{log}^k \mathfrak{r}}{\mathfrak{r}^n}=\mathcal{O}(1)$. In practice, this means that we only allow terms with $n\geq 0$, and $k$ is unconstrained, except for $n=0$, where we only allow a contribution from $k=0$. We remark that upon summing all orders in the above, one could of course use the Taylor series for the log, and ultimately have an expression of the form $\sum_{n=0}^{\infty}B_\mu^{(-n)}(\vec{\mathbf{y}})\frac{1}{\mathfrak{r}^n}$ . However, in practice we will stop at a finite order in $n$ and $k$, in which case the log terms become necessary (see e.g. \cite{Campiglia:2021oqz} in Lorenz gauge). Certain gauge choices have the benefit of avoiding the log terms, as we will see in \autoref{sec_extended_phase_space_LC}, where we work in axial gauge.

Then the phase space is described by
\be
\label{gen_ph_sp} 
\Gamma^0=\left\{\mathfrak{A}^{0} \text{\ satisfying \eqref{eom_gen} and \eqref{gf_gen}}
\right\}
\ee
where $\mathfrak{A}^{0}$ comprises the necessary free data which determines the radiative fields to all orders in $\mathfrak{r}$, via the equations of motion and Bianchi identities. 

In \autoref{eom_scri_plus}, working in Bondi coordinates, we give expressions for the equations of motion and Bianchi identities at all orders in $r$. Then, for both radial and light-cone gauge, we give recursion relations valid at all orders in the perturbation parameter, proving that the fields in the bulk can be entirely reconstructed from the boundary data at null infinity. 

The large gauge transformations responsible for the leading order soft theorems are characterised by a parameter $\lambda \equiv \Lambda^{(0)} (\vec{\mathbf{y}})$, subject to a constraint coming from the gauge condition \eqref{gf_gen}:
\be\label{gen_gf_L_0} 
\mathcal{G}(D_\mu \Lambda^{(0)}(\vec{\mathbf{y}}))=0 \ , 
\ee
thus preserving $\Gamma^0$.

We now wish to construct an extended phase space, able to accommodate additional large gauge transformations with a fall-off of the form\footnote{Note that small gauge transformations, i.e. those which vanish in the limit $\mathfrak{r}\to\infty$, have been set to $0$ for simplicity, since they do not contribute to the soft theorems.}
\be \label{Lambda_plus_exp}
\Lambda_+(x)=\sum_{n,k}\mathfrak{r}^n \text{log}^k\mathfrak{r}\ \Lambda^{(n;k)}(\vec{\mathbf{y}}) \ ,
\ee
where $n$ and $k$ are chosen such that $\mathfrak{r}^n \text{log}^k\mathfrak{r}$ diverges in the limit $\mathfrak{r}\to\infty$. In \cite{Campiglia:2021oqz} this was achieved at first order in the radial expansion in YM in Lorenz gauge, working in Bondi coordinates. In \cite{Nagy:2022xxs}, a construction was given to all orders, for the self-dual YM theory, working in light-cone gauge. Here we will show that the construction extends to full YM theory, for $n=\infty$, with an arbitrary gauge and coordinate choice. 

First we note that in some cases one need to generalise \eqref{Lambda_plus_exp} to a field-dependent parameter\footnote{This might arise because we are imposing further constraints on the transformation. For example, requiring that they preserve Lorenz gauge can lead to field-dependent transformation parameters, see \cite{Campiglia:2021oqz}.}, and we consider here the most general situation:
\be \label{gauge_par_field_dep}
\begin{aligned}
\breve{\Lambda}_+= \breve{\Lambda}_+(\mathcal{A}_\mu(x),\Lambda_+(x))
=\sum_{n,k}
\mathfrak{r}^n \text{log}^k\mathfrak{r}\ f^{(n;k)}(A_\mu^{(0;0)}(\vec{\mathbf{y}}),...,\Lambda^{(0;1)}(\vec{\mathbf{y}}),...) \ ,
\end{aligned}
\ee 
where $n$ and $k$ are such that $\mathfrak{r}^n \text{log}^k\mathfrak{r}$ diverges as $\mathfrak{r}\to\infty$. Here $\A_\mu(x)$ is expanded as in \eqref{A_st_exp}, and the $\Lambda_+(x)$ expansion is given in \eqref{Lambda_plus_exp}. Later in the article, we will specialise to cases where $\breve{\Lambda}_+$ is independent of the fields, which will simplify calculations considerably. However, for now we continue with the most general solution as described in \eqref{gauge_par_field_dep}.

The next step is to define an extended phase space capable of accommodating a gauge transformation with a parameter given by \eqref{gauge_par_field_dep}. These transformations obviously violate the fall-off \eqref{A_st_exp}. We can interpret these violations as a form of symmetry breaking, which leads us to using a modified Stueckelberg trick in the construction of our phase space. The Stueckelberg procedure\footnote{Initially introduced in the context of electromagnetism \cite{Stueckelberg:1938hvi}, also extended to (super)gravity \cite{Kuchar:1991xd,Nagy:2019ywi,Bansal:2020krz}.} is usually employed when a gauge symmetry is broken e.g. due to the presence of mass. It allows us to reinstate the symmetry, by promoting the local transformation parameter to a new field, which transforms non-linearly. We will adapt it to our situation by defining the object: 
\be 
\breve{\Psi}= \breve{\Psi}(\mathcal{A}_\mu(x),\Psi(x))=
\sum_{n,k}
\mathfrak{r}^n \text{log}^k\mathfrak{r}\ f^{(n;k)}(A^{(0;0)}_\mu(\vec{\mathbf{y}}),...,\Psi^{(0;1)}(\vec{\mathbf{y}}),...)
\ee
with
\be \label{gen_psi_no_fields}
\Psi(x)=\sum_{n,k}\mathfrak{r}^n \text{log}^k\mathfrak{r}\ \Psi^{(n;k)}(\vec{\mathbf{y}}).
\ee 
with $n,k$ as in \eqref{gauge_par_field_dep} and where $\breve{\Psi}$ has the same functional dependence on $(A_\mu,\Psi)$ as $\breve{\Lambda}_+$ does on $(A_\mu,\Lambda_+)$. Equivalently, $\breve\Psi$ is obtained from $\breve{\Lambda}_+$ by applying the replacement rule
\be \label{stueck_trick}
\Lambda_+(x)\quad \to \quad \Psi(x),
\ee 
We are now ready to define our extended phase space as: 
\be
\label{extended_phase_space_YM}
\Gamma_{\infty}^{\text{ext}} := \Gamma^0\times\{\Psi(x)   \text{ as defined in \eqref{gen_psi_no_fields}} \}
\ee
where $\Gamma^0$ is the original phase space defined in \eqref{gen_ph_sp}. We will see that the object $\breve{\Psi}$ plays a crucial role in our construction. In order to proceed with the construction of charges on our extended phase space, we must first determine the transformation properties of our new field $\Psi$. We generalise the procedure employed in \cite{Campiglia:2021oqz,Nagy:2022xxs} and start by defining the extended gauge field 
\be\label{stuec_gen}
\tilde{\mathcal{A}}_\mu = e^{i\breve{\Psi}}\mathcal{A}_\mu e^{-i\breve{\Psi}}+ie^{i\breve{\Psi}}\partial_\mu e^{-i\breve{\Psi}}
\ee
This mimics a gauge-transformed gauge field, with the parameter replaced by our new field, which is a hallmark of the Stueckelberg procedure. We require that it satisfies the following consistency condition:  
\be
\label{delta_Lambda_HatA}
\delta_{\breve\Lambda}\tilde{\mathcal{A}}_\mu = \tilde{D}_\mu\breve\Lambda 
\ee
where 
\be 
\breve\Lambda=\Lambda^{(0)}+\breve\Lambda_+
\ee 
and $\tilde{D}_\mu$ is the gauge-covariant derivative w.r.t. $\tilde{A}_\mu$. In the above, $\Lambda^{(0)}$ is the standard (leading order) large gauge parameter. Equation \eqref{delta_Lambda_HatA} is tantamount to saying that the extended field transforms as a gauge field in the extended space. Note that when writing the actual transformation, we are working only to linear order in the parameter, as this will be sufficient for the construction of the charges. This is in contrast with the Stueckelberg step \eqref{stuec_gen}, where all orders are necessary. The consistency condition essentially states that the additional fields on the boundary do not correspond to new fields in the bulk, but rather are coming from pieces of the bulk field which were discarded when assuming the fall-off in \eqref{A_st_exp}. Finally, as is standard in the Stueckelberg procedure, the transformation of the original field $\A_\mu$ is unchanged.   

Following a calculation similar to that in \cite{Nagy:2022xxs} (see \autoref{Extended phase space calculations} for details), we rewrite the LHS of \eqref{delta_Lambda_HatA} as
\be \label{LHS_consistency}
\delta_{\breve\Lambda}\tilde{\mathcal{A}}_\mu =
e^{i\breve{\Psi}}\left\{\delta_{\breve\Lambda}\mathcal{A}_\mu+D_\mu\left[e^{-i\breve{\Psi}}\mathcal{O}_{-i\breve{\Psi}}(\delta_{\breve\Lambda}\breve{\Psi})e^{i\breve{\Psi}}\right]\right\}e^{-i\breve{\Psi}}
\ee
where we have used that
\be \label{delta_exp_O_op}
\delta e^X = e^X\mathcal{O}_X(\delta X)
\ee
with 
\be \label{O_def_gen}
\mathcal{O}_X  := \frac{1-e^{-ad_X}}{ad_X}  \ .
\ee 
The above is defined via its series expansion
\be \label{O_series_exp}
\mathcal{O}_X= \sum_{k=0}^\infty\frac{(-1)^k}{(k+1)!}(ad_X)^k \ ,
\ee 
such that
\begin{equation}
    \begin{aligned}
        \mathcal{O}_{-i\breve{\Psi}}(\delta_{\breve\Lambda}\breve{\Psi}) &= \sum_{k=0}^{\infty}\frac{(-1)^k}{(k+1)!}(ad_{-i\breve{\Psi}})^k(\delta_{\breve\Lambda}\breve{\Psi}) \\
        &= \delta_{\breve\Lambda}\breve{\Psi}+\frac{i}{2}[\breve{\Psi},\delta_{\breve\Lambda}\breve{\Psi}]-\frac{1}{6}[\breve{\Psi},[\breve{\Psi},\delta_{\breve\Lambda}\breve{\Psi}]]-\frac{i}{24}[\breve{\Psi},[\breve{\Psi},[\breve{\Psi},\delta_{\breve\Lambda}\breve{\Psi}]]]+\dots\;.
    \end{aligned}
\end{equation}
The RHS of \eqref{delta_Lambda_HatA} becomes
\be \label{RHS_consistency}
\tilde{D}_\mu{\breve\Lambda}=e^{i\breve{\Psi}}D_\mu(e^{-i\breve\Psi}{\breve\Lambda} e^{i\breve\Psi})e^{-i\breve\Psi}
\ee 
Then, plugging \eqref{LHS_consistency} and \eqref{RHS_consistency} into \eqref{delta_Lambda_HatA} we obtain:
\be
\label{B_first_result}
\delta_{\breve\Lambda}\mathcal{A}_\mu+D_\mu\left[e^{-i\breve\Psi}\mathcal{O}_{-i\breve\Psi}(\delta_{\breve\Lambda}\breve\Psi)e^{i\breve\Psi}\right] = D_\mu\left(e^{-i\breve\Psi}{\breve\Lambda} e^{i\breve\Psi}\right)
\ee
Let us now focus on one of the components of the equation above, call this $w$. In any gauge, we can choose a $w$ such at $A_w^{(0;0)}\neq 0$. Then we have 
\be
\label{deltaLambda_AZ0}
\delta_{\breve\Lambda} A_w^{(-n;k)} := \left(D_w{\breve\Lambda}\right)^{(-n;k)} = \delta_{n,0}\delta_{k,0}\partial_w\Lambda^{(0)}-i[A_w^{(-n;k)},\Lambda^{(0)}] \ ,
\ee
where we have imposed that only the leading order gauge parameter $\Lambda^{(0)}$ contributes (the sub-leading gauge parameters in \eqref{Lambda_plus_exp} will only appear in the transformation of $\Psi$, and we recall that we are ignoring the small gauge transformations). Making use of the above, we can rearrange \eqref{B_first_result} into 
\be\label{def_delta_breve_Psi} 
\mathcal{O}_{-i\breve\Psi}(\delta_{\breve\Lambda}\breve\Psi) = {\breve\Lambda}-e^{i\breve\Psi}\Lambda^{(0)}e^{-i\breve\Psi} 
\ee 
This can be solved to any order in $\mathfrak{r}$ by inverting $\mathcal{O}_{-i\Psi}$:
\be\label{psi_inv_O} 
\delta_{\breve\Lambda}\breve\Psi = \mathcal{O}_{-i\breve\Psi}^{-1}({\breve\Lambda}-e^{i\breve\Psi}\Lambda^{(0)}e^{-i\breve\Psi})
\ee 
We will make use of the formal expansion
\be
\left(\frac{1-e^{-X}}{X}\right)^{-1} = \sum_{m=0}^{\infty}\frac{B^+_m X^m}{m!}
\ee
where $B^+_m$ are the Bernoulli numbers
\be \label{Bernoulli_def}
B^+_0=1,\quad B^+_1=\frac{1}{2},\quad B^+_2=\frac{1}{6},\quad ...
\ee
Using this in \eqref{psi_inv_O} we find\footnote{The proof is completely analogous to the one in Appendix A3 of \cite{Nagy:2022xxs}.}:
\be \label{trans_stueck_perturbative_field}
\delta_{\breve\Lambda}^{[m]}\breve\Psi = \frac{B_m^+}{m!}(ad_{-i\breve\Psi})^m\left[{\breve\Lambda}+(-1+2\delta_{m,1})\Lambda^{(0)}\right]
\ee
where $\delta_{\breve\Lambda}^{[m]}\breve\Psi$ denotes the variation of $\breve\Psi$ at order $m$ in $\breve\Psi$. The next step is to extract the transformation of $\Psi$ from the above. This will of course depend on the exact form of $\breve\Psi$ as a function of $\Psi$ and $\mathcal{A}_\mu$, so it will be done case by case, depending on the gauge choice \eqref{gf_gen}. Nevertheless, we can proceed in the general case, as it is the $\breve{\Psi}$ that will appear in the construction of the charges, and so it is its transformation that will be relevant for computing the charge algebra.  Before continuing with the charge algebra, we pause to make some remarks about the transformation of the Stueckelberg fields. Firstly, we note that, at 0th order in the field, it transforms via a shift
\be 
\delta_{\breve\Lambda}^{[0]}\breve\Psi =\breve\Lambda-\Lambda^{(0)}=\breve\Lambda_+
\ee 
so they have an interpretation as Goldstone modes for the symmetry breaking in the bulk. 

Secondly, recall that 
\be
B_{2k+1}=0,\quad \text{for}\quad k>0
\ee
i.e. only even powers of the fields will contribute to the transformation \eqref{trans_stueck_perturbative_field} beyond $m=1$. We leave the interpretation of this observation for future work. 

Finally, there is a close relation between the Bernoulli numbers and the Riemann $\zeta$-function
\be 
\zeta(m) = \frac{(-1)^{\frac{m}{2}+1}}{2} \frac{B_m^+ (2 \pi)^m}{m!},
\ee 
for $m$ even and larger than $1$. Then we can alternatively write \eqref{trans_stueck_perturbative_field}, for $m>1$, as
\be 
\delta_{\breve\Lambda}^{[m]}\breve\Psi = \frac{2 (-1)^{\tfrac{m}{2}+1}\zeta(m)}{(2 \pi)^m}(ad_{-i\breve\Psi})^m\left[{\breve\Lambda}+(-1+2\delta_{m,1})\Lambda^{(0)}\right].
\ee

\subsection{Charge algebra}
\label{subsec:Charge Algebra}
The charge, computed via the covariant space space formalism\footnote{More details on the computation of charges are given in \autoref{charges}.}, can be expressed as a two-dimensional integral, over a hypersurface $\mathcal{B}^2$ 
\be 
\tilde{Q}_{\breve\Lambda} = \int_{\mathcal{B}^2} \tr ({\breve\Lambda} \tilde{\F}^{\mu \nu}) dS_{\mu \nu},
\ee
where $\tilde{\F}^{\mu \nu}$ is constructed from $\tilde\A_\mu$, and takes the form
\be \label{def_c_F}
\tilde{\F}_{\mu \nu}
\equiv \partial_\mu \tilde{\mathcal{A}}_\nu-\partial_\nu \tilde{\mathcal{A}}_\mu - i [\tilde{\mathcal{A}}_\mu,\tilde{\mathcal{A}}_\nu]
=e^{i\breve\Psi}\F_{\mu\nu}e^{-i\breve\Psi}
\ee
Let $x_B$ and $y_B$ the the coordinates on $\mathcal{B}_2$, then we have
\be 
d S_{\mu\nu}=dx_B dy_B \sqrt{g_B}m_\mu n_\nu
\ee 
where $g_B$ is the induced metric on $\mathcal{B}_2$ and $n_\mu$,$m_\mu$ are unit vectors orthogonal to $\mathcal{B}_2$. We can now define the charge density
\be 
\tilde q_{\breve\Lambda}=\tr\left(\sqrt{g_B} {\breve\Lambda} \tilde{\F}_{mn}\right)^{(0)}
\ee 
where we have denoted
\be 
\tilde{\F}_{mn}=\tilde{\F}^{\mu \nu}m_\mu n_\nu
\ee 
In terms of the original field strength, the charge density is
\be \label{gen_comm_qs} 
\begin{aligned} 
\tilde q_{\breve\Lambda}=&\tr\left(\sqrt{g_B} {\breve\Lambda} e^{i\breve\Psi}\F_{mn}e^{-i\breve\Psi}\right)^{(0)}\\
=&\tr\left(\sqrt{g_B}e^{-i\breve\Psi}{\breve\Lambda} e^{i\breve\Psi}\F_{mn}\right)^{(0)}
\end{aligned} 
\ee 
We now want to check the charge algebra 
\be 
\{\tilde  q_{{\breve\Lambda}_1},\tilde q_{{\breve\Lambda}_2}\}_*=\frac{1}{2}\left[\delta_{{\breve\Lambda}_1}\tilde q_{{\breve\Lambda}_2}+\tilde q_{\delta_{{\breve\Lambda}_1}{\breve\Lambda}_2}-(1\leftrightarrow2)\right]
\ee 
Note that we are using a modified bracket \cite{Barnich:2010eb,Barnich:2010xq,Barnich:2013sxa,Fiorucci:2021pha,Campiglia:2021oqz}, as $\breve\Lambda$ can be field dependent in general. Explicitly we then have:
\be \label{commGenStart}
\begin{aligned}
\{ \tilde q_{{\breve\Lambda}_1},\tilde q_{{\breve\Lambda}_2}\}_*
=&\frac{1}{2}\Bigg[\tr\left(\sqrt{g_B}\left(\delta_{{\breve\Lambda}_1}e^{-i\breve\Psi}\right){\breve\Lambda}_2 e^{i\breve\Psi}\F_{mn}\right)^{(0)}
+2\ \tr\left(\sqrt{g_B}e^{-i\breve\Psi}\left(\delta_{{\breve\Lambda}_1}{\breve\Lambda}_2\right) e^{i\breve\Psi}\F_{mn}\right)^{(0)}\\
&+\tr\left(\sqrt{g_B}e^{-i\breve\Psi}{\breve\Lambda}_2 \left(\delta_{{\breve\Lambda}_1}e^{i\breve\Psi}\right)\F_{mn}\right)^{(0)}
+\tr\left(\sqrt{g_B}e^{-i\breve\Psi}{\breve\Lambda}_2 e^{i\breve\Psi}\delta_{{\breve\Lambda}_1}\F_{mn}\right)^{(0)} \Bigg] \\
& -
(1\leftrightarrow2)
\end{aligned}
\ee 
Using \eqref{delta_exp_O_op} and \eqref{O_def_gen}, we have
\be \label{commGen1}
\begin{aligned}
\delta_{{\breve\Lambda}_1}e^{-i\breve\Psi}=&-ie^{-i\breve\Psi}\mathcal{O}_{-i\breve\Psi}\left(\delta_{{\breve\Lambda}_1}\breve\Psi \right)\\
=&-ie^{-i\breve\Psi}\left({\breve\Lambda_1}-e^{i\breve\Psi}\Lambda_1^{(0)}e^{-i\breve\Psi}  \right)\\
=&-ie^{-i\breve\Psi}{\breve\Lambda_1} +i\Lambda_1^{(0)}e^{-i\breve\Psi}
\end{aligned}
\ee 
where to get to the second line we used \eqref{def_delta_breve_Psi}. Similarly, using \eqref{delta_e_neg}, we write 
\be \label{commGen2}
\begin{aligned}
\delta_{{\breve\Lambda}_1}e^{i\breve\Psi}
=&i\mathcal{O}_{-i\breve\Psi}\left(\delta_{{\breve\Lambda}_1}\breve\Psi \right)e^{i\breve\Psi}\\
=&i\left({\breve\Lambda_1}-e^{i\breve\Psi}\Lambda_1^{(0)}e^{-i\breve\Psi} \right)e^{i\breve\Psi}\\
=&i\breve\Lambda_1e^{i\breve\Psi}-i e^{i\breve\Psi}\Lambda_1^{(0)}
\end{aligned}
\ee
To compute the final term in \eqref{gen_comm_qs}  we use \eqref{deltaLambda_AZ0} to write
\be \label{commGen3}
\delta_{\breve\Lambda_1}\F_{mn}=-i[\F_{mn},\Lambda_1^{(0)}]
\ee 
remembering that we have set the small gauge transformations to vanish. Finally, plugging \eqref{commGen1}, \eqref{commGen2}, and \eqref{commGen3} into \eqref{commGenStart}, we get
\be 
\{\tilde  q_{{\breve\Lambda}_1},\tilde q_{{\breve\Lambda}_2}\}_*=\tilde q_{[{\breve\Lambda}_1,{\breve\Lambda}_2]_*}
\ee
as required, where the deformed bracket is as expected for field dependent parameters \cite{Barnich:2010eb,Barnich:2010xq,Barnich:2013sxa,Fiorucci:2021pha,Campiglia:2021oqz}: 
\be \label{mod_bracket}
[{\breve\Lambda}_1,{\breve\Lambda}_2]_*=-i[{\breve\Lambda}_1,{\breve\Lambda}_2]+\delta_{{\breve\Lambda}_1}{\breve\Lambda}_2-\delta_{{\breve\Lambda}_2}{\breve\Lambda}_1
\ee


\subsection{Case study: Bondi coordinates in axial gauge} \label{sec_extended_phase_space_LC}
Let us make the previous section more concrete, by looking at a specific example of coordinates, namely the Bondi coordinates $(u,r,z,\zb)$ around $\mathcal{I}^+$, with line element 
\begin{equation}\label{bondi_coord_def1}
   ds^2=-du^2-2dudr+2r^2\gamma dzd\bar{z}
\end{equation}
where
\be \label{bondi_coord_def2}
\gamma=\frac{2}{(1+z\bar{z})^2} \ .
\ee 
Let us choose an axial gauge, i.e.
\be \label{axial_gauge}
\mathcal{G}(\A_\mu)=a^\mu \A_\mu\equiv 0 \ , 
\ee
where $a^\mu$ is some constant vector. Different choices of $a^\mu$ give different gauges of interest for various applications. Later in the article we will work in the light-cone, and radial gauges respectively. For now we proceed with a general $a^\mu$. In this gauge, the expansion of the gauge fields \eqref{A_st_exp} simplifies to
\be 
\mathcal{A}_\mu=\sum_{n=0}^{\infty}\frac{A_\mu^{(-n)}(u,z,\zb)}{r^n} \ ,
\ee 
i.e. the log terms are no longer present.  The radiative phase space will be as in \eqref{gen_ph_sp}. The leading order large gauge transformations are constrained by the gauge choice, and explicitly we have:
\be 
a^\mu D_\mu \Lambda^{(0)}=a^\mu\partial_\mu \Lambda^{(0)}\equiv 0 \ ,
\ee 
where in the first equality we made use of \eqref{axial_gauge}. We see that the only constraint on the gauge parameter is that it does not depend on the linear combination of coordinates $a_\mu x^\mu$.  We are now looking to construct an extended phase space, able to accommodate a gauge parameter
\be \label{simple_lambda_plus}
\Lambda_+(x)=\sum_{n=1}^{\infty}r^n\Lambda^{(n)}(u,z,\zb) \ ,
\ee 
where we highlight the absence of the logarithmic terms, reflecting the absence of these terms in the expansion of $\A_\mu$ above. At this stage, one could choose to impose the gauge fixing condition on the new terms, i.e.
\be \label{extra_gauge_cond}
a^\mu\partial_\mu \Lambda_+(x)=0
\ee 
though we emphasize that this is not required by our procedure. In either case, the upshot is that we do not need a field-dependent parameter in this case. Thus \eqref{gauge_par_field_dep} simplifies significantly, and we have
\be 
\breve\Lambda_+(x)=\Lambda_+(x)
\ee 
We are now ready to construct our Stueckelberg field, which is simply
\be 
\breve\Psi(x)=\Psi(x)=\sum_{k=1}^{\infty}r^k\Psi^{(k)}(u,z,\zb)
\ee 
The extended gauge field will be as in \eqref{stuec_gen}. Upon applying the consistency condition \eqref{delta_Lambda_HatA}, and following the steps in the previous section, we now explicitly obtain the transformation rule for the Stueckelberg field
\be 
\delta_{\Lambda}\Psi = \mathcal{O}_{-i\Psi}^{-1}({\Lambda}-e^{i\Psi}\Lambda^{(0)}e^{-i\Psi})
\ee
with $\mathcal{O}$ as in \eqref{O_def_gen}. Working at order $m$ in $\Psi$, this can be written in terms of the Bernoulli numbers \eqref{Bernoulli_def} as 
\be
\delta_{\Lambda}^{[m]}\Psi = \frac{B_m^+}{m!}(ad_{-i\Psi})^m\left[{\Lambda}+(-1+2\delta_{m,1})\Lambda^{(0)}\right]
\ee
Let us write the first few orders explicitly. At $[m]=0$ we have
\be \label{zeroth_order_variation}
\delta_\Lambda^{[0]}\Psi=\Lambda-\Lambda^{(0)}=\Lambda_+
\ee
as expected for a Stuckelberg-type field. At linear order in $\Psi$ we have
 
\be
\delta_\Lambda^{[1]}\Psi = -\frac{i}{2}[\Psi,\Lambda+\Lambda^{(0)}] \ ,
\ee 
while at second order the transformation becomes
\be
\delta_\Lambda^{[2]}\Psi = -\frac{1}{12}[\Psi,[\Psi,\Lambda-\Lambda^{(0)}]] \ .
\ee
As we mentioned in the previous section, all odd orders beyond $m=1$ will vanish
\be
\delta_\Lambda^{[2k+1]}\Psi=0,\qquad k\geq 1 
\ee
The results above are exact in the coordinates, so in general we can expand them to any order in $r$. We remark that we have 
\be
\left(\delta_\Lambda^{[m]}\Psi\right)^{(n)}=0\qquad\text{for}\  n<m \ ,
\ee
due to the fact that $\Psi$ starts at linear order in $r$.

For future use, let us also introduce the notation
\be 
\overset{j}{\delta}:= \sum_{k=0}^j \delta^{[k]}
\ee 
which gives the variation up to order $j$ in the Stueckelberg field.

In Bondi coordinates, the charge will be defined on the celestial sphere at $\Ib^+_-$, via 
\be 
\tilde{Q}_\Lambda = \int_{S^2} \tr (\Lambda \tilde{\F}_{ru}) r^2 \gamma dz d\zb \ ,
\ee 
with
\be 
\tilde{\F}_{ru}=e^{i\Psi}\F_{ru}e^{-i\Psi}
\ee
We can now define the charge aspect as
\be 
\begin{aligned}
\tilde{q}_\Lambda&=\tr (\Lambda r^2\tilde{\F}_{ru})^{(0)}\\
&=\tr (e^{-i\Psi}\Lambda e^{i\Psi}\F_{ru})^{(-2)}\\
&=\tr \left[\left(e^{ad_{-i\Psi}}\Lambda\right) \F_{ru}\right]^{(-2)}
\end{aligned}
\ee
The charge algebra now becomes
\be 
\begin{aligned}
\{ \tilde q_{{\Lambda}_1},\tilde q_{{\Lambda}_2}\}_*=&\frac{1}{2}\left[\delta_{{\Lambda}_1}\tilde q_{{\Lambda}_2}-(1\leftrightarrow2)\right]\\
=&-i\tilde{q}_{[{\Lambda}_1,{\Lambda}_2]}
\end{aligned}
\ee 
as needed. This is of course simpler than the expression in \autoref{subsec:Charge Algebra}, because we no longer have field-dependent parameters. See \autoref{charges} for a more detailed discussion of the charge construction and algebra at various orders.

The gauge choice will play an important role in our proof of the recursion relations for the equations of motion and charges at all orders, in the following sections. We will focus on two particular gauge choices. One is the light-cone gauge
\be 
a^\mu=(1,0,0,0)\quad \Rightarrow\quad \A_u=0
\ee 
together with a constraint to ensure radiative solutions
\be 
A_r^{(0)}=A_r^{(-1)}=0 \ .
\ee 
At this stage, one may choose to impose \eqref{extra_gauge_cond} , which results in\footnote{See \cite{Nagy:2022xxs} for an example of this in self-dual YM.} 
\be 
\Lambda_+ = \Lambda_+(r,z,\zb),\quad \Rightarrow \Psi=\Psi(r,z,\zb)
\ee 
though as we mentioned above this is not needed for our procedure, and we can simply work with a general
\be 
\Lambda_+ =\Lambda_+(u,r,z,\zb),\quad \Rightarrow \Psi=\Psi(u,r,z,\zb)
\ee 
The other gauge choice is the radial gauge, defined by 
\be 
\A_r=0
\ee 
where we additionally impose the usual constraint at $\I^+$
\be 
A_u^{(0)}=0 \ .
\ee 

\section{Equations of motion and recursion relations at all orders} \label{eom_scri_plus}

In this section we study in detail the properties of the radiative phase space, first providing the YM e.o.m. to all orders in the radial expansion in Bondi coordinates. Working in two different gauge choices, namely light-cone gauge and radial gauge, we give recursion relations that allow us to construct the fields to all orders. The equations in this section, together with all the expressions in \autoref{app_YM_eq_r_exp} and \autoref{YM e.o.m. in flat Bondi coordinates}, can be checked explicitly to arbitrarily high order using the Mathematica package developed by one of the authors, which will be presented in \cite{Giorgio:2025}.

We work with the Bondi coordinates \((u,r,z,\zb)\) defined in \eqref{bondi_coord_def1} and \eqref{bondi_coord_def2}. We want to study Yang-Mills theory at \(\mathcal{I}^+\), which in Bondi coordinates is obtained by taking the limit \(r\rightarrow\infty\) and it is parametrized by \((u,z,\bar{z})\). We will use the radial coordinate as the variable for the power expansions\footnote{We use Mathematica to compute most of the calculations regarding the expansions and manipulations with coefficients.}.

Let \(M := \R^{1,3}\) be the smooth manifold considered with coordinates defined \eqref{bondi_coord_def1} and \eqref{bondi_coord_def2}. The components of a generic \((k,l)\)-tensor \(\mathcal{T}\) on \(M\) can be expanded in a $r-$power series near \(\mathcal{I}^+\) as:
\begin{equation}
    \label{eq:generic tensor expansion}
    {\mathcal{T}^{\mu_1\dots\mu_k}}_{\nu_1\dots\nu_l}(u,r,z,\Bar{z})=\sum_{n\in\mathbb{Z}}\frac{{{T^{(-n)}}^{\mu_1\dots\mu_k}}_{\nu_1\dots\nu_l}(u,z,\Bar{z})}{r^n},
\end{equation}

  where each component ${\mathcal{T}^{\mu_1\dots\mu_k}}_{\nu_1\dots\nu_l}(u,r,z,\Bar{z})$ is a \(C^\infty\) function on \(M\) that can be written as a power series expansion in \(1/r\). We are interested in the subset $C^\infty_r(M)$ of functions with finite limit when $r \rightarrow +\infty$, defined as follows\footnote{Note that in the sum in \eqref{def:C^infty_r} we let \(n\) assume negative values and we kill all the coefficients \(f^{(-n)}(u,z,\bar{z})\) with \(n<0\). The reason why we do not define \(C^\infty_r(M)\) by summing only on \(n\in\mathbb{N}\) is that we want to construct a projector \(\mathcal{P}_{-n}\) that acts on \(C^\infty_r(M)\) with both positive and negative values of \(n\) (see \eqref{def:projector P}), so that equations like \eqref{prop:projector_1}-\eqref{prop:projector_3} are well defined. Thus, we want the set \eqref{def:C^infty_r} to take into account all \(n\in\mathbb{Z}\).}:
\begin{equation}
    \label{def:C^infty_r}
    \begin{aligned}
        C^\infty_r(M):=\{f\in C^\infty(M) \ | \ f(u,r,z,\bar{z})=\sum_{n\in\mathbb{Z}}r^{-n}f^{(-n)}(u,z,\Bar{z}), \ f^{(-n)}(u,z,\bar{z})=0 \ \forall n<0 \}
    \end{aligned}
\end{equation}

Therefore, given $f \in  C^\infty_r(M)$, the limit when $r\rightarrow +\infty$ is simply given by the smooth function $f^{(0)}$, which can be thought of as taking values in $\Ib^+$. We also assume that $f^{(n)} \in C^{\infty}(\Ib^+)$ for each $n \in \Z$.


\subsection{YM e.o.m. in Bondi coordinates}

The YM equations of motion are given by the vanishing of the following tensor, 
\begin{equation}
    \label{def:EOM}
    \mathcal{E}_\mu:=D^\nu\mathcal{F}_{\nu\mu}=\nabla^\nu\mathcal{F}_{\nu\mu}-i[\mathcal{A}^\nu,\mathcal{F}_{\nu\mu}].
\end{equation}
The vanishing of $\mathcal{E}_\mu$ implies in particular that $\mathcal{E}_\mu \in C^{\infty}_r (M)$ in \eqref{def:C^infty_r}, when expressed in Bondi coordinates $(u,r,z, \zb)$. Then, one can compute the components of $\mathcal{E}_\mu$ at each order \(1/r^n\), with \(n\in\mathbb{N}\). Let
\be
\mathcal{E}_\mu = \sum_{n\in\mathbb{N}}r^{-n}E_\mu^{(-n)}(u,z,\Bar{z})
\ee
denote the $r$ power expansion. A complete derivation of the following equations can be found in \autoref{app_YM_eq_r_exp}. Here, we present the result for each $E_\mu^{(-n)}$,
\begin{equation}
    \label{eq:YM EOM at every order in general Bondi}
    \begin{aligned}
    E_\mu^{(-n)}&=-\partial_{u}F^{(-n)}_{r\mu}+\tilde{\delta}_\mu^\rho(n-1)(F^{(1-n)}_{u\rho}-F^{(1-n)}_{r\rho})+2\partial_{(z}(\gamma^{-1}F^{(2-n)}_{\Bar{z})\mu})-2\bar{\delta}_\mu^\rho(\partial_{(z}\gamma^{-1})F^{(2-n)}_{\Bar{z})\rho} \\
    &\quad +i\sum_{k=0}^n\left(2[A^{(k-n)}_{(u},F^{(-k)}_{r)\mu}]-[A^{(k-n)}_r,F^{(-k)}_{r\mu}]-2\gamma^{-1}[A^{(k-n)}_{(z},F^{(2-k)}_{\Bar{z})\mu}]\right),
    \end{aligned}
\end{equation}
where,
\begin{align}
    \bar{\delta}_\mu^\nu&:=\mathrm{diag}(1,1,0,0), \\
    \label{def:delta_tilde}
    \tilde{\delta}_\mu^\nu(n)&:=n\delta_\mu^\nu-2\bar{\delta}_\mu^\nu.
\end{align}
More explicitly, each of the components of $E_\mu^{(-n)}$ is given by
\begin{align}
    E_u^{(-n)}&=\partial_{u}F^{(-n)}_{ur}+(n-3)F^{(1-n)}_{ur}+2\gamma^{-1}\partial_{(z}F^{(2-n)}_{\Bar{z})u} \label{eom_full_u}\\
    &\quad +i\sum_{k=0}^n\left([A^{(k-n)}_r-A^{(k-n)}_{u},F^{(-k)}_{ur}]-2\gamma^{-1}[A^{(k-n)}_{(z},F^{(2-k)}_{\Bar{z})u}]\right), \nonumber \\
    E_r^{(-n)}&=(n-3)F^{(1-n)}_{ur}+2\gamma^{-1}\partial_{(z}F^{(2-n)}_{\Bar{z})r} \label{eom_full_r}\\
    &\quad +i\sum_{k=0}^n\left([A^{(k-n)}_{r},F^{(-k)}_{ur}]-2\gamma^{-1}[A^{(k-n)}_{(z},F^{(2-k)}_{\Bar{z})r}]\right), \nonumber \\
    E_z^{(-n)}&=-\partial_{u}F^{(-n)}_{rz}+(n-1)(F^{(1-n)}_{uz}-F^{(1-n)}_{rz})-\partial_z(\gamma^{-1}F^{(2-n)}_{z\bar{z}}) \label{eom_full_z} \\
    &\quad +i\sum_{k=0}^n\left(2[A^{(k-n)}_{(u},F^{(-k)}_{r)z}]-[A^{(k-n)}_r,F^{(-k)}_{rz}]+\gamma^{-1}[A^{(k-n)}_{z},F^{(2-k)}_{z\Bar{z}}]\right), \nonumber\\
    E_{\bar{z}}^{(-n)}&=-\partial_{u}F^{(-n)}_{r\bar{z}}+(n-1)(F^{(1-n)}_{u\bar{z}}-F^{(1-n)}_{r\bar{z}})+\partial_{\bar{z}}(\gamma^{-1}F^{(2-n)}_{z\bar{z}}) \label{eom_full_zb}\\
    &\quad +i\sum_{k=0}^n\left(2[A^{(k-n)}_{(u},F^{(-k)}_{r)\bar{z}}]-[A^{(k-n)}_r,F^{(-k)}_{r\bar{z}}]-\gamma^{-1}[A^{(k-n)}_{\bar{z}},F^{(2-k)}_{z\Bar{z}}]\right). \nonumber
\end{align}

In the following, we study these equations for two particular gauge choices, light-cone and radial gauges. First, we have to define a subset of solutions, given by certain fall-offs compatible with radiative fields. Then, we compute the equations of motion ($E_\mu^{(-n)} = 0$), and finally we deduce a recursive formula that allows to compute the components of the field strength up to functions on the sphere $S^2$ with coordinates $(z,\zb)$, i.e., the celestial sphere. Schematically, one has 
\be 
F_{ur}^{(-n)} = G[\partial_u^{-1},\gamma,\partial_z,\partial_{\zb}, A_z^{(0)}, A_{\zb}^{(0)} , F_{ur}^{(-n+1)},..., F_{ur}^{(-2)}], 
\ee
where $\partial_u^{-1} f(u , z ,\zb) := \int_{-\infty}^{u} f(\tilde{u},z,\zb) d\tilde{u}$ and $G$ is a functional that contains the differential operators $\partial_u^{-1},\partial_z,\partial_{\zb}$ and depends on the field strength via $F_{ur}^{(-n+1)},..., F_{ur}^{(-2)}$.

\subsection{Light-cone gauge }

Our first case study is the light-cone gauge. This gauge is characterised by the vanishing of the $u$-component of the gauge vector,
\be \label{light_cone_gauge}
    \mathcal{A}_u=0.
\ee 
As we mentioned above, we are interested in a subset of solutions that are radiative. This can be achieved by taking the following conditions on the $r$-component of $\mathcal{A}_\mu$,
\begin{equation}
    \label{light_cone_falloff}
     \qquad A_r^{(0)}=A_r^{(-1)}=0.
\end{equation}
From \eqref{light_cone_gauge} and \eqref{light_cone_falloff} we get the following conditions on $\mathcal{F}_{\mu \nu}$,
\begin{equation}
    F_{ur}^{(-n)}=F_{rz}^{(-n)}=F_{r\bar{z}}^{(-n)}=0 \qquad \mathrm{for} \ n=0,1.
\end{equation}
Equations \eqref{eom_full_u}-\eqref{eom_full_zb} simplify to the following set of identities,
\begin{align}
    \label{eq:EOM full Bondi light-cone u component}
    E_u^{(-n)}&=\partial_{u}F^{(-n)}_{ur}+(n-3)F^{(1-n)}_{ur}+2\gamma^{-1}\partial_{(z}F^{(2-n)}_{\Bar{z})u} \\
    &\quad +i\sum_{k=0}^{n-2}\left([A^{(k-n)}_r,F^{(-k)}_{ur}]-2\gamma^{-1}[A^{(2+k-n)}_{(z},F^{(-k)}_{\Bar{z})u}]\right), \qquad n\geq2 \nonumber \\
    \label{eq:EOM full Bondi light-cone r component}
    E_r^{(-n)}&=(n-3)F^{(1-n)}_{ur}+2\gamma^{-1}\partial_{(z}F^{(2-n)}_{\Bar{z})r} \\
    &\quad +i\sum_{k=2}^{n-2}\left([A^{(k-n)}_{r},F^{(-k)}_{ur}]-2\gamma^{-1}[A^{(2+k-n)}_{(z},F^{(-k)}_{\Bar{z})r}]\right), \qquad n\geq4 \nonumber \\
    \label{eq:EOM full Bondi light-cone z component}
    E_z^{(-n)}&=-\partial_{u}F^{(-n)}_{rz}+(n-1)(F^{(1-n)}_{uz}-F^{(1-n)}_{rz})-\partial_z(\gamma^{-1}F^{(2-n)}_{z\bar{z}}) \\
    &\quad +i\sum_{k=0}^{n-2}\left([A^{(k-n)}_{r},F^{(-k)}_{uz}-F^{(-k)}_{rz}]+\gamma^{-1}[A^{(2+k-n)}_{z},F^{(-k)}_{z\Bar{z}}]\right), \qquad n\geq2 \nonumber\\
    \label{eq:EOM full Bondi light-cone zb component}
    E_{\bar{z}}^{(-n)}&=-\partial_{u}F^{(-n)}_{r\bar{z}}+(n-1)(F^{(1-n)}_{u\bar{z}}-F^{(1-n)}_{r\bar{z}})+\partial_{\bar{z}}(\gamma^{-1}F^{(2-n)}_{z\bar{z}}) \\
    &\quad +i\sum_{k=0}^{n-2}\left([A^{(k-n)}_{r},F^{(-k)}_{u\bar{z}}-F^{(-k)}_{r\bar{z}}]-\gamma^{-1}[A^{(2+k-n)}_{\bar{z}},F^{(-k)}_{z\Bar{z}}]\right), \qquad n\geq2 \nonumber
\end{align}

\subsubsection{Recursive formula for the gauge vector} \label{subsubsec:recursive_formula_A_light_cone_full}
In what follows we prove that given the initial data 
\be \label{in_data}
A_z^{(0)} (u,z,\zb),\quad A_{\bar{z}}^{(0)}(u,z,\zb) \ , 
\ee
and families of functions on the sphere 
\be \label{not_r_u_exp}
\{A_z^{(-n,0)},A_{\zb}^{(-n,0)},A_r^{(-n-2,0)} \in C^\infty(S^2)\}_{n\geq 1}  \ ,
\ee
where $T^{(-n,k)}$ denotes the coefficient of $\frac{u^k}{r^n}$ in a formal expansion of the tensor $T$ in $r$ and $u$, it is possible to find the solution of the equation of motion at arbitrary order in the $1/r$-expansion. In particular, we present a recursive formula that allows us to compute $A_\mu^{(-n)}$ for all $n\in\mathbb{N}$, thus providing the explicit expression of the solution of the e.o.m. at each order.

First of all, we write explicitly the definition of the field strength tensor in light-cone gauge \eqref{light_cone_gauge} and with fall-off \eqref{light_cone_falloff} at a given order $n$:
\begin{align}
    \label{F_ur in light-cone full Bondi}
    F_{ur}^{(-n)}&=\partial_uA_r^{(-n)}, \qquad n\geq2 \\
    \label{F_uz in light-cone full Bondi}
    F_{uz}^{(-n)}&=\partial_uA_z^{(-n)} , \qquad n\geq2 \\
    \label{F_rz in light-cone full Bondi}
    F_{rz}^{(-n)}&=(1-n)A_z^{(1-n)}-\partial_zA_r^{(-n)}-i\sum_{k=0}^{n-2}[A_r^{(k-n)},A_z^{(-k)}], \qquad n\geq2 \\
    \label{F_zzb in light-cone full Bondi}
    F_{z\bar{z}}^{(-n)}&=2\partial_{[z}A_{\bar{z}]}^{(-n)}-i\sum_{k=0}^{n}[A_z^{(k-n)},A_{\bar{z}}^{(-k)}] , \qquad n\geq2 \ ,
\end{align}
where $F_{u\zb}^{(-n)}$ and $F_{r\zb}^{(-n)}$ are completely analogous to  \eqref{F_uz in light-cone full Bondi} and \eqref{F_rz in light-cone full Bondi}, respectively. Recall that $F_{ur}^{(-n)}=F_{rz}^{(-n)}=0$ for $n=0,1$. From the above equations we notice that
\begin{align}
    \label{F_umu dependence in light-cone full Bondi}
    F_{u\mu}^{(-n)}& \quad \mathrm{depends} \ \mathrm{on} \quad A_\mu^{(-n)} \quad \text{for} \quad \mu = r,z,\zb,\\
    \label{F_rz dependence in light-cone full Bondi}
    F_{rz}^{(-n)}& \quad \mathrm{depends} \ \mathrm{on} \quad \{A_r^{(-k)}\}_{k\leq n},\{A_z^{(-k)}\}_{k<n}, \\
    \label{F_zzb dependence in light-cone full Bondi}
    F_{z\bar{z}}^{(-n)}& \quad \mathrm{depends} \ \mathrm{on} \quad \{A_z^{(-k)},A_{\bar{z}}^{(-k)}\}_{k\leq n}.
\end{align}

Suppose we know $\{A_z^{(-k)}(u,z,\zb),A_{\zb}^{(-k)}(u,z,\zb),A_r^{(-k-2)}(u,z,\zb)\}_{k<n}$ for some arbitrary $n\in\mathbb{N}$; we want to find $\{A_z^{(-n)}(u,z,\zb),A_{\zb}^{(-n)}(u,z,\zb),A_r^{(-n-2)}(u,z,\zb)\}$. Although it is useful to express the recursion step as above, we emphasize that $A_r^{(-2)}$ is not part of the initial data, but can be obtained from \eqref{in_data} and \eqref{not_r_u_exp}.


Indeed, $A_r^{(-2)}$ can be obtained from \eqref{F_ur in light-cone full Bondi} and \eqref{eq:EOM full Bondi light-cone u component} with $n=2$:
\begin{align}  \partial_{u}F^{(-2)}_{ur}&=2\gamma^{-1}\partial_{(z}F^{(0)}_{u|\Bar{z})}+2i\gamma^{-1}[A^{(0)}_{(z},F^{(0)}_{\Bar{z})u}] ,\\
    \label{eq:A_r(-2) in light-cone gauge full Bondi}
    \Longrightarrow A_r^{(-2)}&=2\gamma^{-1}\partial_u^{-1}\left(\partial_{(z}A_{\bar{z})}^{(0)}+i\partial_u^{-1}[\partial_uA_{(z}^{(0)},A_{\bar z)}^{(0)}]\right),
\end{align}
where in the second line we used \eqref{F_uz in light-cone full Bondi} and the symbol $\partial_u^{-1} = \int du$ defined previously. For $n\geq3$ we use \eqref{F_ur in light-cone full Bondi} and \eqref{eq:EOM full Bondi light-cone r component} to find:
\begin{align}
    \label{eq:A_r(-n) in light-cone gauge full Bondi}
    A_r^{(-n)}&=\frac{1}{2-n}\partial_u^{-1}\left[2\gamma^{-1}\partial_{(z}F_{\bar{z})r}^{(1-n)}+i\sum_{k=2}^{n-1}\left([A_r^{(k-n-1)},F_{ur}^{(-k)}]-2\gamma^{-1}[A_{(z}^{(1+k-n)},F_{\bar{z})r}^{(-k)}]\right)\right]
\end{align}
By observing the equations above and recalling \eqref{F_umu dependence in light-cone full Bondi}-\eqref{F_rz dependence in light-cone full Bondi} we notice that, in general:
\begin{align}
    \label{A_r(-n) dependence in light-cone full Bondi}
    A_r^{(-n)}& \quad \mathrm{depends} \ \mathrm{on} \quad A_r^{(-n+1)},\{A_\mu^{(-k)}\}_{k\leq n-2}\quad \text{and}\quad
    A_r^{(-n,0)}
\end{align}
Therefore, we conclude that knowing $\{A_\mu^{(-k)}\}_{k<n}$ is sufficient to determine $A_r^{(-n)}$ via equations \eqref{eq:A_r(-2) in light-cone gauge full Bondi} and \eqref{eq:A_r(-n) in light-cone gauge full Bondi}. Each function $\{A_r^{(-n,0)}\}_{n\geq 2}$ is the \textit{constant of integration} of the operator $\partial_u^{-1}$. 

Now, we solve for $A_z^{(-n)}$. At leading order, that is, for $n=0$, the $z$ and $\bar{z}$ components of the gauge field constitute the initial data, so we consider $n\geq1$. From \eqref{F_rz in light-cone full Bondi} we have:
\begin{align}
    \label{eq:A_z solution step 1}
    A_z^{(-n)}&=-\frac{1}{n}\left(\partial_zA_r^{(-n-1)}-i\sum_{k=2}^{n+1}[A_z^{(k-n-1)},A_r^{(-k)}]+F_{rz}^{(-n-1)}\right).
\end{align}

However, thanks to \eqref{F_rz dependence in light-cone full Bondi} we know that, in general, $F_{rz}^{(-n-1)}$ does depend on $A_z^{(-n)}$. Hence, we need an explicit expression for it, which is provided by the $z$ component \eqref{eq:EOM full Bondi light-cone z component} of the e.o.m.
\begin{equation}
    \label{eq:A_z solution step 2}
    \begin{aligned}
        F^{(-n-1)}_{rz}&=nA_z^{(-n)}-\partial_u^{-1}\bigg[nF^{(-n)}_{rz}+\partial_z(\gamma^{-1}F^{(1-n)}_{z\bar{z}}) \\
        &\quad -i\sum_{k=0}^{n-1}\left([A^{(k-n-1)}_{r},\partial_uA_z^{(-k)}-F^{(-k)}_{rz}]+\gamma^{-1}[A^{(1+k-n)}_{z},F^{(-k)}_{z\Bar{z}}]\right)\bigg] .
    \end{aligned}
\end{equation}
Putting \eqref{eq:A_z solution step 1} and \eqref{eq:A_z solution step 2} together we obtain (for $n\geq1$):
\begin{equation}
    \begin{aligned}
        A_z^{(-n)}&=\frac{1}{2n}\bigg\{-\partial_zA_r^{(-n-1)}+i\sum_{k=2}^{n+1}[A_z^{(k-n-1)},A_r^{(-k)}] \label{eq:A_z(-n) in light-cone gauge full Bondi} \\
        &\quad +\partial_u^{-1}\bigg[nF^{(-n)}_{rz}+\partial_z(\gamma^{-1}F^{(1-n)}_{z\bar{z}}) \\
        &\quad -i\sum_{k=0}^{n-1}\left([A^{(k-n-1)}_{r},\partial_uA_z^{(-k)}-F^{(-k)}_{rz}]+\gamma^{-1}[A^{(1+k-n)}_{z},F^{(-k)}_{z\Bar{z}}]\right)\bigg]\bigg\}.
    \end{aligned}
\end{equation}
The above equation together with the \eqref{light_cone_falloff}, \eqref{F_rz dependence in light-cone full Bondi}, \eqref{F_zzb dependence in light-cone full Bondi} and \eqref{A_r(-n) dependence in light-cone full Bondi} imply that:
\begin{align}
    A_z^{(-n)}& \quad \mathrm{depends} \ \mathrm{on} \quad A_r^{(-n-1)}, A_r^{(-n)},\{A_\mu^{(-k)}\}_{k<n}.
\end{align}

For instance, $A_z^{(-1)}$ depends just on the initial data:
\begin{equation}
    \label{A_z-1_full_Bondi}
    \begin{aligned}
        A_z^{(-1)}&=-\frac{1}{2}\left(\partial_zA_r^{(-2)}-i[A_z^{(0)},A_r^{(-2)}]\right) \\
        &\quad +\frac{1}{2}\partial_u^{-1}\left(\partial_z(\gamma^{-1}F^{(0)}_{z\bar{z}})-i[A^{(-2)}_{r},\partial_uA^{(0)}_{z}]-i\gamma^{-1}[A^{(0)}_{z},F^{(0)}_{z\Bar{z}}]\right).
    \end{aligned}
\end{equation}

Observe that since we start with the fall-offs \eqref{light_cone_falloff}, the inductive step is indeed satisfied. In other words, the solution for $A_z^{(-n)}$ is completely determined by $ A_r^{(-n-1)}, A_r^{(-n)}$, $\{ A_\mu^{(-k)} \}_{k<n}$ and the functions on the sphere $\{A_z^{(-n,0)}(z,\zb)\}$. Note that it is also possible to find $A_z^{(-n)}$ by integrating $F_{uz}^{(-n)}$, which in turn can be obtained through the Bianchi identity \eqref{eq:Bianchi light-cone full bondi 1}, achieving the same result.

In \autoref{fig:recursive_indices}, the relation between the indices is schematically depicted.

\begin{figure}[h]
    \centering
    \includegraphics[scale=0.7]{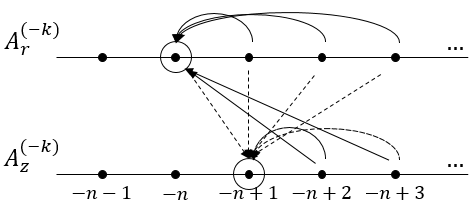}
    \caption{The solid arrows indicate which $A_r^{(-k)}$, $A_z^{(-k)}$ enter the recursion formula for $A_r^{(-n)}$, while the dotted arrows indicate which $A_r^{(-k)}$ and $A_z^{(-k)}$ enter the recursion formula for $A_z^{(-n)}$. The component $A_{\zb}^{(-n)}$ is not drawn, since the same result as 
    for $A_z^{(-n)}$ is valid.}
    \label{fig:recursive_indices}
\end{figure}

To conclude, the above procedure allows us to compute the solution to the e.o.m. $A_\mu^{(-n)}$ at an arbitrary order $n\in\mathbb{N}$ of the large $r$-expansion, by recursively constructing $A_\mu^{(-k)}$ for $k=0,\dots,n$, starting from the initial data $A_z^{(0)}$ and $A_{\bar{z}}^{(0)}$. Our recursive relation can be written in a more compact way as follows:
\begin{align}
A_r^{(-2)}&=2\gamma^{-1}\partial_u^{-2}\left(D_{(z|}\partial_u\mathcal{A}_{|\bar{z})}\right)^{(0)} \\
    A_r^{(-n)}&=\frac{1}{2-n}\partial_u^{-1}\left[2\gamma^{-1}\left(D_{(z}\mathcal{F}_{\bar{z})r}\right)^{(1-n)}+i\left([\mathcal{A}_r,\partial_u\mathcal{A}_r]\right)^{(-n-1)}\right], \qquad n\geq3 \\
    A_z^{(-n)}&=-\frac{1}{2n}\big\{(D_z\mathcal{A}_r)^{(-n-1)}-\partial_u^{-1}\big(nF_{rz}^{(-n)}+(D_z(\gamma^{-1}\mathcal{F}_{z\bar{z}}))^{(1-n)} \\
    &\quad -i([\mathcal{A}_r,\partial_u\mathcal{A}_z-\mathcal{F}_{rz}])^{(-n-1)}\big)\big\}, \qquad n\geq1 \nonumber
\end{align}

\textbf{Recursive formula for $F_{ur}^{(-n)}$:} As will be seen in \autoref{charges}, the $ru$ component of the field strength plays a crucial role in the construction of the charges. Thus it is useful to give a recursive formula for the different coefficients of the $r$-expansion of $F_{ur}^{(-n)}$, with $n\geq 2$.

Consider $\{F_{ur}^{(-n,0)} \in C^\infty(S^2)\}_{n\geq 0}$, a family of functions on the sphere, then, by \eqref{eq:EOM full Bondi light-cone u component}, we can obtain $F_{ur}^{-n}$ in terms of $F_{ur}^{(-n,0)}$ and $\{F_{ur}^{-k}\}_{k<n}$, by using the $\partial_u^{-1}$ operator.

It immediately follows from \eqref{eq:EOM full Bondi light-cone u component} that
\begin{equation}
    \label{F_ur_recursive_int}
    \begin{aligned}
        F_{ur}^{(-n)}&= \partial_u^{-1} \bigg( (3-n)F^{(-n+1)}_{ur} -2 \gamma^{-1} \partial_{(z} F^{(-n+2)}_{\Bar{z})u} \\
        & \quad -i\sum_{k=0}^{n-2}\left([A^{(k-n)}_r,F^{(-k)}_{ur}]  -2 \gamma^{-1} [A^{(2+k-n)}_{(z},F^{(-k)}_{\Bar{z})u}]\right) \bigg).
    \end{aligned}
\end{equation}
The constant of integration for the operator $\partial_u^{-1}$ is given by $F_{ur}^{(-n,0)}$, which we can introduce by fixing the lower limit in the anti-derivative as $-\infty$,

\begin{equation}
    \label{F_ur_recursive_int}
    \begin{aligned}
        F_{ur}^{(-n)}(u,z,\zb)&= F_{ur}^{(-n,0)} (z,\zb) + \int_{-\infty}^u du \bigg( (3-n)F^{(-n+1)}_{ur} -2 \gamma^{-1} \partial_{(z} F^{(-n+2)}_{\Bar{z})u} \\
        & \quad -i\sum_{k=0}^{n-2}\left([A^{(k-n)}_r,F^{(-k)}_{ur}]  -2 \gamma^{-1} [A^{(2+k-n)}_{(z},F^{(-k)}_{\Bar{z})u}]\right) \bigg).
    \end{aligned}
\end{equation}

\subsubsection{Decays on null infinity in the radiative phase space} \label{u_fall_offs_light_cone_gauge}

As it is well known in the sub$^n$-theorems literature (see for example \cite{Hamada:2018vrw, Li:2018gnc, Campiglia:2018dyi}, the $u$-behaviour of the fields translates as derivatives via a Fourier transform in $u$. Let us assume that we are working at tree-level. This  implies that the decay of the $u$-derivative of the initial data ($\partial_u A_z^{(0)}$) is faster than any polynomial in $u$. 

Let $\mathfrak{F}(A^{(0)}_z)(\omega, z,\zb)$ be the time-Fourier transform of $A_z(u, z,\zb)$. Then,  the $\omega \rightarrow 0$ expansion of the tree-level amplitude (see \cite{Hamada:2018vrw, Li:2018gnc}) implies the following fall off in $\omega$ for the radiative free data for the  gauge vector,
\be 
\mathfrak{F}(A^{(0)}_z)(\omega, \hat{x})=\omega^{-1} \mathfrak{F}(A^{(0)}_z)^{-1}(\hat{x})+\mathfrak{F}(A^{(0)}_z)^{0} (\hat{x})+\omega \mathfrak{F}(A^{(0)}_z)^{1}(\hat{x})+\ldots,
\ee

where the upper index outside the bracket indicates the order in the $\omega$-expansion. Undoing the Fourier transform, $A_z(u, z,\zb)$ behaves like
\be 
A^{(0)}_z(u, \hat{x})=\theta(u) A_z^{(\theta)}(\hat{x})+\delta(u) A_z^{(\delta)}(\hat{x})+\delta^{\prime}(u) A_z^{\left(\delta^{\prime}\right)}(\hat{x})+\ldots,
\ee

with $\theta$ a step function. Thus, the $u \rightarrow \pm \infty$ falloffs for the null infinity free data are as follows
\be 
A^{(0)}_z(u, \hat{x}) \stackrel{u \rightarrow \pm \infty}{=} A^{(0,0) \pm}_z (\hat{x})+ o(u^{-\infty}),
\ee
where $o(u^{-\infty})$ is a remainder that goes to zero faster than any power of $1 / u$. 

In what follows we compute the $u$-behaviour of the functions $A_r^{(-n)}, A_z^{(-n)}$ and $F_{ru}^{(-n)}$ as $u \rightarrow -\infty$. Let us start with the the following boundary conditions on the field strength at $\Ib^+_+$,
\be 
\lim_{u \rightarrow + \infty} F^{(-2)}_{ru} (u,z,\zb)= 0 , \quad \lim_{u \rightarrow + \infty} F_{rz}^{(-2)}(u,z,\zb) = 0 , \quad \lim_{u \rightarrow + \infty} F_{r\zb}^{(-2)}(u,z,\zb) = 0 ,
\ee
which arise due to the absence of massive colored fields. The standard $u \rightarrow -\infty$ fall-off for the radiative data $A_z^{(0)}$ is the following,
\be 
A_z^{(0)} (u,z,\zb)= A_z^{(0,0)}  (z,\zb) + o(u^{-\infty}).
\ee 
This translates to $F_{ru}^{(-2)}$ via \eqref{eq:A_r(-2) in light-cone gauge full Bondi} as follows
\be 
F_{ru}^{(-2)} (u,z,\zb) = F_{ru}^{(-2,0)}(z,\zb) + o(u^{-\infty}).
\ee
We can write
\be 
\partial_u F_{ru}^{(-2)} = 2 \gamma^{-1} \partial_{(z} F_{\zb) u}^{(0)} -2 i \gamma^{-1} [A_{(z}^{(0)} , F_{\zb) u}^{(0)}] =: 2 \gamma^{-1} D^{(0)}_{(z} F_{\zb) u}^{(0)},
\ee
where $D^{(0)}_z :=  \partial_z - i [A_{z}^{(0)}, \cdot]  $. Using the results in the previous subsections, we will show here how the integration from $A_z^{(0)}(u,z,\zb)$ can be carried out to all the coefficients (with the corresponding \textit{constants of integration} as functions on the sphere).

First, observe that $A_r^{(-2)} = O(u)$, since $\partial_u A_r^{(-2)} + F_{ru}^{(-2)} =0$. Then, if $A_r^{(-2 ,0)}$ is the finite value of the limit  
\be 
\lim_{u \rightarrow -\infty} A_r^{(-2)} + u F_{ru}^{(-2,0)} = A_r^{(-2 ,0)}(z,\zb),
\ee
where the second term on the LHS above has the role of substracting the linear dependence in $u$, then
\be 
A_r^{(-2)}(u,z, \zb) = - u F_{ru}^{(-2,0)}(z, \zb) + A_r^{(-2 ,0)}(z, \zb) + o(u^{-\infty}).
\ee
This is consistent with \eqref{eq:A_r(-2) in light-cone gauge full Bondi}. Next, the equation for $A_z^{(-1)}$ is given by \eqref{A_z-1_full_Bondi},
\beq \label{A_z_-1_computed}
A_z^{(-1)}&=& \frac{1}{2} u \left(\partial_z F_{ru}^{(-2,0)} -i[A_z^{(0,0)},F_{ru}^{(-2,0)}] +\partial_z(\gamma^{-1}F^{(0,0)}_{z\bar{z}})-i\gamma^{-1}[A^{(0,0)}_{z},F^{(0,0)}_{z\Bar{z}}]\right)   \\
&& -\frac{1}{2}\left(\partial_zA_r^{(-2,0)}-i[A_z^{(0,0)},A_r^{(-2,0)}]\right) + A_z^{(-1,0)} + o(u^{-\infty}). \nonumber
\eeq

Observe that $A_z^{(-1)}$ contains up to linear terms in $u$. The coefficients $F_{zu}^{(0)}$, $F_{zu}^{(-1)}$ and $F_{zr}^{(-2)}$ are found directly 
\begin{align}
F_{zu}^{(0)} & = - \partial_u A_z^{(0)} = o(u^{-\infty}) , \quad F_{zu}^{(-1)} = - \partial_u A_z^{(-1)} = O(1), \label{F_zu_0}\\
F_{zr}^{(-2)} & = \partial_z A_r^{(-2)} - A_z^{(-1)} -i [A_z^{(0)}, A_r^{(-2)}] = O(u).
\end{align}

The second equation is consistent with the definition of $A_z^{(-1)}$ via \eqref{eq:A_z solution step 1},
\be 
\partial_u F^{(-2)}_{rz} =\partial_u A_z^{(-1)}-F^{(-n)}_{rz}+\partial_z(\gamma^{-1}F^{(0)}_{z\bar{z}}) -i\left([A^{(-2)}_{r},\partial_uA_z^{(0)}-F^{(0)}_{rz}]+\gamma^{-1}[A^{(0)}_{z},F^{(0)}_{z\Bar{z}}]\right) .
\ee

From \eqref{F_ur_recursive_int} we have
\beq 
F_{ur}^{(-3)} &=& -\partial_u^{-1} \left(2 \gamma^{-1} D^{(0)}_{(z} F^{(-1)}_{\Bar{z})u} -2i \gamma^{-1} [A^{(-1)}_{(z},F^{(0)}_{\Bar{z})u}] \right) \label{F_ru_3}
\eeq

From equations \eqref{F_zu_0} and \eqref{F_ru_3}, we deduce that $F_{ru}^{(-3)}$ is $O(u)$. Indeed, the first and third terms within the brackets are $O(1)$ in $u$, while the second term contains a product of a linear term and $\partial_u A_z^{(0)}$, which is $o(u^{-\infty})$; therefore, after integration, the second term is $O(1)$. Then, we can proceed in an analogous way, since now we have $A_r^{(-3)} = O(u^2)$. Therefore, inductively from \eqref{eq:A_r(-n) in light-cone gauge full Bondi}, \eqref{eq:A_z(-n) in light-cone gauge full Bondi} and \eqref{F_ur_recursive_int} we have that each $F_{ru}^{(-n-2)}$ will be $O(u^n)$.

\subsection{Radial gauge}
\label{subsec:Radial gauge}

Next we study the radial gauge, characterized by
\begin{equation}
    \label{radial_gauge}
    \mathcal{A}_r=0\;,
\end{equation}
and let us consider the fall-off
\be 
\label{radial_fall_off}
A_u^{(0)}=0\;,
\ee 
which further specifies the gauge. Thus, this choice implies the following identities:
\begin{equation}
    F_{ur}^{(-n)}=F_{rz}^{(-n)}=F_{r\bar{z}}^{(-n)}=0 \qquad \mathrm{for} \ n=0,1.
\end{equation}
Equations \eqref{eom_full_u}-\eqref{eom_full_zb} result in
\begin{align}
    \label{eq:EOM full Bondi radial u component}
    E_u^{(-n)}&=\partial_{u}F^{(-n)}_{ur}+(n-3)F^{(1-n)}_{ur}+2\gamma^{-1}\partial_{(z}F^{(2-n)}_{\Bar{z})u} \\
    &\quad -i\sum_{k=0}^{n-1}[A^{(k-n)}_{u},F^{(-k)}_{ur}]-2i\gamma^{-1}\sum_{k=0}^{n-2}[A^{(2+k-n)}_{(z},F^{(-k)}_{\Bar{z})u}], \qquad n\geq2 \nonumber \\
    \label{eq:EOM full Bondi radial r component}
    E_r^{(-n)}&=(n-3)F^{(1-n)}_{ur}+2\gamma^{-1}\partial_{(z}F^{(2-n)}_{\Bar{z})r}-2i\gamma^{-1}\sum_{k=2}^{n-2}[A^{(2+k-n)}_{(z},F^{(-k)}_{\Bar{z})r}], \qquad n\geq4 \\
    \label{eq:EOM full Bondi radial z component}
    E_z^{(-n)}&=-\partial_{u}F^{(-n)}_{rz}+(n-1)(F^{(1-n)}_{uz}-F^{(1-n)}_{rz})-\partial_z(\gamma^{-1}F^{(2-n)}_{z\bar{z}}) \\
    &\quad +i\sum_{k=0}^{n-1}[A^{(k-n)}_{u},F^{(-k)}_{rz}]+i\gamma^{-1}\sum_{k=0}^{n-2}[A^{(2+k-n)}_{z},F^{(-k)}_{z\Bar{z}}], \qquad n\geq2 \nonumber\\
    \label{eq:EOM full Bondi radial zb component}
    E_{\bar{z}}^{(-n)}&=-\partial_{u}F^{(-n)}_{r\bar{z}}+(n-1)(F^{(1-n)}_{u\bar{z}}-F^{(1-n)}_{r\bar{z}})+\partial_{\bar{z}}(\gamma^{-1}F^{(2-n)}_{z\bar{z}}) \\
    &\quad +i\sum_{k=0}^{n-1}[A^{(k-n)}_{u},F^{(-k)}_{r\bar{z}}]-i\gamma^{-1}\sum_{k=0}^{n-2}[A^{(2+k-n)}_{\bar{z}},F^{(-k)}_{z\Bar{z}}], \qquad n\geq2\;. \nonumber
\end{align}

Observe the almost symmetric behaviour of $r$ and $u$ with respect to the previous case (eqn. \eqref{eq:EOM full Bondi light-cone u component}-\eqref{eq:EOM full Bondi light-cone zb component}), although there are differences which we will see below.

\subsubsection{Recursive formula for the gauge vector} \label{subsubsec:recursion_radial_full_bondi}

As in the previous section, we will prove that given the initial data $A_z^{(0)} (u,z,\zb) ,A_{\zb}^{(0)} (u,z,\zb)$ and $\{A_z^{(-n,0)},A_{\zb}^{(-n,0)},A_u^{(-n-1,0)}\}_{n\geq 1}$, it is possible to find the solution of the equation of motion at arbitrary order in the $1/r$-expansion. We present recursive formulas (similar to \eqref{eq:A_r(-n) in light-cone gauge full Bondi},\eqref{eq:A_z(-n) in light-cone gauge full Bondi}) that allow us to compute $A_\mu^{(-n)}$ for all $n\geq 1$, thus providing the explicit expression of the e.o.m. solution at each order.

First of all, it is useful to write explicitly the definition of the field strength tensor in radial gauge. From \eqref{radial_gauge} and \eqref{radial_fall_off}, at a given order $n$ we have
\begin{align}
    \label{F_ur in radial full Bondi}
    F_{ur}^{(-n)}&=(n-1)A_u^{(1-n)}, \qquad n\geq2, \\
    \label{F_uz in radial full Bondi}
    F_{uz}^{(-n)}&=2\partial_{[u}A_{z]}^{(-n)}-i\sum_{k=0}^{n-1}[A_u^{(k-n)},A_z^{(-k)}]=\partial_uA_z^{(-n)}-(D_z A_u)^{(-n)}, \\
    \label{F_rz in radial full Bondi}
    F_{rz}^{(-n)}&=(1-n)A_z^{(1-n)}, \qquad n\geq2, \\
    \label{F_zzb in radial full Bondi}
    F_{z\bar{z}}^{(-n)}&=2\partial_{[z}A_{\bar{z}]}^{(-n)}-i\sum_{k=0}^{n}[A_z^{(k-n)},A_{\bar{z}}^{(-k)}].
\end{align}
Recall that $F_{ur}^{(-n)}=F_{rz}^{(-n)}=0$ for $n=0,1$. From the above equations we notice that:
\begin{align}
    \label{F_rmu dependence in radial full Bondi}
    F_{r\mu}^{(-n)}& \quad \mathrm{depends} \ \mathrm{on} \quad A_\mu^{(1-n)}, \qquad (n\geq2) ,\\
    \label{F_rz dependence in radial full Bondi}
    F_{uz}^{(-n)}& \quad \mathrm{depends} \ \mathrm{on} \quad \{A_u^{(-k)},A_z^{(-k)}\}_{k\leq n}, \\
    \label{F_zzb dependence in radial full Bondi}
    F_{z\bar{z}}^{(-n)}& \quad \mathrm{depends} \ \mathrm{on} \quad \{A_z^{(-k)},A_{\bar{z}}^{(-k)}\}_{k\leq n}.
\end{align}

Suppose we know $\{A_z^{(-k)}(u,z,\zb),A_{\zb}^{(-k)}(u,z,\zb),A_u^{(-k-1)}(u,z,\zb)\}_{k<n}$ for some arbitrary $n\in\mathbb{N}$; we want to find $\{A_z^{(-n)}(u,z,\zb),A_{\zb}^{(-n)}(u,z,\zb),A_r^{(-n-2)}(u,z,\zb)\}$. First, we solve for the $u$ component of the gauge field. At leading order we know that $A_u^{(0)}$ vanishes. $A_u^{(-1)}$ can be obtained from \eqref{eq:EOM full Bondi radial u component} and \eqref{F_ur in radial full Bondi} with $n=2$:
\begin{align}
    \partial_{u}F^{(-2)}_{ur}&=-2\gamma^{-1}\partial_{(z}F^{(0)}_{\Bar{z})u}+2i\gamma^{-1}[A^{(0)}_{(z},F^{(0)}_{\Bar{z})u}], \\
    \label{eq:A_u(-1) in radial gauge full Bondi}
    \Longrightarrow A_u^{(-1)}&=2\gamma^{-1}\left(\partial_{(z}A_{\bar{z})}^{(0)}+i\partial_u^{-1}[\partial_uA_{(z}^{(0)},A_{\bar z)}^{(0)}]\right),
\end{align}
where in the second line we used $F_{uz}^{(0)}=\partial_uA_z^{(0)}$. For $n\geq2$ we make use of \eqref{eq:EOM full Bondi radial r component}, \eqref{F_ur in radial full Bondi} and \eqref{F_rz in radial full Bondi} to find
\begin{align}
    \label{eq:A_u(-n) in radial gauge full Bondi}
    A_u^{(-n)}&=-\frac{2\gamma^{-1}}{n}\left(\partial_{(z}A_{\Bar{z})}^{(1-n)}+\sum_{k=1}^{n-1}\frac{ik}{1-n}[A^{(1+k-n)}_{(z},A_{\Bar{z})}^{(-k)}]\right).
\end{align}
We notice that, in general:
\begin{align}
    \label{A_u(-n) dependence in radial full Bondi}
    A_u^{(-n)}& \quad \mathrm{depends} \ \mathrm{on} \quad \{A_z^{(-k)},A_{\bar{z}}^{(-k)}\}_{k<n} \quad \text{and} \quad A_u^{(-n,0)}.
\end{align}
Therefore, we conclude that knowing $\{A_z^{(-k)} , A_{\zb}^{(-k)}\}_{k<n}$ and the functions on the sphere $\{A_u^{(-n,0)}\}$ is sufficient to determine $A_u^{(-n)}$ via equations \eqref{eq:A_u(-1) in radial gauge full Bondi} and \eqref{eq:A_u(-n) in radial gauge full Bondi}. 

Now, we look at the $A_z^{(-n)}$ coefficients. At leading order, the $z$ and $\bar{z}$ components of the gauge field constitute the initial data. For $n\geq1$ we have $A_z^{(-n)}=-n^{-1}F_{rz}^{(-n-1)}$, from \eqref{F_rz in radial full Bondi}. However, thanks to \eqref{F_rz dependence in radial full Bondi} we know that, in general, $F_{rz}^{(-n-1)}$ does depend on $A_z^{(-n)}$. Hence, we need an explicit expression for it, which is provided by the $z$ component \eqref{eq:EOM full Bondi radial z component} of the e.o.m.,
\begin{align}
    \label{eq:A_z solution step 1 radial}
    F^{(-n-1)}_{rz}&=nA_z^{(-n)}-\partial_u^{-1}\bigg(n\partial_zA_u^{(-n)}+nF^{(-n)}_{rz}+\partial_z(\gamma^{-1}F^{(1-n)}_{z\bar{z}}) \\
    &\quad -in\sum_{k=1}^n[A_z^{(k-n)},A_u^{(-k)}]-i\sum_{k=0}^{n}[A^{(k-n-1)}_{u},F^{(-k)}_{rz}]-i\gamma^{-1}\sum_{k=0}^{n-1}[A^{(1+k-n)}_{z},F^{(-k)}_{z\Bar{z}}]\bigg). \nonumber
\end{align}
After some algebra we obtain, for $n\geq1$:
\begin{equation}
    \begin{aligned}
        A_z^{(-n)}&=\frac{1}{2}\partial_u^{-1}\bigg((1-n)A_z^{(1-n)}+ \partial_{z}\left( A_{u}^{(-n)}+\frac{1}{n}\gamma^{-1}F^{(1-n)}_{z\bar{z}} \right) \\
        &\quad -\frac{i}{n}\sum_{k=1}^{n}[A_z^{(k-n)},(2n-k)A_u^{(-k)}+\gamma^{-1}F^{(1-k)}_{z\Bar{z}}]\bigg).
    \end{aligned}
\end{equation}
The above equation together with \eqref{F_zzb dependence in radial full Bondi} and \eqref{A_u(-n) dependence in radial full Bondi} imply that,
\be
    A_z^{(-n)} \quad \mathrm{depends} \ \mathrm{on} \quad A_u^{(-n)},\{A_z^{(-k)},A_{\zb}^{(-k)},A_u^{(-k)}\}_{k<n} \quad \text{and} \quad A_z^{(-n,0)}.
\ee
As an example, $A_z^{(-1)}$ depends just on the initial data and on $A_u^{(-1)}$:
\begin{align}
\label{Az-1recursion}
    A_z^{(-1)}&=\frac{1}{2}\partial_u^{-1}\left(\partial_{z}A_{u}^{(-1)}+\partial_z(\gamma^{-1}F^{(0)}_{z\bar{z}})-i[A_z^{(0)},A_u^{(-1)}+\gamma^{-1}F^{(0)}_{z\Bar{z}}]\right)
\end{align}
In other words, the solution for $A_z^{(-n)}$ is completely determined by our initial hypothesis and by the above solution for $A_u^{(-n)}$. As in the light-cone gauge, it is also possible to compute $A_z^{(-n)}$ by integrating $F_{uz}^{(-n)}$, which in turn can be obtained through the Bianchi identity \eqref{eq:Bianchi radial gauge 1}. This shows that the results are consistent with the algebraic structure of the theory. \autoref{fig:recursive_indices_radial} shows the relation between the coefficients in the radial expansion.

\begin{figure}[h]
    \centering
    \includegraphics[scale=0.7]{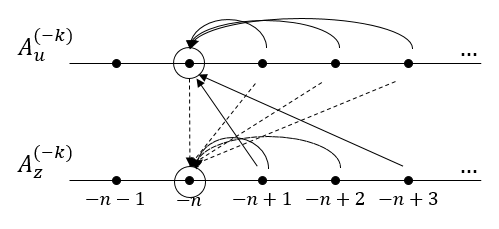}
    \caption{The solid arrows indicate which $A_u^{(-k)}$, $A_z^{(-k)}$ enter the recursion formula for $A_u^{(-n)}$, while the dotted arrows indicate which $A_u^{(-k)}$ and $A_z^{(-k)}$ enter the recursion formula for $A_z^{(-n)}$. The component $A_{\zb}^{(-n)}$ is not drawn, since the same result as 
    for $A_z^{(-n)}$ is valid.}
    \label{fig:recursive_indices_radial}
\end{figure}


\textbf{Recursive formula for $F_{ur}^{(-n)}$:} we can immediately write a recursive formula for the $u-r$component of the field strength, by virtue of \eqref{F_ur in radial full Bondi},
\begin{align}
    F_{ur}^{(-2)}&=-2\gamma^{-1}\partial_u^{-1}\left(D_{(z}\mathcal{F}_{\bar z)u}\right)^{(0)} \\
    F_{ur}^{(-n)}&=\frac{2\gamma^{-1}}{2-n} \left(D_{(z} \mathcal{F}_{\zb)r} \right)^{(1-n)} ,  \qquad n\geq3
\end{align}
Explicitly:
\begin{align}
   \label{Fur_recs} F_{ur}^{(-2)}&=2\gamma^{-1}\left(\partial_{(z}A_{\bar{z})}^{(0)}+i\partial_u^{-1}[\partial_uA_{(z}^{(0)},A_{\bar z)}^{(0)}]\right) \\
    F_{ur}^{(-n)}&=-2\gamma^{-1}\left(\partial_{(z}A_{\Bar{z})}^{(2-n)}+\sum_{k=1}^{n-2}\frac{ik}{2-n}[A^{(2+k-n)}_{(z},A_{\Bar{z})}^{(-k)}]\right), \qquad n\ge3
\end{align}
Formula for $F_{rz}^{(-n)}$ for $n\geq2$:
\begin{equation}
    \begin{aligned}
        F_{rz}^{(-n)}&=\frac{1}{2}\partial_u^{-1}\bigg(F_{rz}^{(1-n)}-\partial_z(F_{ur}^{(-n)}+\gamma^{-1}F_{z\bar z}^{(2-n)}) \\
        &\quad +i\sum_{k=2}^n[A_z^{(k-n)},\frac{2n-k-1}{k-1}F_{ur}^{(-k)}+\gamma^{-1}F_{z\bar z}^{(2-k)}]\bigg)
    \end{aligned}
\end{equation}

\subsubsection{Decays on null infinity in the radiative phase space}

As in the light-cone gauge, the standard $u \rightarrow -\infty$ fall-offs for the radiative data $A_z^{(0)}$,
\be 
A_z^{(0)} (u,z,\zb)= A_z^{(0,0)}  (z,\zb) + o(u^{-\infty}),
\ee 
translates to $F_{ru}^{(-2)}$ via \eqref{F_ur in radial full Bondi} and \eqref{eq:A_u(-1) in radial gauge full Bondi} as follows
\be 
F_{ru}^{(-2)} (u,z,\zb) = F_{ru}^{(-2,0)}(z,\zb) + o(u^{-\infty}).
\ee
We can organise the three non-vanishing components of $A_\mu$ by their $u$-decay,
\beq 
A_z^{(0)} ,A_{\zb}^{(0)} &=& O(1), \quad A_u^{(-1)} = O(1),\\
A_z^{(-1)} ,A_{\zb}^{(-1)} &=& O(u), \quad  A_u^{(-2)} = O(u), \\
A_z^{(-2)} ,A_{\zb}^{(-2)} &=& O(u^2), \quad A_u^{(-3)} = O(u^2),
\dots \nonumber
\eeq
which also translates into the six components of $F_{\mu \nu}$ as is shown in the following table,

\begin{table}[h] 
    \centering
    \begin{tabular}{c|cccccc} 
         $o(u^{-\infty})$ & $F_{uz}^{(0)}$ & $F_{u\zb}^{(0)}$ &   &  & \\
         \hline
         $O(1)$ & $F_{uz}^{(-1)}$ & $F_{u\zb}^{(-1)}$ & $F_{ru}^{(-2)}$ & $F_{z\zb}^{(0)}$  & \\
         \hline
         $O(u)$ & $F_{uz}^{(-2)}$ & $F_{u\zb}^{(-2)}$ & $F_{ru}^{(-3)}$ & $F_{z\zb}^{(-1)}$ & $F_{rz}^{(-2)}$ & $F_{r\zb}^{(-2)}$ \\
         \hline
         $O(u^2)$ & $F_{uz}^{(-3)}$ & $F_{u\zb}^{(-3)}$  & $F_{ru}^{(-4)}$ & $F_{z\zb}^{(-2)}$ & $F_{rz}^{(-3)}$ & $F_{r\zb}^{(-3)}$ \\
         \hline
         $\cdots$ &  &  &  &  & \\
         \hline
         $O(u^n)$ & $F_{uz}^{(-n-1)}$ & $F_{u\zb}^{(-n-1)}$ & $F_{ru}^{(-n-2)}$ & $F_{z\zb}^{(-n)}$ & $F_{rz}^{(-n-1)}$ & $F_{r\zb}^{(-n-1)}$ \\
    \end{tabular}
    \caption{}\label{table_fall_off}
\end{table}

\section{Charges} \label{charges}

In this section we compute the charges that arise when renormalizing the symplectic form, obtaining finite expressions at the corner $\Ib^+_-$. In the extended phase space presented in \eqref{extended_phase_space_YM}, the local gauge transformations are parametrized by a $\mathfrak{g}$-valued function $\Lambda$, giving the usual infinitesimal variation, \eqref{delta_Lambda_HatA}. Since we are working with field-independent variations, \eqref{stuec_gen} can be rewritten as 
\be \label{stueck_field_indep}
\tilde{\mathcal{A}}_\mu = e^{i\Psi}\mathcal{A}_\mu e^{-i\Psi}+ie^{i\Psi}\partial_\mu e^{-i\Psi},
\ee
and the strength tensor is given by,
\be \label{F_field_indep}
\tilde\F_{\mu\nu}=e^{i\Psi}\F_{\mu\nu}e^{-i\Psi}.
\ee
The \textit{bulk} action is
\be \label{tilded_Lagr}
S[\tilde{\A}_\mu] = \int_D \tr \left(  \tilde{\mathcal{F}}_{\mu\nu} \tilde{\mathcal{F}}^{\mu\nu}\right) dvol,
\ee 
where $D$ is some region with its boundary $\partial D := \Sigma_{t_0} \cup \Sigma_{t_1} \cup \mathcal{H}$, consisting in two Cauchy surfaces along Minkowski and a time-like hypersurface $\mathcal{H}$ at spatial infinity (see \autoref{scri_plus_diag}). 

We remark that the Lagrangian \eqref{tilded_Lagr}, constructed from $\tilde{\A}_\mu$ of \eqref{stuec_gen}, takes the same form as the YM Lagrangian constructed from $\A_\mu$. This can be seen immediately from the fact that $\tilde{\A}_\mu$ has the functional form of a gauge transformation, with the parameter replaced by our new field $\breve\Psi$, as is characteristic of the Stueckelberg procedure\footnote{The traditional form of the Stueckelberg trick is applied to an action where a gauge symmetry is broken in the Lagrangian, whereas in our case it manifests as a breaking of the asymptotic symmetries once the fall-offs have been imposed, and is thus not visible in the bulk action.}. In order to compute the relevant charges though, it is crucial to take the field $\tilde{\A}_\mu$ as our fundamental field in the covariant phase space formalism.

From \eqref{tilded_Lagr}, one can then compute the \textit{bulk} symplectic form straightforwardly,
\be \label{symplectic_form_bulk}
\Omega^{\text{bulk}}[\delta_1 , \delta_2] = - \int_\Sigma \tr \left( \delta_1 \tilde{\mathcal{F}}^{\mu \nu} \wedge \delta_2 \tilde{\Av}_\nu \right) dS_\mu - (1 \leftrightarrow 2),
\ee
for some Cauchy surface $\Sigma$ in Minkowski space. The symplectic form can be used to obtain the canonical charges associated with the infinitesimal symmetries \eqref{delta_Lambda_HatA}, by means of the relation
\be \label{def_int_charge}
\Omega^{\text{bulk}}[\delta_\Lambda , \delta] = \delta \tilde{Q}_\Lambda,
\ee 
giving the following result:
\be \label{charge_ext_YM}
\tilde{Q}_\Lambda = \int_\Sigma \partial_\nu \tr (\Lambda \tilde{\F}^{\mu \nu}) dS_\mu.
\ee

We stress that this charge is computed \textit{within the bulk}. As it stands, the expression for the charge is a total derivative, which implies that
\be 
\tilde{Q}_\Lambda = \int_{S^2} \tr (\Lambda \tilde{\F}^{\mu \nu}) dS_{\mu \nu},
\ee
where we are using that $\partial \Sigma = i_0 \simeq S^2$, with $i_0$ denoting spatial infinity. We can identify $\partial \Sigma$ with $\Ib^+_-$ using Bondi coordinates. Indeed, when working in these coordinates, the symplectic form at null infinity $\Ib^+$ is found by taking $\Sigma$ to be the hypersurface $\{t = r+u = constant\}$, and taking the limit $t\rightarrow +\infty$ at fixed $u$. In that way, $\Ib^+$ is parametrized by the null coordinate $u$. Therefore, by taking the limit $r \rightarrow +\infty$, while maintaining $u$ fixed, we have $\Sigma \rightarrow \Ib$ and $\partial \Sigma \rightarrow \Ib^+_-$ (see \autoref{scri_plus_diag}). The conservation of the charges from $\Ib^-_+$ to $\Ib^+_-$ is showed via the antipodal mapping at $\Ib^+_-$ (cf. \cite{Strominger:2013lka}). This antipodal map can be directly translated to $\Psi$. Indeed, if we defined $\Psi^+$ and $\Psi^-$ for the retarded and advanced coordinates respectively, then
\be 
\Psi^+(\hat{x}) = \Psi^-(-\hat{x}) ,
\ee
where $\hat{x}$ are the stereographic projection from the complex plane to the Riemann sphere.
Thus, on the celestial sphere (i.e., $\Ib^+_-$), the charge is the limit $r \rightarrow +\infty$ and $u \rightarrow -\infty$ in the quantity
\be \label{full_ch}
\tilde{Q}_\Lambda = \int_{S^2} \tr (\Lambda \tilde{\F}_{ru}) r^2 \gamma dz d\zb.
\ee
 Of course, this expression can diverge, and in general, a renormalization procedure has to be carried out to give meaning to the expression (see \cite{Freidel:2019ohg, Peraza:2023ivy}). 

\begin{figure}[h]
    \centering
    \includegraphics[scale=0.6]{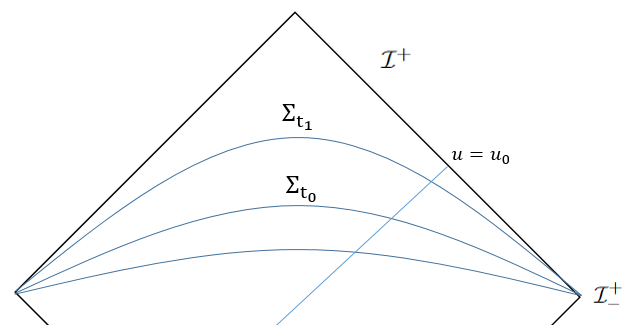}
    \caption{Cauchy Surfaces $\Sigma_{t_0}, \Sigma_{t_1}$, with $t_0 < t_1$, and constant $u$ ray. $\mathcal{H}$ is located at spatial infinity and collapsed to a point after compactification. The sets (topologically spheres) in the intersection of the hypersurfaces with $\{u = u_0\}$ correspond to the celestial sphere in the limit $t\rightarrow + \infty$ with $u$ fixed.}
    \label{scri_plus_diag}
\end{figure}

Assuming the renormalization procedure has been carried out (as is the case for Maxwell, \cite{Freidel:2019ohg, Peraza:2023ivy}), we will focus on the finite term in \eqref{full_ch}. Abusing notation, we will call such a finite part \textit{charge} and denote it also by $\tilde{Q}_\Lambda$. We will have the expansion in $r$, and the expansion in $\Psi$ and $\Lambda_+$. A careful counting of the orders of each field will indeed show that the first few orders in our result are consistent with previous works in the literature. 

\subsection{Charges in the light-cone gauge}

Let us start by computing each order in $r$ of $r^2 \tilde{\F}_{r u}$ \footnote{Recall that the $r^2$ comes from the volume element in $dS^2$ in standard Bondi coordinates.}. Note that we can re-write $\tilde{\F}_{\mu\nu}$ as
\be 
\begin{aligned}
\tilde\F_{\mu\nu}=&e^{i\Psi}\F_{\mu\nu}e^{-i\Psi}\\
=&e^{ad_{i \Psi}}\F_{\mu\nu}\\
=&\F_{\mu\nu}+\frac{e^{ad_{i \Psi}}-1}{ad_{i \Psi}}[i\Psi,\F_{\mu\nu}]\\
=&\F_{\mu\nu}+\frac{1-e^{ad_{i \Psi}}}{ad_{-i \Psi}}[i\Psi,\F_{\mu\nu}]\\
=&\F_{\mu\nu}+\Op_{-i\Psi } [i\Psi,\F_{\mu\nu}]
\end{aligned}
\ee 
where to get to the last line we used \eqref{O_def_gen}. Then we have
\be \label{r2_F_term_i}
\left( r^2 \tilde{\F}_{r u} \right)^{(-i)} =  F_{r u}^{(-2-i)}  + \left( \Op_{-i\Psi } \left( [i\Psi, \F_{ru}] \right) \right)^{(-2-i)}.
\ee
Explicitly, the first three coefficients are
\beq 
\left( r^2 \tilde{\F}_{r u} \right)^{(0)} &=&  F_{r u}^{(-2)} + [i\Psi^{(1)}, F_{r u}^{(-3)}] + \frac{1}{2} [i\Psi^{(1)}, [i\Psi^{(1)}, F_{r u}^{(-4)}] ]+ [i\Psi^{(2)}, F_{r u}^{(-4)}] + ..., \nonumber \\ 
\left( r^2 \tilde{\F}_{r u} \right)^{(-1)} &=&  F_{r u}^{(-3)} + [i\Psi^{(1)}, F_{r u}^{(-4)}] + \frac{1}{2} [i\Psi^{(1)}, [i\Psi^{(1)}, F_{r u}^{(-5)}] ]+ [i\Psi^{(2)}, F_{r u}^{(-5)}] + ..., \nonumber\\
\left( r^2 \tilde{\F}_{r u} \right)^{(-2)} &=&  F_{r u}^{(-4)} + [i\Psi^{(1)}, F_{r u}^{(-5)}]+ \frac{1}{2} [i\Psi^{(1)}, [i\Psi^{(1)}, F_{r u}^{(-6)}] ]+ [i\Psi^{(2)}, F_{r u}^{(-6)}] + ...,\nonumber
\eeq

Since we take $\Lambda(r,z,\zb) = \sum_{k=0}^{+\infty} r^{k} \Lambda^{(k)}(z,\zb)$, the finite term in the charge is given by,
\be 
\tilde{Q}_\Lambda = \int_{S^2} \tr \left( \sum_{l = 0}\Lambda^{(l)} \left( r^2 \tilde{\F}_{r u} \right)^{(-l)} \right) dS_{S^2}, 
\ee
where $dS_{S^2} = \gamma dz d\zb$. Then, in terms of expansion \eqref{r2_F_term_i}, it is, 
\beq 
\tilde{Q}_\Lambda &=& \int_{S^2} \tr \left( \sum_{l = 0}\Lambda^{(l)} \left(  F_{r u}^{(-2-l)} + [i\Psi^{(1)}, F_{r u}^{(-3-l)}] \right. \right. \nonumber\\
&& \left. \left. + \frac{1}{2} [i\Psi^{(1)}, [i\Psi^{(1)}, F_{r u}^{(-4-l)}] ] + [i\Psi^{(2)}, F_{r u}^{(-4-l)}] + ...\right) \right) dS_{S^2}.
\eeq

For each $l\geq 0$, consider the \textit{extended} charge density associated to $\Lambda^{(l)}$, defined as 
\beq 
\tilde{q}_{\Lambda^{(l)}} &: =&  \tr \left( \Lambda^{(l)} \left(  F_{r u}^{(-2-l)} + [i\Psi^{(1)}, F_{r u}^{(-3-l)}] \right. \right. \nonumber\\
&& +\left. \left. \frac{1}{2} [i\Psi^{(1)}, [i\Psi^{(1)}, F_{r u}^{(-4-l)}] ] + [i\Psi^{(2)}, F_{r u}^{(-4-l)}] + ...\right) \right),  \label{q_tilde_def}
\eeq
and the corresponding charge is denoted by
\be 
\tilde{Q}_{\Lambda^{(l)}} := \int_{S^2}  \tilde{q}_{\Lambda^{(l)}} dS_{S^2},
\ee 
while the total charge is given by
\be \label{total_caharge_as_sum_charges}
\tilde{Q}_{\Lambda} = \int_{S^2} \sum_{l=0}^{+\infty} \tilde{q}_{\Lambda^{(l)}} dS^2.
\ee

Observe that $\tilde{q}_{\Lambda^{(l)}}$ is linear in both $\Lambda^{(l)}$ and the coefficients $F_{r u}^{(-2-i-l)}$, for $i\geq 0$. The approximation up to $j$-th order in $\Psi$ to $\tilde{q}_{\Lambda^{(l)}}$ is defined as  
\be 
\overset{j}{q}_{\Lambda^{(l)}}:=  \tr \left( \Lambda^{(l)} \left(  F_{r u}^{(-2-l)} + [i\Psi^{(1)}, F_{r u}^{(-3-l)}] + ... + \mathfrak{ord}^{(j)}_{i \Psi} [F_{ru}^{(-2-j-l)}] \right) \right),
\ee 
where we have introduced the operator
\be 
\begin{aligned}
\mathfrak{ord}^{(j)}_{i \Psi}=&\sum_{k=1}^j \frac{1}{k!} \sum_{i_1 + ... + i_k = j} ad_{i \Psi^{(i_1)}} ( ... (ad_{i \Psi^{(i_k)}})...)[\cdot] \\
=&\left(e^{ad_{i\psi}} -1 \right)^{(j)}[\cdot] \ ,
\end{aligned}
\ee 
i.e. $\mathfrak{ord}^{(j)}_{i \Psi}$ denotes nested commutators of $i\Psi$ which add up to $\O(r^j)$.

We can re-write the charge $\overset{j}{q}_{\Lambda^{(l)}}$ by observing that $\mathfrak{ord}^{(j)}_\Psi$ is symmetric in $i_1 ,..., i_k$ and using iteratively the identity 
\be 
\tr(A [B,C]) = \tr([-B,A]C),
\ee
as follows,
\be 
\overset{j}{q}_{\Lambda^{(l)}}:=  \tr \left( F_{r u}^{(-2-l)} \Lambda^{(l)} - [i\Psi^{(1)}, \Lambda^{(l)} ] F_{r u}^{(-3-l)} + ... + \mathfrak{ord}^{(j)}_{-i \Psi}  [\Lambda^{(l)}] F_{ru}^{(-2-j-l)} \right).
\ee 
Similarly, we define the $j$-th approximation to the charge $\tilde{Q}_{\Lambda^{(l)}}$ as 
\be 
\overset{j}{Q}_{\Lambda^{(l)}} = \int_{S^2} \overset{j}{q}_{\Lambda^{(l)}} dS_{S^2}
\ee

In what follows we will define a hierarchy of sub$^n$-charges, labeled by $j+l \geq 0$ in the set of charge densities sequences $\{ \{ \overset{j}{q}_{\Lambda^{(l)}} \}_{l\geq 0}\}_{j\geq0}$, such that at each cut-off in the expansion of $\Psi$ we have a closed charge algebra, that approximates $\tilde{q}_{\Lambda^{(l)}}$. Each algebra will correspond to the radiative, linear, quadratic, ..., approximations.

\subsection{Leading Charges from zeroth order in fields}

We start with the zeroth order in the hierarchy, given by only one charge, $\{ \overset{0}{q}_{\Lambda^{(0)}} \}$. It corresponds to the usual charge derived from the radiative data \cite{Strominger:2013lka, Strominger:2017zoo}. It can be written simply as
\be 
\overset{0}{q}_{\Lambda^{(0)}} = \tr \left( \Lambda^{(0)} F_{r u}^{(-2)} \right) \ .
\ee

 This is an infinite dimensional space of charges, parametrized by $\Lambda^{(0)}$\footnote{Recall that $\Lambda^{(0)}$ is a $\mathfrak{g}$-valued function on the sphere, for the Lie algebra $\mathfrak{g}$ of the Lie group $G$.}. The definition is geometrically consistent with imposing the vanishing of terms containing $\Psi$ or any of the coefficients in $\Lambda_+$, eqn. \eqref{simple_lambda_plus}.
 In other words, this corresponds to a vanishing of the rest of the charge aspect approximations,
\be 
\overset{0}{q}_{\Lambda^{(l)}} = 0 \quad \forall l \geq 1.
\ee

\subsection{Sub-leading Charge from up to linear order in fields}

In the first truncation of the hierarchy that contains the sub-leading charge, we only consider linear terms in $\Psi$ or $\Lambda_0$ for the leading charge and the linear in $\Lambda_+$ and $\Psi$-independent terms for the sub-leading.

Again, this is consistent with imposing $\Lambda^{(l)} = 0$ for all $l\geq 2$ and taking only linear variations of the field $\Psi$. Then, 
\beq \label{q_1u_0d}
\overset{1}{q}_{\Lambda^{(0)}} &=& \tr \left( \Lambda^{(0)} \left( F_{r u}^{(-2)} + [i\Psi^{(1)}, F_{r u}^{(-3)}]\right) \right), \\
\overset{0}{q}_{\Lambda^{(1)}} &=& \tr \left( \Lambda^{(1)} F_{r u}^{(-3)} \right).
\eeq
We remark that this linear approximation in $\Psi$ and $\Lambda_+$ is equivalent to the linearization \cite{Campiglia:2021oqz}
\be 
\tilde{\A_\mu} = \A_\mu + D_\mu \Psi.
\ee 
Observe also that the functional dependence of $\overset{0}{q}_{\Lambda^{(1)}}$ on $\F_{r u}^{(-3)}$ is the same as the leading charge $\overset{0}{q}_{\Lambda^{(0)}}$ on $\F_{r u}^{(-2)}$. Thus, we can write the charge $\overset{1}{q}_{\Lambda^{(0)}}$ in the following way,
\be \label{linear_order_relation}
\overset{1}{q}_{\Lambda^{(0)}} = \overset{0}{q}_{\Lambda^{(0)}} - \overset{0}{q}_{[i\Psi, \Lambda^{(0)}]^{(1)}}.
\ee 
thanks to identity $\tr (A[B,C]) = - \tr([B,A]C)$. In \eqref{linear_order_relation}, we have used the notation
\be 
\label{comm_subscript_order}
[i\Psi, \Lambda^{(m)}]^{(n)}\equiv[i\Psi^{(n-m)}, \Lambda^{(m)}]
\ee 
The first term in \eqref{linear_order_relation} is in the zeroth order charge algebra, while the second one is a sub-leading contribution from the first order charge algebra. Notice that we can keep track of which algebra we are referring to with the charge $\overset{j}{q}_{\Lambda^{(l)}}$ by computing $j+l$.

These results can be connected with those in \cite{Campiglia2021}, in view of the recursion formulas found in \autoref{eom_scri_plus}. The details are provided in \autoref{sec_charges_corners}.

\subsection{Sub-sub leading Charge from up to second order in fields}

For the expressions in this subsection, we have up to quadratic terms in $\Psi$, and $\Lambda_+^{(2)}$. The charge aspects are
\beq 
\overset{2}{q}_{\Lambda^{(0)}} &=& \tr \left( \Lambda^{(0)} \left(  F_{r u}^{(-2)} + [i\Psi^{(1)}, F_{r u}^{(-3)}] + \frac{1}{2} [i\Psi^{(1)}, [i\Psi^{(1)}, F_{r u}^{(-4)}] ] + [i\Psi^{(2)}, F_{r u}^{(-4)}] \right) \right), \\
\overset{1}{q}_{\Lambda^{(1)}} &=& \tr \left(\Lambda^{(1)} \left( F_{r u}^{(-3)} + [i\Psi^{(1)}, F_{r u}^{(-4)}]\right) \right), \\
\overset{0}{q}_{\Lambda^{(2)}} &=& \tr \left( \Lambda^{(2)} F_{r u}^{(-4)} \right) .
\eeq

In defining the above charge aspects, we took the same approach as in the sub-leading case by considering a cut-off in the $r$-expansion of the fields, i.e.,
\be 
\Lambda^{(n)} = \Psi^{(n)} = 0 ,\quad \forall n \geq 3,
\ee 
which gives our candidate for the sub$^2$-leading charge. We can write a recursive expression of the charges in terms of the previous ones,  
\beq
\overset{2}{q}_{\Lambda^{(0)}}  &=& \overset{1}{q}_{\Lambda^{(0)}} -  \overset{0}{q}_{[i\Psi, \Lambda^{(0)}]^{(2)}} + \overset{0}{q}_{\frac{1}{2} [i\Psi,[i\Psi, \Lambda^{(0)}]]^{(2)}}, \\
\overset{1}{q}_{\Lambda^{(1)}}  &=& \overset{0}{q}_{\Lambda^{(1)}} -  \overset{0}{q}_{[i\Psi, \Lambda^{(1)}]^{(2)}},
\eeq
where observe that, as before, we are using the convention that $\overset{j}{q}_{\Lambda^{(l)}}$ is in the level $j+l$ of the hierarchy, for some LGT $\Lambda$. 

\subsection{Sub$^n$-leading charges from the truncation up to $n-$th order in the fields}

In view of the previous examples, given $n>2$, the charge aspects are defined recursively as follows,
\beq 
\overset{0}{q}_{\Lambda^{(n)}}  &=& \tr \left( \Lambda^{(n)} F_{r u}^{(-2-n)} \right) , \\
\overset{1}{q}_{\Lambda^{(n-1)}}  &=&  \overset{0}{q}_{\Lambda^{(n-1)}} +  \overset{0}{q}_{[-i\Psi, \Lambda^{(n-1)}]^{(n)}}, \\
&...& \\
\overset{n-1}{q}_{\Lambda^{(1)}}  &=& \overset{n-2}{q}_{\Lambda^{(1)}} + \sum_{k=1}^{n-1} \overset{0}{q}_{\left( \frac{1}{k!} ad^k_{-i\Psi}(\Lambda^{(1)})\right) ^{(n)}},\\
\overset{n}{q}_{\Lambda^{(0)}}  &=& \overset{n-1}{q}_{\Lambda^{(0)}} + \sum_{k=1}^{n} \overset{0}{q}_{\left( \frac{1}{k!} ad^k_{-i\Psi}(\Lambda^{(0)})\right)^{(n)}},
\eeq
 where, in order to be able to display the charges separately, we are identifying, by abuse of notation, that $\Lambda^{(n)}$ is the large gauge transformation whose only coefficient is at $r^n$ and its value is $\Lambda^{(n)}$. For an arbitrary large gauge transformation $\Lambda$, the total charge is given by the sum, cf. \eqref{total_caharge_as_sum_charges}.
 
 This way of writing the charges comes from a \textit{dressing} of the large gauge symmetry parameter by the Stueckelberg field, which can be seen directly from the charge expression \eqref{charge_ext_YM} as 
\be 
\tr \left( \Lambda e^{i \Psi} \F_{ru} e^{-i\Psi} \right) = \tr \left( e^{-i\Psi} \Lambda e^{i \Psi} \F_{ru} \right)
= \tr\left(e^{ad_{-i\Psi}}\Lambda\F_{ru} \right) .
\ee 

Now, trivially, we have that for any $0\leq j \leq n$, 

\be 
\overset{n-j}{q}_{\Lambda^{(j)}}  = \overset{0}{q}_{\Lambda^{(j)}} + \overset{0}{q}_{[-i\Psi,\Lambda^{(j)}]^{(j+1)}} + ...  + \sum_{k=1}^{n-j} \overset{0}{q}_{\left( \frac{1}{k!} ad^k_{-i\Psi}(\Lambda^{(j)})\right)^{(n)}}.
\ee 

Observe that we can rewrite the charges in terms of the exponential operator, simply by working up to its combined $n-$th order in $(\Psi, \Lambda)$, for $1\leq j \leq n$,
\beq 
\overset{0}{q}_{\Lambda^{(n)}}  &=& \tr \left( \Lambda^{(n)} F_{r u}^{(-2-n)} \right)  \\
\overset{n-j}{q}_{\Lambda^{(j)}}  &=& \overset{0}{q}_{[e^{ad_{-i\Psi}}(\Lambda^{(j)})]^{\{ n \}}}, \label{Op_expression_charges}
\eeq
where the superscript $\{ n \}$ in $[e^{ad_{-i\Psi}}(\Lambda^{(j)})]^{\{ n \}}$ denotes that we are considering the expression only up to the $n$-th order in combinations of $\Psi$ and $\Lambda^{(j)}$ (recall that $\Lambda^{(j)}$ has order $j$, fixed). Of course, this is consistent with \eqref{full_ch}, and is a compact expression that contains all the information up to $n$-th order. 

\section{Charge Algebra at each level} \label{charge_algebra}

In this section, we present the closure of each charge algebra in the hierarchy presented in the previous section. At each level of the hierarchy, we show that the charges act canonically, and the phase space has a natural structure. It is a non-trivial check since it has to be deduced from the definitions of each phase space. 

We are assuming we are working in a field-independent frame for the large gauge transformations (i.e., $\delta_{\Lambda_1} \Lambda_2=0$). Otherwise, a simple generalization can be carried by exchanging the bracket by a modified bracket \eqref{mod_bracket}.

Fixing $n>0$, the first question we have to address is how $\overset{n-j}{Q}_{\Lambda^{(j)}}$ acts on the extended phase space, in particular on $\Psi$, for some function on the sphere $\Lambda^{(j)}$ that generates a Large Gauge Transformation of order $j$ in the $r$-expansion. Now, by inspection of \eqref{Op_expression_charges}, we see that the expression involves the coefficients $\Psi^{(1)},...,\Psi^{(n-j)}$ of the Stueckelberg field. Therefore, its action on the phase space must involve only those fields. Moreover, for the action to be canonical, 
\be \label{up_to_n_symplectic_form}
\Omega^{\{n\}}[\{ \overset{n-j}{Q}_{\Lambda^{(j)}} , \cdot \} , \delta] = \delta \overset{n-j}{Q}_{\Lambda^{(j)}},
\ee 
where $\Omega^{\{n\}}$ is the symplectic form at $\Ib^+$ (i.e., the renormalized version of \eqref{symplectic_form_bulk}) up to $n-$th order in $\Psi$. Therefore, since we are allowing for up to $(n-j)$-th order in $\Psi$, then the action of the variation associated to $\{\overset{n-j}{Q}_{\Lambda^{(j)}} , \cdot\}$ on the coefficients of the Stueckelberg field $\Psi$ is given by the $(n-j)$-approximation of the variation by $\Lambda^{(l)}$,
\be \label{canonical_action_charge}
\{\overset{n-j}{Q}_{\Lambda^{(j)}} , \cdot\} [\Psi]= \overset{n-j}{\delta}_{\Lambda^{(j)}} [\Psi],
\ee 
where recall that $ \overset{j}{\delta}:= \sum_{k=0}^j \delta^{[k]}$, as defined in \autoref{sec_extended_phase_space_LC}. Observe that this is a minimal requirement for equation \eqref{up_to_n_symplectic_form} to hold consistently with \eqref{def_int_charge}.

In what follows, we consider only charge densities (the same results hold for the integrated charge).

\begin{figure}
    \centering
    \includegraphics[scale=0.6]{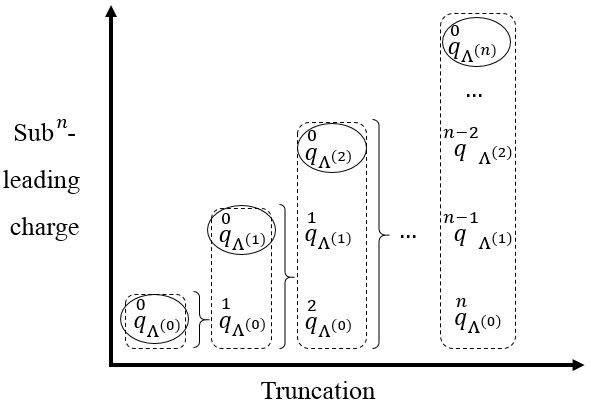}
    \caption{Diagram showing the organization of the subalgebras of charges. Along the horizontal axis, we increase in the power of $r$ involved in $\Psi$, while the vertical axis indicates the sub$^n-$leading charges. In circles, the first  sub$^n-$leading charge for each $n$. The curled brackets indicate the recursion \eqref{Op_expression_charges}.} 
    \label{fing_charge_hierarchy}
\end{figure}

\subsection{Zeroth order Charge algebra}

Recall that in the radiative phase space, we have $\Psi \equiv 0$, and therefore the action of a large gauge transformation $\Lambda$ is simply

\be 
\delta_\Lambda A_z^{(0)} = D^{(0)}_z \Lambda^{(0)}.
\ee

Due to the cut-off of the order in $\Psi$, observe that the action of the charge on the symplectic space is given by the identity,
\be 
\{ \overset{0}{q}_{\Lambda^{(0)}} , \cdot \} = \overset{0}{\delta}_{\Lambda^{(0)}}.
\ee
Here, again, we are identifying the large gauge transformation $\Lambda$ with only one coefficient at $r^0$ with its value, $\Lambda^{(0)}$, in order to ease the notation. Then, it is straightforward to prove that the zeroth-order charge algebra is closed, 
\be 
\{\overset{0}{q}_{\Lambda_1^{(0)}} , \overset{0}{q}_{\Lambda_2^{(0)}} \} = \overset{0}{q}_{- i[\Lambda_1 , \Lambda_2]^{(0)}} .
\ee
Let us denote this algebra as $\mathfrak{p}_0$, generated by all possible $\overset{0}{q}_{\Lambda^{(0)}}$, with $\Lambda^{(0)} : S^2 \rightarrow \R$.

\subsection{First order charge algebra}

Recall that by abuse of notation we are denoting $\Lambda^{(n)}$ as the large gauge transformation whose only coefficient is at $r^n$ and its value is $\Lambda^{(n)}$.

We have the identities,
\beq 
\{\overset{0}{q}_{\Lambda^{(1)}}  , \cdot \} &=& \delta^{[0]}_{\Lambda^{(1)}} , \\ 
\{ \overset{1}{q}_{\Lambda^{(0)}}  , \cdot \} &=& \overset{1}{\delta}_{\Lambda^{(0)}} =  \delta^{[0]}_{\Lambda^{(0)}} + \delta^{[1]}_{\Lambda^{(0)}}
\eeq

First, we compute the bracket for the radiative subalgebra,
\be \nonumber
\begin{aligned}
\{\overset{1}{q}_{\Lambda_1^{(0)}} , \overset{1}{q}_{\Lambda_2^{(0)}} \} &= \overset{1}{\delta}_{\Lambda_1^{(0)}} \tr \left( \Lambda_2^{(0)} F_{r u}^{(-2)} - [i\Psi^{(1)}, \Lambda_2^{(0)}] F_{r u}^{(-3)}\right) \\
&= \tr \left( \Lambda_2^{(0)} \delta^{[1]}_{\Lambda_1^{(0)}} F_{r u}^{(-2)} - [i 
 \overset{1}{\delta}_{\Lambda_1^{(0)}} \Psi^{(1)}, \Lambda_2^{(0)}] F_{r u}^{(-3)} - [i \Psi^{(1)}, \Lambda_2^{(0)}] \overset{1}{\delta}_{\Lambda_1^{(0)}} F_{r u}^{(-3)}\right) \\
&= \tr \left( \Lambda_2^{(0)} [ i \Lambda_1^{(0)}, F_{r u}^{(-2)}] + i [i 
 [ \Psi^{(1)}, \Lambda_1^{(0)}] , \Lambda_2^{(0)}] F_{r u}^{(-3)} \right. \\
&\quad  \left. -[i \Psi^{(1)}, \Lambda_2^{(0)}] [i \Lambda_1^{(0)}, F_{r u}^{(-3)}]  \right), 
\end{aligned}
\ee
where in the first line we used \eqref{q_1u_0d}, and in the last line we used that $\Lambda_1^{(0)}$ has only the zeroth-order coefficient in the $r$-expansion, simplifying \eqref{trans_stueck_perturbative_field}. Then, using identity \eqref{tr_formula}, we have, 

\beq 
\{\overset{1}{q}_{\Lambda_1^{(0)}} , \overset{1}{q}_{\Lambda_2^{(0)}}   \} &=& \tr \left( -i[ \Lambda_1^{(0)}, \Lambda_2^{(0)} ] F_{r u}^{(-2)} + [i \Psi^{(1)}, i [\Lambda^{(0)}_1, \Lambda_2^{(0)}]] F_{r u}^{(-3)} \right), \nonumber \\
&=&\overset{1}{q}_{ -i [\Lambda^{(0)}_1, \Lambda_2^{(0)}]} .\nonumber
\eeq
The previous step can be written in a clear form by observing that 
\be 
[\Lambda_1^{(0)},\Lambda_2^{(0)}] = [\Lambda_1,\Lambda_2]^{(0)}.
\ee

Next, we compute the mixed Poisson bracket between leading and sub-leading charges,

\be
\begin{aligned}
\{\overset{0}{q}_{\Lambda_1^{(1)}} , \overset{1}{q}_{\Lambda_2^{(0)}} \} =& \overset{0}{\delta}_{\Lambda_1^{(1)}} \tr \left( \Lambda_2^{(0)} F_{r u}^{(-2)} - [i\Psi^{(1)}, \Lambda_2^{(0)}] F_{r u}^{(-3)}\right) \\
=& \tr \left( \Lambda_2^{(0)} \overset{0}{\delta}_{\Lambda_1^{(1)}} F_{r u}^{(-2)} - [i 
\overset{0}{\delta}_{\Lambda_1^{(1)}} \Psi^{(1)}, \Lambda_2^{(0)}] F_{r u}^{(-3)} - [i \Psi^{(1)}, \Lambda_2^{(0)}] \overset{0}{\delta}_{\Lambda_1^{(1)}} F_{r u}^{(-3)}\right) \\
=& \tr \left( - i [\Lambda_1^{(1)} , \Lambda_2^{(0)}] F_{r u}^{(-3)} \right) \\
=& \overset{0}{q}_{-i[\Lambda_1^{(1)}, \Lambda_2^{(0)}]}  =\overset{0}{q}_{-i[\Lambda_1, \Lambda_2]^{(1)}}.
\end{aligned}
\ee
To get from the second to the third line we use that in general $\delta_{\Lambda^+} F_{r u}^{(-2-k)} = 0$ and the variation \eqref{zeroth_order_variation}.


Similarly, we do a consistency check for the anti-symmetric bracket, 
\beq 
\{ \overset{1}{q}_{\Lambda_2^{(0)}} ,\overset{0}{q}_{\Lambda_1^{(1)}} \} &=& \overset{1}{\delta}_{\Lambda_2^{(0)}} \tr \left( \Lambda_1^{(1)} F_{r u}^{(-3)} \right), \\
&=& \tr \left( - i \Lambda_1^{(1)} [ F_{r u}^{(-3)}, \Lambda_2^{(0)}] \right), \\
&=& \overset{0}{q}_{-i [\Lambda_2 , \Lambda_1]^{(1)}}.
\eeq
Finally, the sub-leading/sub-leading components of the Poisson bracket vanish,

\be
\{ \overset{0}{q}_{\Lambda_1^{(1)}} ,\overset{0}{q}_{\Lambda_2^{(1)}} \} = {\delta}^{[0]}_{\Lambda_1^{(1)}} \Tr \left( \Lambda_2^{(1)} F_{r u}^{(-3)} \right) = 0,
\ee
since $\overset{0}{\delta}_{\Lambda_1^{(1)}}$ acting on $\Lambda_2^{(1)} $ vanishes, as well as on $ F_{r u}^{(-3)}$.

Summarizing, 

\beq 
\{\overset{1}{q}_{\Lambda_1^{(0)}} , \overset{1}{q}_{\Lambda_2^{(0)}} \} &=& \overset{1}{q}_{-i[\Lambda_1^{(0)},\Lambda_2^{(0)}]} ,\\
\{\overset{0}{q}_{\Lambda_1^{(1)}} , \overset{1}{q}_{\Lambda_2^{(0)}} \} &=& - \{ \overset{1}{q}_{\Lambda_2^{(0)}} ,\overset{0}{q}_{\Lambda_1^{(1)}} \} = \overset{0}{q}_{-i[\Lambda_1 , \Lambda_2]^{(1)}}\\
\{ \overset{0}{q}_{\Lambda_1^{(1)}} ,\overset{0}{q}_{\Lambda_2^{(1)}} \} &=& 0.
\eeq

Therefore, we have a first order Poisson algebra, which we will denote by $\mathfrak{p}_1$, generated by functions $\Lambda^{(0)},\Lambda^{(1)}:S^2 \rightarrow \R$. Observe that $\mathfrak{p}_0 \subset \mathfrak{p}_1$.

\subsection{General case}

In this subsection, we are using the following standard identities,
\beq 
[A,[B,C]]  &=& [[A,B],C] + [B, [A,C]] ,\\
\tr \left( \left[ A, ad_{A_1}(...(ad_{A_n}(B))...) \right] \right) &=&  \tr \bigg( ad_{A_1}( ... ad_{A_n}([A,B])...)  \nonumber\\
&&  + \sum_{k=1}^n ad_{A_1} (... ad_{[A,A_k]} (... ad_{A_n}(B))...). \bigg) \label{Leibnitz_rule_nested}
\eeq  

\subsubsection{Brackets with leading charge}

First, we compute the brackets with the leading charge $\overset{n}{q}_{\Lambda^{(0)}}$. The action of $\delta_{\Lambda^{(0)}}$ on $F_{ru}^{(l)}$ is 
\be 
\delta_{\Lambda^{(0)}} F_{ru}^{(l)} = i[\Lambda^{(0)},F_{ru}^{(l)}],
\ee 
while its action on $\Psi$ is (cf. \eqref{trans_stueck_perturbative_field}) ,
\be 
\overset{n}{\delta}_{\Lambda^{(0)}} \Psi = i [\Lambda^{(0)},\Psi].
\ee
Then, 
\beq 
\overset{n}{\delta}_{\Lambda_1^{(0)}} \overset{n-j}{q}_{\Lambda_2^{(j)}} &=& \overset{n}{\delta}_{\Lambda_1^{(0)}}  \overset{0}{q}_{[e^{ad_{-i\Psi}}(\Lambda_2^{(j)})]^{\{ n \}}} \\
&=& \overset{n}{\delta}_{\Lambda_1^{(0)}} \left( \overset{0}{q}_{\Lambda_2^{(j)}} + \overset{0}{q}_{[-i\Psi,\Lambda_2^{(j)}]^{(j+1)}} +... + \sum_{k=1}^{n-j} \overset{0}{q}_{\left( \frac{1}{k!} ad^k_{-i\Psi}(\Lambda_2^{(j)})\right)^{(n)}} \right) \\
&=& \overset{0}{q}_{-i[\Lambda_1^{(0)},\Lambda_2^{(j)]}} + \overset{0}{q}_{[-i\Psi,-i[\Lambda_1^{(0)},\Lambda_2^{(j)]}]^{(j+1)}} +... + \sum_{k=1}^{n-j} \overset{0}{q}_{\left( \frac{1}{k!} ad^k_{-i\Psi}(-i[\Lambda_1^{(0)},\Lambda_2^{(j)}])\right)^{(n)}},
\eeq
where we use repeatedly \eqref{Leibnitz_rule_nested}.

Then, 
\be 
\overset{n}{\delta}_{\Lambda_1^{(0)}} \overset{n-j}{q}_{\Lambda_2^{(j)}} = \overset{n-j}{q}_{-i[\Lambda_1^{(0)},\Lambda_2^{(j)}]}.
\ee 

\subsubsection{Bracket with sub-leading charge}

Next, we compute the brackets with the sub-leading charge, $\overset{n-1}{q}_{\Lambda^{(1)}} $. In this case, we have
\beq \label{variation_order_1}
\overset{n-1}{\delta}_{\Lambda_1^{(1)}} \overset{n-j}{q}_{\Lambda_2^{(j)}} &=& \overset{n-1}{\delta}_{\Lambda_1^{(1)}} \left(\overset{0}{q}_{\Lambda_2^{(j)}} + \overset{0}{q}_{[-i\Psi,\Lambda_2^{(j)}]^{(j+1)}} +... + \sum_{k=1}^{n-j} \overset{0}{q}_{\left( \frac{1}{k!} ad^k_{-i\Psi}(\Lambda_2^{(j)})\right)^{(n)}}  \right) 
\eeq 
Since $\delta_{\Lambda^+} F_{ru}=0$, the first term vanishes. In particular, 
\beq 
\overset{n-1}{\delta}_{\Lambda_1^{(1)}} \overset{0}{q}_{\Lambda_2^{(n)}} = 0,
\eeq 
which is consistent with the cut-off at order $n$ in $\Psi$. For the second term, we have,
\beq 
\overset{n-1}{\delta}_{\Lambda_1^{(1)}} \overset{0}{q}_{[-i\Psi,\Lambda_2^{(j)}]^{(j+1)}} &=& \overset{n-1}{\delta}_{\Lambda_1^{(1)}} \overset{0}{q}_{[-i\Psi^{(1)},\Lambda_2^{(j)}]} \\
&=& \overset{0}{q}_{-i[\Lambda_1^{(1)},\Lambda_2^{(j)}]}.
\eeq 

As a intermediate step, we show the third term,
\beq 
\overset{n-1}{\delta}_{\Lambda_1^{(1)}} \overset{0}{q}_{[-i\Psi,\Lambda_2^{(j)}]^{(j+2)} + \frac{1}{2}[-i\Psi,[-i\Psi,\Lambda_2^{(j)}]]^{(j+2)}} &=& \overset{n-1}{\delta}_{\Lambda_1^{(1)}} \overset{0}{q}_{[-i\Psi^{(2)},\Lambda_2^{(j)}] + \frac{1}{2}[-i\Psi^{(1)},[-i\Psi^{(1)},\Lambda_2^{(j)}]]} \nonumber\\
&=& \overset{0}{q}_{[-\frac{i}{2}[-i\Psi^{(1)} , \Lambda_1^{(1)}],\Lambda_2^{(j)}] + \frac{1}{2}[-i\Lambda_1^{(1)},[-i\Psi^{(1)},\Lambda_2^{(j)}]]}  \nonumber \\
&& +\overset{0}{q}_{\frac{1}{2}[-i\Psi^{(1)},[-i\Lambda_1^{(1)},\Lambda_2^{(j)}]]} \nonumber\\
&=&  \overset{0}{q}_{[-i\Psi^{(1)} ,-i[\Lambda_1^{(1)},\Lambda_2^{(j)}]] }. \nonumber
\eeq 

In general, we have that each $\Psi^{(1)},...,\Psi^{(n)}$ transforms accordingly to, 
\beq 
\Psi^{(1)} &\mapsto & \Lambda_1^{(1)}, \nonumber\\
\Psi^{(2)} &\mapsto & -\frac{i}{2}[\Psi^{(1)} ,\Lambda_1^{(1)}],\nonumber\\
\Psi^{(3)} &\mapsto & -\frac{i}{2}[\Psi^{(2)} ,\Lambda_1^{(1)}] - \frac{1}{12} [\Psi^{(1)} , [\Psi^{(1)}, \Lambda_1^{(1)}]],\nonumber\\
&& ...,\nonumber
\eeq 
and therefore each term in \eqref{variation_order_1} moves on to the left (using \eqref{Leibnitz_rule_nested} and Bianchi identity). Then:
\begin{equation}
    \overset{n-1}{\delta}_{\Lambda_1^{(1)}} \overset{n-j}{q}_{\Lambda_2^{(j)}} =\begin{cases}
        \overset{n-j-1}{q}_{-i[\Lambda_1^{(1)},\Lambda_2^{(j)}]} \quad \text{if } j<n \\
        \quad 0 \qquad\qquad\qquad\ \, \text{if } j=n
    \end{cases}\;.
\end{equation}

\subsubsection{General expression}

As it was already discussed above, in the general case the action of $\overset{n-k}{q}_{\Lambda^{(k)}}$ on the phase space is
\be 
\{\overset{n-k}{q}_{\Lambda^{(k)} }, \cdot \} = \overset{n-k}{\delta}_{\Lambda^{(k)}}.
\ee

Consider the bracket with $\overset{n-j}{q}_{\Lambda_2^{(j)}}$, for some $0\leq j \leq n$,
\beq \label{variation_order_k}
\overset{n-k}{\delta}_{\Lambda_1^{(k)}} \overset{n-j}{q}_{\Lambda_2^{(j)}} &=& \overset{n-k}{\delta}_{\Lambda_1^{(k)}} \left( \overset{0}{q}_{\Lambda_2^{(j)}} + \overset{0}{q}_{[-i\Psi,\Lambda_2^{(j)}]^{(j+1)}} +... + \sum_{k=1}^{n-j} \overset{0}{q}_{\left( \frac{1}{k!} ad^k_{-i\Psi}(\Lambda_2^{(j)})\right)^{(n)}} \right) 
\eeq 
By the identities in \autoref{sec_ext_phase_space}, each $\Psi^{(1)},...,\Psi^{(n)}$ transforms under $\overset{n-k}{\delta}_{\Lambda_1^{(k)}}$ accordingly to
\beq 
\Psi^{(1)} &\mapsto & 0, \\
\Psi^{(2)} &\mapsto & 0, \\
&& ... \\
\Psi^{(k)} &\mapsto & \Lambda_1^{(k)}, \\
\Psi^{(k+1)} &\mapsto & -\frac{i}{2}[\Psi^{(1)} ,\Lambda_1^{(k)}], \\
\Psi^{(k+2)} &\mapsto & -\frac{i}{2}[\Psi^{(2)} ,\Lambda_1^{(k)}] - \frac{1}{12} [\Psi^{(1)} , [\Psi^{(1)}, \Lambda_1^{(k)}]],\\
&&...,
\eeq 
and therefore each term in \eqref{variation_order_k} moves to the left $k$ steps. Then, we arrive at the following result, for $0\leq j,k \leq n$
\begin{equation}
\label{charge_algebra_up_to_n}
    \{\overset{n-k}{q}_{\Lambda_1^{(k)}} , \overset{n-j}{q}_{\Lambda_2^{(j)}} \} =\begin{cases}
        \overset{n-k-j}{q}_{- i[\Lambda^{(k)}, \Lambda^{(j)}]} \quad \text{if } j+k \leq n \\
        \quad 0 \qquad\qquad\qquad\;\; \text{otherwise}
    \end{cases}\;.
\end{equation}
With this result, we have a closed Poisson algebra $\mathfrak{p}_n$ such that the action of the charges is canonical and we have the following chain
\be 
\mathfrak{p}_0 \subset ... \subset \mathfrak{p}_{n-1} \subset \mathfrak{p}_n .
\ee 
 
\section{Charges as corner terms - comparison with previous results} \label{sec_charges_corners}

In this brief section, we are connecting the recursive formulas in \autoref{eom_scri_plus} with the charges constructed in \autoref{charges}. We show that subsets of our charges match with previous results. In particular, we can match our $n=1$ level with the one in the literature for sub-leading charges in Yang-Mills \cite{Casali:2014xpa,Campiglia:2021oqz}, and the complete hierarchy of charges for the abelian case \cite{Campiglia:2018dyi,Hamada:2018vrw,Li:2018gnc, Peraza:2023ivy}.

We first provide the usual corner interpretation for the leading charge. By the fall-offs in \autoref{u_fall_offs_light_cone_gauge}, $F_{ru}^{(-2)}(u,z,\zb) = F_{ru}^{(-2)} + 0(u^{-\infty})$, we have that,
\beq
\overset{0}{Q}_{\Lambda^{(0)}} = \int_{S^2} \overset{0}{q}_{\Lambda^{(0)}}dS_{S^2} &=&  \int_{S^2}\tr\left( \Lambda^{(0)} F_{r u}^{(-2)} \right) dS_{S^2} \\
&=& \int_{S^2}  \tr \left( \Lambda^{(0)} F_{r u}^{(-2,0)} \right) dS_{S^2}.
\eeq
This is the standard leading charge (cf. \cite{Strominger:2013lka}).  

 Next, we compare the results for the sub-leading charge with the ones obtained in \cite{Campiglia:2021oqz}. We rewrite equations \eqref{q_1u_0d} as follow,
\beq
\overset{0}{q}_{\Lambda^{(1)}} &=& \tr \left( \Lambda^{(1)} F_{r u}^{(-3)} \right), \\
\overset{1}{q}_{\Lambda^{(0)}} &=& \overset{0}{q}_{\Lambda^{(0)}} + \overset{0}{q}_{[-i \Psi^{(1)}, \Lambda^{(0)}]}. \label{rec_q_from_first_order}
\eeq

In the prescription given in \cite{Campiglia:2021oqz}, $\Psi^{(1)}$ acts as the generator of the linear extension for the radiative phase space, 
\be 
\Gamma^{ext}_1 := \{ \tilde{\A}_\mu = \A_\mu + D_\mu \Psi^{(1)}, \quad \A_\mu \in \Gamma^{rad} , \quad  \Psi^{(1)} \in C^{\infty}(S^2) \},
\ee
where $\Gamma^{rad}$ is the radiative phase space. Observe that the leading charge, \eqref{rec_q_from_first_order}, has the same expression as in equation (4.21) in \cite{Campiglia:2021oqz}, by restoring the coefficients $i$ in the reference. For the sub-leading charge, we have to use the recursive relations for $F_{ru}^{(-3)}$.

Let denote $D^-_z := \partial_z - i[A_z^{(0,0)}, \cdot ]$, which is the covariant derivative with respect to the $u\rightarrow -\infty$ limit in $A_z^{(0)}$. Then, $A_z^{(-1)}$ can be written as (cf. \eqref{A_z_-1_computed}),
\be 
A_z^{(-1)}= A_z^{(-1,0)} -\frac{1}{2} D^-_z A_r^{(-2,0)} + \frac{1}{2}u D^-_z(F_{ru}^{(-2,0)} + \gamma^{-1} F_{z\zb }^{(0,0)}) + o(u^{-\infty}).
\ee
From here, a direct computation gives,
\be 
F_{uz}^{(-1)} = \frac{1}{2} D^-_z(F_{ru}^{(-2,0)} + \gamma^{-1} F_{z\zb }^{(0,0)}) + o(u^{-\infty}),
\ee
and analogous for $F_{u\zb}^{(-1)}$. From \eqref{F_ru_3}, we have,
\be 
\partial_u F_{ru}^{(-3)} = \gamma^{-1} D^-_{(z} D^-_{\zb)} (F_{ru}^{(-2,0)} + \gamma^{-1} F_{z\zb }^{(0,0)}) + o(u^{-\infty}).
\ee
This is exactly the integrated (modulus the weight u) quantity in equation (4.17) in \cite{Campiglia:2021oqz}, where the authors use the \textit{integrated} version of the charges, along $\Ib^+$. To relate the integrated charges with the ones here, one should impose the boundary conditions 
\be 
F_{ru}^{(-2)}(u,z,\zb),F_{z\zb }^{(0)}(u,z,\zb) \sim o(u^{-\infty}), \quad \text{as} \quad u \rightarrow + \infty,
\ee
and integrate by parts using $\partial_u^{-1} (u\partial_u X) = u X - \partial_u^{-1} X $. Therefore, we arrive at the same result for the sub-leading charge.

As an interesting corollary of our construction, observe that for the abelian case we can immediately see the coefficients of the field $\Psi$ as the Goldstone modes associated to the sub-leading charges, by constructing a symplectic form which satisfies \eqref{up_to_n_symplectic_form} and such that the modes $\{\Psi^{(i)}\}_{i\geq 1}$ live on the boundary $\Ib^+_-$. In accordance with \cite{Peraza:2023ivy}, for a fixed $n\geq0$ we have to evaluate the abelian version of \eqref{symplectic_form_bulk} at null infinity and after a renormalization in both $r$ and $u$,
\be 
\Omega^{n}(\delta_1 , \delta_2) = \Omega^{rad}(\delta_1 , \delta_2) + \sum_{i=1}^{n}\int_{S^2}  \delta_1 F_{ru}^{(-2-i,0)} \wedge \delta_2 \Psi^{(i)} \gamma dz d\zb,
\ee
where $\Omega^{rad}(\delta_1 , \delta_2) = \int_{\Ib^+}  \delta_1 \partial_u A^{(z} \wedge \delta_2 A^{\zb)} du dz d\zb $ is the usual radiative phase space symplectic form. Of course, the analogue of this symplectic form in Yang-Mills theory also implies \eqref{up_to_n_symplectic_form}, but $F_{ru}^{(-2-i,0)}$ and $\Psi^{(i)}$ are no longer canonical conjugates (due to the action of the variations on $\Psi$).

\section{Relations with infinite-dimensional algebras}
\label{Relations with infinite algebras}

In \cite{Freidel:2023gue}, a subset of the charge aspects of YM theory were shown, up to quadratic order in the creation and annihilation operators, to satisfy the infinite dimensional YM version of the $w_{1+\infty}$ algebra \cite{Strominger:2021mtt}
\be \label{S_inf_alg_YM}
\left[S_{m}^{p,a}(\zb),S_{n}^{q,b}(\zb)\right]
=-i\ f^{ab}_{\ \ c}  S_{m+n}^{p+q-1,c}(\zb)
\ee 
where $1-p\leq m \leq p-1$, $p,q$ are half-integers with $p,q>1$, and $m+p\in \mathbb{Z}$ (similarly for $(q,n)$).

Crucial to find \eqref{S_inf_alg_YM} were the recursion relations for the charge aspects. Working in flat Bondi coordinates for simplicity (see \autoref{YM e.o.m. in flat Bondi coordinates}), and in radial gauge with 
\be 
\A_r=0, \quad \text{and \quad} A_u^{(0)}=0, 
\ee
these relations can be written as\footnote{The difference in sign is due to converting to our conventions for the commutator.}: 
\be 
\label{Freid_rec}
\partial_u \mathcal{R}_s=\partial_z\mathcal{R}_{s-1} 
-i[A_z^{(0)},\mathcal{R}_{s-1} ]=D_z^{(0)}\mathcal{R}_{s-1} 
\ee 
where $\mathcal{R}_s$ are spin charge aspects, the first few given explicitly by:
\be \label{Freid_charge_aspects}
\mathcal{R}_{-1}=F_{\zb u}^{(0)},\quad
\mathcal{R}_{0}= \tfrac{1}{2}\left(F_{ru}^{(-2)}+F_{\zb z}^{(0)}\right),\quad
\mathcal{R}_{1}=F_{rz}^{(-2)}
\ee 
The generators in \eqref{S_inf_alg_YM} are constructed, after a renormalisation procedure, by smearing the charge aspects with a Lie algebra valued function on the sphere\footnote{See \cite{Freidel:2023gue} for details.}. 

First, observe that we can consider $F_{uz}^{(0)}$ as a lowest \textit{weight} field on the sphere, since its limit when $u\rightarrow -\infty$ vanishes, see \autoref{table_fall_off}. The next weight corresponds to $O(1)$, and this is indeed the $u$ fall-off of $F_{ru}^{(-2)}$ and $F_{\zb z}^{(0)}$. Finally $F_{rz}^{(-2)}$ will be quadratic in $u$, as detailed in \autoref{table_fall_off}\footnote{The fall-off in $u$ is the same in both standard Bondi and flat Bondi coordinates.}. 

We note that the recursions above arise as a special case of the recursion relations we give in \autoref{YM e.o.m. in flat Bondi coordinates}. At $s=0$, we have, from \eqref{Freid_rec} and \eqref{Freid_charge_aspects}: 
\be 
\label{Freidel s=0}
\partial_u F_{ru}^{(-2)}+\partial_u F_{\zb z}^{(0)}=2\left(
\partial_z F_{\zb u}^{(0)}-i[A_z^{(0)},F_{\zb u}^{(0)}]\right)
\ee 
This follows from our relation \eqref{Fur_recs flat}, can be rearranged in the form of \eqref{Freidel s=0} as below:  
\begin{equation}
    \begin{aligned}
        \partial_uF_{ur}^{(-2)}&=2\left(\partial_{(z}\partial_uA_{\bar{z})}^{(0)}-i[A_{(z}^{(0)},\partial_uA_{\bar{z})}^{(0)}]\right)\\
        &=2\left(\partial_{z}F_{u\bar z}^{(0)}-i[A_{z}^{(0)},F_{u\bar z}^{(0)}]\right)-\partial_uF_{z\bar z}^{(0)}
    \end{aligned}
\end{equation}
At $s=1$ we have 
\be 
\partial_u F_{rz}^{(-2)}=\tfrac{1}{2}\left(\partial_z F_{ru}^{(-2)}
+\partial_z F_{\zb z}^{(0)}-i[A_z^{(0)},F_{ru}^{(-2)}+F_{\zb z}^{(0)}]\right)
\ee 
This immediately follows from our equation \eqref{F_rz recursion flat} with $n=2$ 
\begin{equation}
    \begin{aligned}
        2\partial_uF_{rz}^{(-2)}&=-\partial_z(F_{ur}^{(-2)}+F_{z\bar z}^{(0)})+i[A_z^{(0)},F_{ur}^{(-2)}+F_{z\bar z}^{(0)}] \ .
    \end{aligned}
\end{equation}

Higher orders are then reached via repeatedly acting with the operator $\partial_u^{-1} \cdot D_z^{(0)}$, see  \eqref{Freid_rec}. We remark that this tower of charges, which are linear in the anti-holomorphic component  $A_{\zb}^{(0)}$ and of arbitrary order in the holomorphic components $A_z^{(0)}$ are closely related to a self-dual subsector of the theory. One could of course also construct the anti-holomorphic tower, by exchanging $z$ and $\zb$
\be \label{Freid_charge_aspects_conj}
\bar{\mathcal{R}}_{-1}=F_{z u}^{(0)},\quad
\bar{\mathcal{R}}_{0}= \tfrac{1}{2}\left(F_{ru}^{(-2)}+F_{z \zb}^{(0)}\right),\quad
\bar{\mathcal{R}}_{1}=F_{r\zb}^{(-2)}
\ee 
and postulating the recursion relation 
\be 
\partial_u \bar{\mathcal{R}}_s=\partial_{\zb}\bar{\mathcal{R}}_{s-1} 
-i[A_{\zb}^{(0)},\bar{\mathcal{R}}_{s-1} ]=D_{\zb}^{(0)}\bar{\mathcal{R}}_{s-1} 
\ee 
Again, the first two steps will follow from our \eqref{Fur_recs flat} and the conjugate of \eqref{F_rz recursion flat}, with all higher orders following from the action of $D_{\zb}$. Of course this (anti-self-dual) subsector will also possess a copy of the infinite-dimensional symmetry algebra. If would be interesting to see whether the commutation of charges between the towers could belong to some  deformation of these infinite algebras. We note that similar considerations apply for the gravity theory, see \cite{Geiller:2024bgf}.

\section{Applications to higher derivative interactions} \label{higher_oerder_der_int}

In this section we present the sub-leading expression when a term containing higher order derivatives of the vector field is present in the Lagrangian. The discussion of this type of interactions follows from the work of Elvang and Jones \cite{Elvang:2016qvq}, and is motivated by the study of loop corrections arising from massive particles. 

It is shown in \cite{Elvang:2016qvq} that a small set of effective (local) operators can modify the soft photon and graviton theorems. This introduces \textit{quasi-universal} terms, in the sense that the soft factors are described in terms of a small set of theory-dependent spin-shifting operators, with a theory-independent kinematic factor. The connection with the asymptotic symmetries was studied by Laddha and Mitra \cite{Laddha:2017vfh}, where they extend the known equivalence between asymptotic symmetries and sub-leading soft photon theorem to the family of effective operators found in \cite{Elvang:2016qvq}. 

A divergent gauge parameter is used in \cite{Laddha:2017vfh} to derive both the universal and quasi-universal terms of \cite{Elvang:2016qvq}. Here we extend the phase space in order to accommodate the symmetry transformation associated with this parameter. As a straightforward exercise, we also present the modifications to the charge from the quasi-universal parts in the sub-leading soft gluon theorem using our prescription for deriving the charges. Starting with the lagrangian from \cite{Elvang:2016qvq,Laddha:2017vfh}, we use our prescription to dress all the fields
\be 
\mathcal{L}[\tilde{\A},\tilde{\phi}, ...] = - \frac{1}{4} \tr \left( \tilde{\F}_{\mu \nu} \tilde{\F}^{\mu \nu} \right) + \mathcal{L}^m_{kin}[\tilde{\phi},...] + \mathcal{L}_{int}[\tilde{\A},\tilde{\phi},...],
\ee
where $\tilde{\phi}$ are the dressed matter fields with suppressed indices. The dressing of the matter fields is done in analogous way as in \eqref{stueck_field_indep}, i.e. we do a finite gauge transformation on $\phi$ and promote the parameter to our Stueckelberg field $\Psi$, resulting in the new field $\tilde{\phi}$. As an example, for a field in the adjoint representation we have
\be 
\phi \mapsto \tilde{\phi} = e^{i \Psi} \phi e^{- i \Psi}.
\ee
The kinematic terms $\mathcal{L}^m_{kin}$ depend only on the matter fields, and the interaction between the gauge field and the matter fields is given by
\be 
\mathcal{L}_{int}[\tilde{\A},\tilde{\phi}, ...] = - \tr \left(  \tilde{\A}^\mu J_\mu^m [\tilde{\A},\tilde{\phi}] \right) + \mathcal{L}_{int}^{HO}[\tilde{\A},\tilde{\phi}, ...],
\ee
where $\mathcal{L}_{int}^{HO}$ denotes the higher order derivative interaction part. The complete list of possible effective field operators that can enter this term is given in \cite{Elvang:2016qvq}. The extended phase space that we are considering is \eqref{extended_phase_space_YM} times the particular phase space coming from the matter fields in $\mathcal{L}^m_{kin}$,
\be 
\Gamma^{\text{ext}}_{\infty} \times \Gamma^{\text{matter} }.
\ee

By taking the variation of the action and computing the Euler-Lagrange equations, we have the general form
\be 
\tilde{D}^{\mu} \tilde{\F}_{\mu \nu}  + \tilde{D}^{\mu} G_{\mu\nu} [\tilde{\A},\tilde{\phi}]= J^m_{\nu}[\tilde{\A},\tilde{\phi}],
\ee
where $\tilde{D}^{\mu}$ is the $\tilde{A}_\mu$ associated gauge derivative and $G_{\mu\nu}[\tilde{\A}, \tilde{\phi}]$ is the term resulting from varying $\mathcal{L}_{int}^{HO}$.

The symplectic potential for a Cauchy slice $\Sigma_t$ is given by 
\be 
\Theta [\delta] = \int_{\Sigma_t} \tr \left( (\tilde{\F}^{\mu\nu} + G^{\mu\nu}[\tilde{\A},\tilde{\phi}]) \delta \tilde{\A}_\nu \right) dS_\mu + \Theta_m [\delta],
\ee
where $\Theta_m$ is the symplectic potential from the matter sector. The symplectic form is
\be 
\Omega [\delta,\delta'] = \int_{\Sigma_t} \tr \left(\delta(\tilde{\F}^{\mu\nu} + G^{\mu\nu}[\tilde{\A},\tilde{\phi}]) \wedge \delta' \tilde{\A}_\nu \right) dS_\mu + \Omega_m [\delta, \delta'],
\ee
with $\Omega_m$ coming from $\Theta_m$. Given an arbitrary gauge transformation $\delta_\Lambda$, its associated charge is computed via the equation, cf. \eqref{def_int_charge}, 
\be 
\delta  \tilde{Q}_\Lambda = \Omega [\delta_\Lambda , \delta].
\ee
In general, the $\delta$-integrability of the previous equation is not guaranteed \cite{Brown:1986ed}, since the specific form of $G[\tilde{\A},\tilde{\phi}]$ depends on the particular effective operator on $\mathcal{L}^{HO}_{int}$. For simplicity, we consider the case
\be 
\mathcal{L}^m_{kin} = - \frac{1}{2} \partial_\mu \phi \partial^\mu \phi ,\quad \mathcal{L}_{int}^{HO} = - \frac{1}{4} \phi \tr \left( \tilde{\F}_{\mu \nu} \tilde{\F}^{\mu \nu} \right),
\ee
where the scalar field $\phi$ is in the trivial representation (i.e., $\tilde{\phi} = \phi$), which could correspond, for example, to a first order effective action of a dilaton coupled to Yang-Mills (see e.g.  \cite{Lavrelashvili:1992cp,Gross:1986mw}). Then, we have a simple expression for $G_{\mu \nu}$,
\be 
G_{\mu \nu} [\tilde{\A},\phi]=  \phi \tilde{\F}_{\mu \nu},
\ee
and also for $\Theta_m[\delta]$,
\be 
\Theta_m [\delta] = \int_{\Sigma_t} \partial^\mu \phi \delta \phi dS_\mu.
\ee
Now, we can use the equations of motion to compute the charge, $\delta$-integrate its expression, and obtain
\be 
\tilde{Q}_\Lambda = \int_{\Sigma_t} \partial_\nu \tr \left( \Lambda (\tilde{\F}^{\mu \nu}  +  G^{\mu\nu}[\tilde{\A},\phi]) \right)   dS_{\mu}.
\ee
Then we have the same expression as in \eqref{full_ch} plus a term which represents the contribution from the higher order derivatives,
\be 
\tilde{Q}^{new}_\Lambda = \int_{\Sigma_t} \partial_\nu  \tr \left( \Lambda  G^{\mu\nu}[\tilde{\A},\phi]  \right)  dS_{\mu}.
\ee
We want to re-obtain the sub-leading corrections, for which we need to consider variations with $\Lambda^{(1)} \neq 0$ and $\Lambda^{(l)} = 0$ for $l\geq 2$, see \cite{Laddha:2017vfh}. Correspondingly, we have to take the linear extension of the phase space with $\Psi^{(1)} \neq 0$ and $\Psi^{(l)}=0$ for $l \geq 2$. The extension allows us to interpret the $\Lambda^{(1)}$ as the parameter of a genuine symmetry transformation, realised canonically on $\Psi^{(1)}$, and acting as
\be 
\delta_{\Lambda^{(0)}} \Psi^{(1)} = -i[\Psi^{(1)}, \Lambda^{(0)}], \quad \delta_{\Lambda^{(1)}} \Psi^{(1)} = \Lambda^{(1)}.
\ee
The radiative fall-off for the scalar field is $\phi = \frac{1}{r} \phi^{(-1)} + o(r^{-1})$, and therefore the first terms in the expansion for $G_{ru}$ are
\be 
G_{ru} = \frac{1}{r^2} \phi^{(-1)} \tilde{\F}^{(-1)}_{ru} + \frac{1}{r^3} \left( \phi^{(-1)} \tilde{\F}^{(-2)}_{ru} + \phi^{(-2)} \tilde{\F}^{(-1)}_{ru} \right) + o(r^{-3}).
\ee
We have two contributions to the new sub-leading charge, from $\Lambda^{(0)}$ and $\Lambda^{(1)}$. Since $\tilde{\F}^{(-1)}_{ru}$ is already linear in $\Psi^{(1)}$, for the $\Lambda^{(1)}$ contribution those terms are discarded. Then, upon renormalization, we are left with 

\be 
Q^{new}_{\Lambda^{(1)}} = \int_{\Ib} \tr  \left( \Lambda^{(1)}  \partial_u (\phi^{(-1)} F^{(-2)}_{ru})  \right)  dS^2,
\ee
\be 
Q^{new}_{\Lambda^{(0)}} = \int_{\Ib} \partial_u \tr  \left( \Lambda^{(0)}  \phi^{(-1)} [i\Psi^{(1)} ,F^{(-2)}_{ru}] \right)  dS^2,
\ee
where we use that $\Lambda^{(1)}$ is independent of $u$. We can identify the first term above as the contribution to the soft theorem from the higher order derivative effective operators found in \cite{Laddha:2017vfh} in electrodynamics, but written in Yang-Mills theory. The explicit map from our expression to that in  \cite{Laddha:2017vfh} is via the equations of motion. The second term above corresponds to a new contribution to the sub-leading charge for Yang-Mills coming from our extended phase space.

We note that effective actions with higher derivative terms also arise when considering loops of the gluon itself, and these were studied in the simplified context of SDYM in \cite{Costello:2021bah,Costello:2022wso,Monteiro:2022nqt}. It would be interesting to apply our formalism to their work. We leave this for future study.

\section{Conclusions} \label{Conclusions}
 
Expanding on our proposal in \cite{Nagy:2024dme}, we have shown how the phase space at null infinity can be enlarged in order to accommodate sub$^n$-leading charges, associated to the soft theorems for Yang-Mills theory. The procedure is independent of gauge and coordinate choices, and we have allowed for a very general fall-off in the expansion coordinate, thus making our procedure potentially applicable to both tree- and loop-level effects.   

We also provided expressions for the equations of motion for Yang-Mills theory in the $r$-expansion, order by order, in standard Bondi coordinates. As an example, we gave the explicit equations in both radial and light-cone gauges using the usual radiative fall-offs compatible with tree-level. In both cases, we showed the existence of a recursive relation for the sub-leading coefficients to all orders in the $r$-expansion for the field strength and the gauge vector in terms of the leading component, which establishes the usual convention of taking $A_z^{(0)}(u,z,\zb)$ as free initial data. Of course, it is not clear whether it is possible to obtain such recursive formulas for other gauge choices. We leave this question for future work.

We generalized the Stueckelberg procedure\footnote{See e.g. \cite{Stueckelberg:1938hvi,Henneaux:1989zc,Nagy:2019ywi,Bansal:2020krz}.} from our previous construction in the context of self-dual Yang-Mills (SDYM) \cite{Nagy:2022xxs} to full Yang-Mills. The Stueckelberg trick has previously been employed in seemingly very different contexts, but the unifying principle is the presence of a local broken symmetry, though in our case this is subtly related to the radial expansion at null infinity. In particular, the Goldstone-like modes that constitute the expansion of the Stueckelberg field naturally serve as the indicators of the power in the radial expansion, since they serve as the target of the action of over-leading large gauge transformations acting on the whole phase space. This, in turn, allowed us to construct an extended phase space that contains a clear hierarchy of subspaces. The tangent subspaces (where variations live) are determined by the successive terms in the perturbative expansion of the characteristic power series generating the so-called Todd polynomial. An interesting insight here is that the canonical action of the charges within the hierarchy \eqref{canonical_action_charge} is controlled by the Bernoulli numbers $B_n$. Since $B_{2k+1}$ are vanishing for $k>0$, it appears that the canonical action in higher levels is controlled by the even sub$^n$-leading charges. We will explore this further in future work. 

The charges are related to the $r$-expansion of the field strength (we assume renormalization, cf. (\cite{Peraza:2023ivy}), and can be computed directly. The charge algebra is closed for each phase space. Using the recursion formulas derived in \autoref{eom_scri_plus}, we compared the formulas of the sub-leading charge with those in the literature \cite{Campiglia:2021oqz, Casali:2014xpa}. The charges can also be constructed via an algebraically simple recursive relation. 

Several interesting future lines of work emerge from our results. On the one hand, the relation between over-leading large gauge transformations and the symplectic structure of the phase space could shed light on the symmetries of the theories, and therefore provide a deep understanding of the classical starting point for a possible quantization of the theory. A natural question is whether this extends to gravity, where work at the first few sub-leading orders already exists in some contexts (e.g., \cite{Cachazo:2014fwa, Campiglia:2021oqz, Freidel:2021dfs, Fuentealba:2022xsz}), see also results in the Newman-Penrose formalism \cite{Geiller:2024bgf}. Our construction is particularly well-suited for gravity, via the finite action of a ``Stueckelberg'' diffeomorphism $\xi$ acting on the metric $g$,
\be 
\breve{g} = e^{\mathcal{L}_\xi} g.
\ee
We expect this extension to be more involved than the Yang-Mills case, due to the non-linearities in both $\xi$ and $g$ in the definitions of the gravitational charges.

 In \cite{Nagy:2022xxs}, the self-dual sector of gravity was considered at all sub-leading orders, and a simple \textit{double copy} dictionary was constructed from Yang-Mills to gravity, for the extended radiative phase space, and the tower of symmetries. A natural question prompted by our results here is whether this can be extended to the full YM and gravitational theories\footnote{For some previous work on double copy relations at null infinity, see also \cite{Campiglia:2021srh,Adamo:2021dfg,Ferrero:2024eva,Mao:2021kxq,Godazgar:2021iae}}.

 On the other hand, understanding the symmetries that govern the properties of the S-matrix is crucial for establishing the conjectured holographic principle for asymptotically flat spacetimes (e.g., \cite{Pasterski:2021raf,Raclariu:2021zjz} and references therein). Recently, studies in some sub-sectors of gravity (e.g. \cite{Freidel:2021ytz}) and in the celestial holography program (e.g. \cite{Guevara:2021abz, Strominger:2021mtt}) have shown the presence of infinite dimensional algebras, known as $w_{1+\infty}$ algebras. It would be interesting to see how these fit in the extended gravity case via our proposal. In this direction, from a geometric point of view and on more speculative grounds, it would be interesting to explore the connection between the action of large diffeomorphisms and the universal corner algebra $\mathfrak{diff}(S^2) \times \mathfrak{gl}(2,\R) \times \R^2  $ \cite{Ciambelli:2022cfr}. In particular, a possible extended graded algebraic structure with the subalgebra $\mathfrak{diff}(S^2) \times \mathfrak{gl}(2,\R)$ as base level could determine the emergence of the $w_{1 + \infty}$ algebra in a clear geometric picture.
 
In the context of scattering amplitudes, the next step is to directly apply the ideas in this article to the calculation of sub$^n$-leading soft theorems at loop level \cite{He:2014bga, Bianchi:2014gla,Sahoo:2018lxl, Pasterski:2022djr, Donnay:2022hkf,Agrawal:2023zea,Choi:2024ygx,Campiglia:2019wxe,AtulBhatkar:2019vcb}, via the Ward identities. If we only consider loops arising from massive matter particles, these can be encoded in higher derivative terms appearing in the effective action \cite{Elvang:2016qvq}. These lead to \emph{quasi-universal} corrections to the sub-leading soft theorem. In QED, these corrections were shown to arise from the Ward identity of an over-leading gauge parameter \cite{Laddha:2017vfh}. We extended this analysis to Yang-Mills, and also provided the extended phase space on which the symmetry responsible for these corrections acts canonically. 

Of course, a more interesting question involves considering loops of the gauge fields themselves. A simplified set-up which holds promise as a starting point for going to arbitrary orders in the soft expansion is the self-dual sector. This also has the benefit of being one-loop exact, for both Yang-Mills and gravity \cite{Bern:1993qk,Mahlon:1993si,Bern:1996ja,Bern:1998xc,Monteiro:2022nqt}, and it preserves the infinite dimensional algebras above at loop-level \cite{Ball:2021tmb,Mago:2021wje,Banerjee:2023jne,Monteiro:2022xwq}. A promising direction here could be to use the results of \cite{Monteiro:2022nqt}, which give a quantum corrected action encoding the effective vertices after loop integration. This is based on the earlier observation in \cite{Costello:2021bah,Costello:2022wso} that the contribution of diagrams with gluons in the loops in SDYM can be cancelled by the tree-level contribution arising from the addition of an axion interaction; interestingly, the interaction term is of the form we studied in \autoref{higher_oerder_der_int}. 

More generally, the study of loop effects from asymptotic symmetries requires the introduction of a formalism for massive particles (see e.g. \cite{Campiglia:2019wxe}), based on hyperbolic coordinates (namely Euclidean AdS$_{3}$) allowing us to approach future/past timelike infinity $i^{\pm}$. We remark that the general procedure in \autoref{sec_ext_phase_space}, by virtue of being coordinate and gauge independent, should be straightforwardly adaptable to massive particles. Finally, for gravity and QED, the leading order (logarithmic) loop correction turns out to be controlled by the same asymptotic symmetry parameter as the sub-leading tree level correction \cite{Sahoo2019,Choi:2024ygx, Campiglia:2019wxe}, which has the potential to simplify the extension of our construction to loop corrections.

\acknowledgments
 
We thank Miguel Campiglia, Marc Geiller and Sam Wikeley for useful discussions. We also want to thank the anonymous referees of the short paper \cite{Nagy:2024dme} for their useful comments, which allowed us to connect our work with different parts of the vast literature in asymptotic symmetries and soft theorems. G.P. is funded by STFC Doctoral Studentship 2023. S.N. is supported in part by STFC consolidated grant T000708. J.P. was partially funded by Fondo Clemente Estable Project FCE\_1\_2023\_1\_175902 and by CSIC Group 883174. Currently funded by a Postdoctoral Fellowship at Concordia University, Montreal.

\appendix

\section{Extended phase space formulae}\label{Extended phase space calculations}
Equation \eqref{LHS_consistency}:
\begin{equation}
\label{deltaLambda_A}
    \begin{aligned}
        \delta_{\breve\Lambda}\tilde{\mathcal{A}}_\mu &= \delta_{\breve\Lambda}(e^{i\breve\Psi}\mathcal{A}_\mu e^{-i\breve\Psi}+ie^{i\breve\Psi}\partial_\mu e^{-i\breve\Psi}) \\
        &= \delta_{\breve\Lambda}(e^{-X}\mathcal{A}_\mu e^X+ie^{-X}\partial_\mu e^X) \\
        &= -\mathcal{O}_X(\delta_{\breve\Lambda} X)e^{-X}\mathcal{A}_\mu e^X+e^{-X}(\delta_{\breve\Lambda}\mathcal{A}_\mu)e^X+e^{-X}\mathcal{A}_\mu e^X\mathcal{O}_X(\delta_{\breve\Lambda} X) \\
        &\quad -i\mathcal{O}_X(\delta_{\breve\Lambda} X)e^{-X}\partial_\mu e^X+ie^{-X}\partial_\mu\left[e^X\mathcal{O}_X(\delta_{\breve\Lambda} X)\right] \\
        &= e^{-X}\big\{\delta_{\breve\Lambda}\mathcal{A}_\mu+\mathcal{A}_\mu e^X\mathcal{O}_X(\delta_{\breve\Lambda} X)e^{-X}-e^X\mathcal{O}_X(\delta_{\breve\Lambda} X)e^{-X}\mathcal{A}_\mu \\
        &\quad \quad\quad +ie^{X}\mathcal{O}_X (\delta_{\breve\Lambda} X)\partial_\mu e^{-X}+i\partial_\mu\left[e^X\mathcal{O}_X(\delta_{\breve\Lambda} X)\right]e^{-X}\big\}e^X \\
        &= e^{i\breve\Psi}\left\{\delta_{\breve\Lambda}\mathcal{A}_\mu+D_\mu\left[e^{-i\breve\Psi}\mathcal{O}_{-i\breve\Psi}(\delta_{\breve\Lambda}\breve\Psi)e^{i\breve\Psi}\right]\right\}e^{-i\breve\Psi}\;,
    \end{aligned}
\end{equation}
where in the second line we defined $X:=-i\breve\Psi$ and we used
\ba
\delta e^{-X} &=& -\mathcal{O}_X(\delta X)e^{-X} \label{delta_e_neg}\\
e^{-X}\partial_\mu e^{X} &=& -(\partial_\mu e^{-X})e^X\;.
\ea
Equation \eqref{RHS_consistency}:
\begin{equation}
\label{hatD_Lambda}
    \begin{aligned}
        \tilde{D}_\mu\breve\Lambda &= \partial_\mu\breve\Lambda-i[\tilde{\mathcal{A}}_\mu,\breve\Lambda] \\
        &= \partial_\mu\breve\Lambda-i[e^{i\breve\Psi}\mathcal{A}_\mu e^{-i\breve\Psi}+ie^{i\breve\Psi}\partial_\mu e^{-i\breve\Psi},\breve\Lambda] \\
        &= e^{i\breve\Psi}\partial_\mu(e^{-i\breve\Psi}\breve\Lambda e^{i\breve\Psi})e^{-i\breve\Psi}-ie^{i\breve\Psi}[\mathcal{A}_\mu,e^{-i\breve\Psi}\breve\Lambda e^{i\breve\Psi}]e^{-i\breve\Psi} \\
        &= e^{i\breve\Psi}D_\mu(e^{-i\breve\Psi}\breve\Lambda e^{i\breve\Psi})e^{-i\breve\Psi}\;.
    \end{aligned}
\end{equation}

Next, we provide some of the formulae which are used throughout the text. The proof of each identity is straightforward. First, the usual trace identities,
\beq 
\tr\left( A [B,C]\right) &=& \tr \left( [A, B] C\right),\label{tr_formula} \\
\tr \left( A ad_B^k (C)\right) &=& (-1)^k \tr \left( ad_B^k (A) C\right) \label{ad_formula}.
\eeq

Next, we provide a few representations for the operator $\G$ when evaluated at $\Av_\mu$, i.e., $\G_{\breve\Psi} (\Av_\mu)$.
\beq
\G_{\breve\Psi} (\Av_\mu) &=& e^{i\breve\Psi} \Av_\mu e^{- i \breve\Psi} + i e^{i\breve\Psi} \partial_\mu e^{- i \breve\Psi} \\
&=& \Av_\mu - D_\mu i\breve\Psi + \frac{1}{2} [D_\mu i\breve\Psi , i\breve\Psi] - \frac{1}{3!} [[D_\mu i\breve\Psi, i\breve\Psi], i \breve\Psi] + \dots \\
&=& \Av_\mu + \sum_{k=1}^{+\infty} \frac{1}{k!} ad^{k-1}_{i\breve\Psi} (D_\mu i\breve\Psi) \\
&=& \Av_\mu + \Op_{-i\breve\Psi }(D_\mu i\breve\Psi)\;.
\eeq

\section{YM e.o.m. in radial expansion} \label{app_YM_eq_r_exp}

In order to study \(\mathcal{T}\) at a generic order \(n\) in the \(1/r\)-expansion, it is useful to introduce the following map:
\begin{equation}
\label{def:projector P}
\begin{aligned}
\mathcal{P}_{-n}\colon C_r^\infty(M)&\longrightarrow C_r^\infty(M)\\
f&\longmapsto r^{-n}f^{(-n)}
\end{aligned}
\end{equation}
It is easy to see that \(\mathcal{P}_{-m}\circ\mathcal{P}_{-n}(f)=\delta_{mn}\mathcal{P}_{-n}(f)\), for all \(m,n\in\mathbb{Z}\) and for all \(f\in C^\infty_r(M)\), meaning that the map \(\mathcal{P}_{-n}\) is a projector. Moreover, as a consequence of the definition of the set \(C^\infty_r(M)\), we have that \(\mathcal{P}_{-n}(f)=0\) for all \(n<0\), which is consistent with equation \eqref{eq:generic tensor expansion}.

\begin{lemma}
    \(\forall m,n\in\mathbb{N}\) and \(\forall f,g\in C^\infty_r(M)\) the following identities hold:
    \begin{align}
    \label{prop:projector_1}
    \mathcal{P}_{-n}(r^{-m}f)&=r^{-m}\mathcal{P}_{m-n}(f)\\
    \label{prop:projector_2}
    \mathcal{P}_{-n}(\partial_r f)&=\partial_r\mathcal{P}_{1-n}(f) \\
    \label{prop:projector_3}
    \mathcal{P}_{-n}(fg)&=\sum_{k\in\mathbb{N}}\mathcal{P}_{k-n}(f)\mathcal{P}_{-k}(g)
    \end{align}
\end{lemma}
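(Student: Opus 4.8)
The plan is to prove all three identities by the same elementary device. Since $\mathcal{P}_{-n}$ is linear and simply extracts the coefficient of $r^{-n}$ in the $1/r$-expansion, and since two elements of $C^\infty_r(M)$ coincide precisely when all their expansion coefficients agree, it suffices in each case to compute the coefficient of $r^{-n}$ on both sides and to check that they match. First I would record this reduction explicitly: for $h\in C^\infty_r(M)$ one has $\mathcal{P}_{-n}(h)=r^{-n}h^{(-n)}$, so the content of each identity is an equality of the coefficients of $r^{-n}$ appearing on its two sides. All subsequent steps are then purely formal manipulations of the expansion coefficients, with no analytic input required.

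For \eqref{prop:projector_1} I would expand $r^{-m}f=\sum_{k}r^{-(m+k)}f^{(-k)}$ and reindex by $j=m+k$ to read off $(r^{-m}f)^{(-n)}=f^{(m-n)}$, so that $\mathcal{P}_{-n}(r^{-m}f)=r^{-n}f^{(m-n)}$; comparing with $r^{-m}\mathcal{P}_{m-n}(f)=r^{-m}\,r^{m-n}f^{(m-n)}=r^{-n}f^{(m-n)}$ closes the case. For \eqref{prop:projector_2} I would differentiate the series termwise, $\partial_r f=\sum_k(-k)\,r^{-k-1}f^{(-k)}$, and isolate the $r^{-n}$ coefficient by setting $k=n-1$, giving $(\partial_r f)^{(-n)}=(1-n)f^{(1-n)}$ and hence $\mathcal{P}_{-n}(\partial_r f)=(1-n)r^{-n}f^{(1-n)}$; the right-hand side gives $\partial_r\mathcal{P}_{1-n}(f)=\partial_r\!\big(r^{1-n}f^{(1-n)}\big)=(1-n)r^{-n}f^{(1-n)}$, which matches. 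Termwise differentiation is legitimate here because each coefficient $f^{(-k)}$ is $r$-independent.

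Identity \eqref{prop:projector_3} is the only one with genuine structure: it states that $\mathcal{P}_{-n}$ turns products into discrete convolutions. I would expand $fg=\sum_{k,l}r^{-(k+l)}f^{(-k)}g^{(-l)}$ and collect the $r^{-n}$ term to obtain $(fg)^{(-n)}=\sum_{k}f^{(-k)}g^{(k-n)}$, then compute on the right $\mathcal{P}_{k-n}(f)\,\mathcal{P}_{-k}(g)=r^{k-n}f^{(k-n)}\cdot r^{-k}g^{(-k)}=r^{-n}f^{(k-n)}g^{(-k)}$, and reindex the summation to identify the two convolutions, both of which sum to $r^{-n}\sum_{a+b=n}f^{(-a)}g^{(-b)}$.

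The main obstacle — really the only point requiring care — is the well-definedness of the infinite sum in \eqref{prop:projector_3} together with the verification that $fg$ and each $\mathcal{P}$-image genuinely lie in $C^\infty_r(M)$. I would handle this by invoking the defining fall-off $f^{(-k)}=g^{(-k)}=0$ for $k<0$: at each fixed order $n$ the convolution has nonvanishing terms only for $0\le k\le n$, so it is a finite sum, and $(fg)^{(-n)}=0$ for $n<0$, confirming $fg\in C^\infty_r(M)$. The same fall-off explains why the summation index in \eqref{prop:projector_3} ranges over $\mathbb{N}$ rather than $\mathbb{Z}$: the terms with $k<0$ vanish identically since $\mathcal{P}_{-k}(g)=0$ there. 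Thus no convergence issue arises, because everything truncates order by order.
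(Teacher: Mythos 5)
Your proof is correct and follows essentially the same route as the paper's: expand each side in powers of $1/r$, extract the coefficient of $r^{-n}$, and reindex. Your added remarks on why the convolution in the product identity truncates to a finite sum (via the fall-off $f^{(-k)}=0$ for $k<0$) are a welcome clarification but do not change the argument.
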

\begin{proof}
    \begin{align}
        \mathcal{P}_{-n}(r^{-m}f)&=\mathcal{P}_{-n}\sum_{k\in\mathbb{Z}}r^{-m-k}f^{(-k)}=r^{-n}f^{(m-n)}=r^{-m}\mathcal{P}_{m-n}(f) \\
        \mathcal{P}_{-n}(\partial_rf)&=\mathcal{P}_{-n}\sum_{k\in\mathbb{Z}}(-k)r^{-k-1}f^{(-k)}=(1-n)r^{-n}f^{(1-n)}=\partial_r\mathcal{P}_{1-n}(f) \\
        \mathcal{P}_{-n}(fg)&=\mathcal{P}_{-n}\sum_{k,l\in\mathbb{Z}}r^{-k-l}f^{(-k)}g^{(-l)}=r^{-n}\sum_{k\in\mathbb{N}}f^{(k-n)}g^{(-k)}=\sum_{k\in\mathbb{N}}\mathcal{P}_{k-n}(f)\mathcal{P}_{-k}(g)
    \end{align}
\end{proof}

Using this lemma, one can prove equation \eqref{eq:YM EOM at every order in general Bondi}:

\begin{proof}
    From \eqref{def:EOM}:
    \begin{align}
        \mathcal{E}_\mu&=g^{\nu\rho}D_\rho\mathcal{F}_{\nu\mu} \\
        &=-2D_{(u}\mathcal{F}_{r)\mu}+D_r\mathcal{F}_{r\mu}+2r^{-2}\gamma^{-1}D_{(z}\mathcal{F}_{\Bar{z})\mu} \\
        &=2\left(-\partial_{(u}\mathcal{F}_{r)\mu}+\Gamma^\rho_{ur}\mathcal{F}_{\rho\mu}+\Gamma^\rho_{\mu(u}\mathcal{F}_{r)\rho}+i[\mathcal{A}_{(u},\mathcal{F}_{r)\mu}]\right) \\
        &\quad +\partial_r\mathcal{F}_{r\mu}-\Gamma^\rho_{rr}\mathcal{F}_{\rho\mu}-\Gamma^\rho_{r\mu}\mathcal{F}_{r\rho}-i[\mathcal{A}_r,\mathcal{F}_{r\mu}] \nonumber \\
        &\quad +2r^{-2}\gamma^{-1}\left(\partial_{(z}\mathcal{F}_{\Bar{z})\mu}-\Gamma^\rho_{z\Bar{z}}\mathcal{F}_{\rho\mu}-\Gamma^\rho_{\mu(z}\mathcal{F}_{\Bar{z})\rho}-i[\mathcal{A}_{(z},\mathcal{F}_{\Bar{z})\mu}]\right) \nonumber\\
        &=-2\left(\partial_{(u}\mathcal{F}_{r)\mu}-i[\mathcal{A}_{(u},\mathcal{F}_{r)\mu}]\right)+r^{-1}(\mathcal{F}_{uz}\delta_{z\mu}+\mathcal{F}_{u\bar{z}}\delta_{\bar{z}\mu}) \\
        &\quad +\partial_r\mathcal{F}_{r\mu}-r^{-1}(\mathcal{F}_{rz}\delta_{z\mu}+\mathcal{F}_{r\bar{z}}\delta_{\bar{z}\mu})-i[\mathcal{A}_r,\mathcal{F}_{r\mu}] \nonumber \\
        &\quad +2r^{-2}\gamma^{-1}\{\partial_{(z}\mathcal{F}_{\Bar{z})\mu}-r\gamma(\mathcal{F}_{u\mu}-\mathcal{F}_{r\mu})-r^{-1}\mathcal{F}_{(z\bar{z})}\delta_{r\mu}-i[\mathcal{A}_{(z},\mathcal{F}_{\Bar{z})\mu}]\} \nonumber\\
        &\quad +r^{-1}(\mathcal{F}_{uz}-\mathcal{F}_{rz})\delta_{z\mu}+r^{-1}(\mathcal{F}_{u\bar{z}}-\mathcal{F}_{r\bar{z}})\delta_{\bar{z}\mu}+(r\gamma)^{-2}(\partial_z\gamma\delta_{z\mu}-\partial_{\bar{z}}\gamma\delta_{\bar{z}\mu})\mathcal{F}_{z\bar{z}} \nonumber \\
        &=-2\left(\partial_{(u}\mathcal{F}_{r)\mu}-i[\mathcal{A}_{(u},\mathcal{F}_{r)\mu}]\right)
        +\partial_r\mathcal{F}_{r\mu}-i[\mathcal{A}_r,\mathcal{F}_{r\mu}]-2r^{-1}\mathcal{F}_{ur}(\delta_{u\mu}+\delta_{r\mu}) \\
        &\quad +2r^{-2}\gamma^{-1}\left(\partial_{(z}\mathcal{F}_{\Bar{z})\mu}-i[\mathcal{A}_{(z},\mathcal{F}_{\Bar{z})\mu}]\right)+(r\gamma)^{-2}(\partial_z\gamma\delta_{z\mu}-\partial_{\bar{z}}\gamma\delta_{\bar{z}\mu})\mathcal{F}_{z\bar{z}} \nonumber \\
    \end{align}
    Thus, \(\forall n\in\mathbb{N}\):
    \begin{align}
        E_\mu^{(-n)}&=r^{n}\mathcal{P}_{-n}(\mathcal{E}_\mu) \\
        &=r^n\left\{-\partial_{u}\mathcal{P}_{-n}(\mathcal{F}_{r\mu})-\partial_{r}\mathcal{P}_{1-n}(\mathcal{F}_{u\mu}-\mathcal{F}_{r\mu})+i\mathcal{P}_{-n}(2[\mathcal{A}_{(u},\mathcal{F}_{r)\mu}]-[\mathcal{A}_r,\mathcal{F}_{r\mu}])\right\} \\
        &\quad -2r^{n-1}\mathcal{P}_{1-n}(\mathcal{F}_{ur})(\delta_{u\mu}+\delta_{r\mu})+2r^{n-2}\gamma^{-1}\left\{\partial_{(z}\mathcal{P}_{2-n}(\mathcal{F}_{\Bar{z})\mu})-i\mathcal{P}_{2-n}([\mathcal{A}_{(z},\mathcal{F}_{\Bar{z})\mu}])\right\} \nonumber \\
        &\quad +r^{n-2}\gamma^{-2}(\partial_z\gamma\delta_{z\mu}-\partial_{\bar{z}}\gamma\delta_{\bar{z}\mu})\mathcal{P}_{2-n}(\mathcal{F}_{z\bar{z}}) \nonumber \\
        &=-\partial_{u}F^{(-n)}_{r\mu}+(n-1)(F^{(1-n)}_{u\mu}-F^{(1-n)}_{r\mu})-2F^{(1-n)}_{ur}(\delta_{u\mu}+\delta_{r\mu}) \\
        &\quad +2\gamma^{-1}\partial_{(z}F^{(2-n)}_{\Bar{z})\mu}+\gamma^{-2}(\partial_z\gamma\delta_{z\mu}-\partial_{\bar{z}}\gamma\delta_{\bar{z}\mu})F^{(2-n)}_{z\bar{z}} \nonumber \\
        &\quad +i\sum_{k=0}^n\left(2[A^{(k-n)}_{(u},F^{(-k)}_{r)\mu}]-[A^{(k-n)}_r,F^{(-k)}_{r\mu}]\right)-2i\gamma^{-1}\sum_{k\in\mathbb{N}}[A^{(2+k-n)}_{(z},F^{(-k)}_{\Bar{z})\mu}]\\
        &=-\partial_{u}F^{(-n)}_{r\mu}+\tilde{\delta}_\mu^\rho(n)F^{(1-n)}_{u\rho}-\tilde{\delta}_\mu^\rho(n)F^{(1-n)}_{r\rho} \\
        &\quad +2\gamma^{-1}\partial_{(z}F^{(2-n)}_{\Bar{z})\mu}+\gamma^{-2}(\partial_z\gamma\delta_{z\mu}-\partial_{\bar{z}}\gamma\delta_{\bar{z}\mu})F^{(2-n)}_{z\bar{z}} \nonumber \\
        &\quad +i\sum_{k=0}^n\left(2[A^{(k-n)}_{(u},F^{(-k)}_{r)\mu}]-[A^{(k-n)}_r,F^{(-k)}_{r\mu}]-2\gamma^{-1}[A^{(k-n)}_{(z},F^{(2-k)}_{\Bar{z})\mu}]\right) \nonumber
    \end{align}
    with \(\tilde{\delta}_\mu^\nu(n)\) as in \eqref{def:delta_tilde}.
\end{proof}


    


\subsection{Bianchi identities}

For the recursive formulas it is useful to have explicitly written Bianchi identities for certain gauge and fall-offs conditions (for example, \eqref{light_cone_gauge} and \eqref{light_cone_falloff}. We can think of Bianchi identities as the vanishing of a $(0,3)-$tensor,
\begin{align}
    \mathcal{B}_{\mu\nu\rho}:=D_\mu\mathcal{F}_{\nu\rho}+\circlearrowleft\;=0\;,
\end{align}
where $\circlearrowleft$ indicates a cyclic permutation of the indices. Then, by defining the $r-$expansion of $\mathcal{B}_{\mu\nu\rho}$ as 
\be 
\mathcal{B}_{\mu\nu\rho} = \sum_{n \in \N} r^{-n} B^{(-n)}_{\mu\nu\rho},
\ee 
we can directly compute Bianchi identities in light-cone gauge at a given order \(n\in\mathbb{N}\):
\begin{align}
    \label{eq:Bianchi light-cone full bondi 1}
    B_{urz}^{(-n)}&=\partial_uF_{rz}^{(-n)}+(n-1)F_{uz}^{(1-n)}+\partial_zF_{ur}^{(-n)} \\
    &\quad +i\sum_{k=0}^{n-2}[A_r^{(k-n)},F_{uz}^{(-k)}]-i\sum_{k=2}^{n}[A_z^{(k-n)},F_{ur}^{(-k)}], \qquad n\geq2 \nonumber\\
    \label{eq:Bianchi light-cone full bondi 2}
    B_{ur\bar{z}}^{(-n)}&=\partial_uF_{r\bar{z}}^{(-n)}+(n-1)F_{u\bar{z}}^{(1-n)}+\partial_{\bar{z}}F_{ur}^{(-n)} \\
    &\quad +i\sum_{k=0}^{n-2}[A_r^{(k-n)},F_{u\bar{z}}^{(-k)}]-i\sum_{k=2}^{n}[A_{\bar{z}}^{(k-n)},F_{ur}^{(-k)}], \qquad n\geq2 \nonumber\\
    \label{eq:Bianchi light-cone full bondi 3}
    B_{uz\bar{z}}^{(-n)}&=\partial_uF_{z\bar{z}}^{(-n)}-\partial_zF_{u\bar{z}}^{(-n)}+\partial_{\bar{z}}F_{uz}^{(-n)}-2i\sum_{k=0}^n[A_{[z}^{(k-n)},F_{\bar{z}]u}^{(-k)}] \\
    \label{eq:Bianchi light-cone full bondi 4}
    B_{rz\bar{z}}^{(-n)}&=(1-n)F_{z\bar{z}}^{(1-n)}-\partial_zF_{r\bar{z}}^{(-n)}+\partial_{\bar{z}}F_{rz}^{(-n)} \\
    &\quad -i\sum_{k=0}^{n-2}[A_r^{(k-n)},F_{z\bar{z}}^{(-k)}]-2i\sum_{k=2}^n[A^{(k-n)}_{[z},F_{\bar{z}]r}^{(-k)}], \qquad n\geq2 \nonumber
\end{align}

Bianchi identities in radial gauge at a given order \(n\in\mathbb{N}\):
\begin{align}
    \label{eq:Bianchi radial gauge 1}
    B_{urz}^{(-n)}&=\partial_uF_{rz}^{(-n)}+(n-1)F_{uz}^{(1-n)}+\partial_zF_{ur}^{(-n)} \\
    \quad &-i\sum_{k=0}^{n-1}[A_u^{(k-n)},F_{rz}^{(-k)}]-i\sum_{k=2}^n[A_z^{(k-n)},F_{ur}^{(-k)}], \qquad n\geq2 \nonumber \\
    \label{eq:Bianchi radial gauge 2}
    B_{ur\bar{z}}^{(-n)}&=\partial_uF_{r\bar{z}}^{(-n)}+(n-1)F_{u\bar{z}}^{(1-n)}+\partial_{\bar{z}}F_{ur}^{(-n)} \\
    \quad &-i\sum_{k=0}^{n-1}[A_u^{(k-n)},F_{r\bar{z}}^{(-k)}]-i\sum_{k=2}^n[A_{\bar{z}}^{(k-n)},F_{ur}^{(-k)}], \qquad n\geq2 \nonumber \\
    \label{eq:Bianchi radial gauge 3}
    B_{uz\bar{z}}^{(-n)}&=\partial_uF_{z\bar{z}}^{(-n)}-\partial_zF_{u\bar{z}}^{(-n)}+\partial_{\bar{z}}F_{uz}^{(-n)} \\
    \quad &-i\sum_{k=0}^{n-1}[A_u^{(k-n)},F_{z\bar{z}}^{(-k)}]-2i\sum_{k=0}^n[A_{[z}^{(k-n)},F_{\bar{z}]u}^{(-k)}] \nonumber \\
    \label{eq:Bianchi radial gauge 4}
    B_{rz\bar{z}}^{(-n)}&=(1-n)F_{z\bar{z}}^{(1-n)}-\partial_zF_{r\bar{z}}^{(-n)}+\partial_{\bar{z}}F_{rz}^{(-n)}-2i\sum_{k=2}^n[A_{[z}^{(k-n)},F_{\bar{z}]r}^{(-k)}], \qquad n\geq2
\end{align}

\section{YM e.o.m. and recursive relations in flat Bondi coordinates}
\label{YM e.o.m. in flat Bondi coordinates}
Metric:
\begin{equation}
    ds^2=-2dudr+2r^2dzd\bar{z}
\end{equation}
Equations of motion:
\begin{equation}
    \mathcal{E}_\mu:=D^\nu\mathcal{F}_{\nu\mu}=\nabla^\nu\mathcal{F}_{\nu\mu}-i[\mathcal{A}^\nu,\mathcal{F}_{\nu\mu}]\in C^\infty_r(M)
\end{equation}
The YM equations of motion in flat Bondi at order \(1/r^n\), with \(n\in\mathbb{N}\), are:
\begin{equation}
    \label{eq:YM EOM at every order in flat Bondi}
    \begin{aligned}
    E_\mu^{(-n)}&=-\partial_{u}F^{(-n)}_{r\mu}+\tilde{\delta}_\mu^\rho(n-1)F^{(1-n)}_{u\rho}+2\partial_{(z}F_{\Bar{z})\mu}^{(2-n)} \\
    &\quad +2i\sum_{k=0}^n\left([A^{(k-n)}_{(u},F^{(-k)}_{r)\mu}]-[A^{(k-n)}_{(z},F^{(2-k)}_{\Bar{z})\mu}]\right),
    \end{aligned}
\end{equation}
where \(\tilde{\delta}_\mu^\nu(n)\) is defined in \eqref{def:delta_tilde}.
\begin{proof}
    From \eqref{def:EOM}:
    \begin{align}
        \mathcal{E}_\mu&=g^{\nu\rho}D_\rho\mathcal{F}_{\nu\mu} \\
        &=-2D_{(u}\mathcal{F}_{r)\mu}+2r^{-2}D_{(z}\mathcal{F}_{\Bar{z})\mu} \\
        &=2\left(-\partial_{(u}\mathcal{F}_{r)\mu}+\Gamma^\rho_{ur}\mathcal{F}_{\rho\mu}+\Gamma^\rho_{\mu(u}\mathcal{F}_{r)\rho}+i[\mathcal{A}_{(u},\mathcal{F}_{r)\mu}]\right) \\
        &\quad +2r^{-2}\left(\partial_{(z}\mathcal{F}_{\Bar{z})\mu}-\Gamma^\rho_{z\Bar{z}}\mathcal{F}_{\rho\mu}-\Gamma^\rho_{\mu(z}\mathcal{F}_{\Bar{z})\rho}-i[\mathcal{A}_{(z},\mathcal{F}_{\Bar{z})\mu}]\right) \nonumber\\
        &=2\left\{-\partial_{(u}\mathcal{F}_{r)\mu}+r^{-1}/2(\mathcal{F}_{uz}\delta_{z\mu}+\mathcal{F}_{u\bar{z}}\delta_{\bar{z}\mu})+i[\mathcal{A}_{(u},\mathcal{F}_{r)\mu}]\right\} \\
        &\quad +2r^{-2}\left\{\partial_{(z}\mathcal{F}_{\Bar{z})\mu}-r\mathcal{F}_{u\mu}-r^{-1}\mathcal{F}_{(z\Bar{z})}\delta_{r\mu}-r/2(\mathcal{F}_{zu}\delta_{z\mu}+\mathcal{F}_{\bar{z}u}\delta_{\bar{z}\mu})-i[\mathcal{A}_{(z},\mathcal{F}_{\Bar{z})\mu}]\right\} \nonumber\\
        &=-2\left(\partial_{(u}\mathcal{F}_{r)\mu}-i[\mathcal{A}_{(u},\mathcal{F}_{r)\mu}]\right)-2r^{-1}\mathcal{F}_{ur}\delta_{r\mu}+2r^{-2}\left(\partial_{(z}\mathcal{F}_{\Bar{z})\mu}-i[\mathcal{A}_{(z},\mathcal{F}_{\Bar{z})\mu}]\right)
    \end{align}
    Thus, \(\forall n\in\mathbb{N}\):
    \begin{align}
        E_\mu^{(-n)}&=r^{n}\mathcal{P}_{-n}(\mathcal{E}_\mu) \\
        &=-r^{n}\left\{\partial_u\mathcal{P}_{-n}(\mathcal{F}_{r\mu})+\partial_r\mathcal{P}_{1-n}(\mathcal{F}_{u\mu})-2i\mathcal{P}_{-n}([\mathcal{A}_{(u},\mathcal{F}_{r)\mu}])\right\} \\
        &\quad -2r^{n-1}\mathcal{P}_{1-n}(\mathcal{F}_{ur})\delta_{r\mu}+2r^{n-2}\left\{\partial_{(z}\mathcal{P}_{2-n}(\mathcal{F}_{\Bar{z})\mu})-i\mathcal{P}_{2-n}([\mathcal{A}_{(z},\mathcal{F}_{\Bar{z})\mu}])\right\} \nonumber\\
        &=-\partial_u F_{r\mu}^{(-n)}+(n-1)F_{u\mu}^{(1-n)}-2F_{ur}^{(1-n)}\delta_{r\mu}+2\partial_{(z}F_{\Bar{z})\mu}^{(2-n)} \\
        &\quad +2i\sum_{k=0}^n[A_{(u}^{(k-n)},F_{r)\mu}^{(-k)}]-2i\sum_{k\in\mathbb{N}}[A_{(z}^{(2+k-n)},F_{\Bar{z})\mu}^{(-k)}] \nonumber\\
        &=-\partial_u F_{r\mu}^{(-n)}+(n-3)F_{ur}^{(1-n)}\delta_{r\mu}+(n-1)F_{uz}^{(1-n)}\delta_{z\mu}+(n-1)F_{u\bar{z}}^{(1-n)}\delta_{\bar{z}\mu} \\
        &\quad +2\partial_{(z}F_{\Bar{z})\mu}^{(2-n)}+2i\sum_{k=0}^n\left([A_{(u}^{(k-n)},F_{r)\mu}^{(-k)}]-[A_{(z}^{(k-n)},F_{\Bar{z})\mu}^{(2-k)}]\right) \nonumber
    \end{align}
\end{proof}
The components are:
\begin{align}
    E_u^{(-n)}&=\partial_u F_{ur}^{(-n)}+2\partial_{(z}F_{\Bar{z})u}^{(2-n)}-i\sum_{k=0}^n\left([A_{u}^{(k-n)},F_{ur}^{(-k)}]+2[A_{(z}^{(k-n)},F_{\Bar{z})u}^{(2-k)}]\right) \\
    E_r^{(-n)}&=(n-3)F_{ur}^{(1-n)}+2\partial_{(z}F_{\Bar{z})r}^{(2-n)}+i\sum_{k=0}^n\left([A_{r}^{(k-n)},F_{ur}^{(-k)}]-2[A_{(z}^{(k-n)},F_{\Bar{z})r}^{(2-k)}]\right) \\
    E_z^{(-n)}&=-\partial_u F_{rz}^{(-n)}+(n-1)F_{uz}^{(1-n)}-\partial_{z}F_{z\Bar{z}}^{(2-n)} \\
    &+i\sum_{k=0}^n\left(2[A_{(u}^{(k-n)},F_{r)z}^{(-k)}]+[A_{z}^{(k-n)},F_{z\Bar{z}}^{(2-k)}]\right) \\
    E_{\bar{z}}^{(-n)}&=-\partial_u F_{r\bar{z}}^{(-n)}+(n-1)F_{u\bar{z}}^{(1-n)}+\partial_{\bar{z}}F_{z\Bar{z}}^{(2-n)} \\
    &+i\sum_{k=0}^n\left(2[A_{(u}^{(k-n)},F_{r)\bar{z}}^{(-k)}]-[A_{\bar{z}}^{(k-n)},F_{z\Bar{z}}^{(2-k)}]\right)
\end{align}
The first three orders are:
\begin{align}
    E_\mu^{(0)}&=-\partial_u F_{r\mu}^{(0)}+2i[A_{(u}^{(0)},F_{r)\mu}^{(0)}] \\
    E_\mu^{(-1)}&=-\partial_u F_{r\mu}^{(-1)}-2F_{ur}^{(0)}\delta_{r\mu}+2i\left([A_{(u}^{(-1)},F_{r)\mu}^{(0)}]+[A_{(u}^{(0)},F_{r)\mu}^{(-1)}]\right) \\
    E_\mu^{(-2)}&=-\partial_u F_{r\mu}^{(-2)}+\tilde{\delta}_\mu^\rho(1) F_{u\rho}^{(-1)}+2\partial_{(z}F_{\Bar{z})\mu}^{(0)}-2i[A_{(z}^{(0)},F_{\Bar{z})\mu}^{(0)}]+2i\sum_{k=0}^2[A_{(u}^{(k-n)},F_{r)\mu}^{(-k)}]
\end{align}

\subsection{Light-cone gauge}
E.o.m. in light-cone gauge \eqref{light_cone_gauge} with fall-off \eqref{light_cone_falloff}:
\begin{align}
    \label{eq:EOM flat Bondi light-cone u component}
    E_u^{(-n)}&=\partial_u F_{ur}^{(-n)}+2\partial_{(z}F_{\Bar{z})u}^{(2-n)}-2i\sum_{k=0}^{n-2}[A_{(z}^{(2+k-n)},F_{\Bar{z})u}^{(-k)}], \qquad n\geq2 \\
    \label{eq:EOM flat Bondi light-cone r component}
    E_r^{(-n)}&=(n-3)F_{ur}^{(1-n)}+2\partial_{(z}F_{\Bar{z})r}^{(2-n)} \\
    &\quad +i\sum_{k=2}^{n-2}\left([A_{r}^{(k-n)},F_{ur}^{(-k)}]-2[A_{(z}^{(2+k-n)},F_{\Bar{z})r}^{(-k)}]\right), \qquad n\geq4  \nonumber \\
    \label{eq:EOM flat Bondi light-cone z component}
    E_z^{(-n)}&=-\partial_u F_{rz}^{(-n)}+(n-1)F_{uz}^{(1-n)}-\partial_{z}F_{z\Bar{z}}^{(2-n)} \\
    &\quad +i\sum_{k=0}^{n-2}\left([A_{r}^{(k-n)},F_{uz}^{(-k)}]+[A_{z}^{(2+k-n)},F_{z\Bar{z}}^{(-k)}]\right), \qquad n\geq2 \nonumber \\
    \label{eq:EOM flat Bondi light-cone zb component}
    E_{\bar{z}}^{(-n)}&=-\partial_u F_{r\bar{z}}^{(-n)}+(n-1)F_{u\bar{z}}^{(1-n)}+\partial_{\bar{z}}F_{z\Bar{z}}^{(2-n)} \\
    &\quad +i\sum_{k=0}^{n-2}\left([A_{r}^{(k-n)},F_{u\bar{z}}^{(-k)}]-[A_{\bar{z}}^{(2+k-n)},F_{z\Bar{z}}^{(-k)}]\right), \qquad n\geq2 .\nonumber
\end{align}

\textbf{Recursive formula for the gauge vector in Light-cone gauge:} Starting from the equations above, it is possible to find recursive formulas for $A_\mu^{(-n)}$ by following the very same steps described in paragraph \ref{subsubsec:recursive_formula_A_light_cone_full}. Since the comments and considerations presented there still hold, we simply list the results:
\begin{align}
    A_r^{(-2)}&=2\partial_u^{-1}\left(\partial_{(z}A_{\bar{z})}^{(0)}+i\partial_u^{-1}[\partial_uA_{(z}^{(0)},A_{\bar z)}^{(0)}]\right) \\
    A_r^{(-n)}&=\frac{1}{2-n}\partial_u^{-1}\left[2\partial_{(z}F_{\bar{z})r}^{(1-n)}+i\sum_{k=2}^{n-1}\left([A_r^{(k-n-1)},F_{ur}^{(-k)}]-2[A_{(z}^{(1+k-n)},F_{\bar{z})r}^{(-k)}]\right)\right], \quad n\geq3 \\
    A_z^{(-n)}&=\frac{1}{2n}\bigg\{-\partial_zA_r^{(-n-1)}+i\sum_{k=2}^{n+1}[A_z^{(k-n-1)},A_r^{(-k)}] \\
        &\quad +\partial_u^{-1}\bigg[\partial_zF^{(1-n)}_{z\bar{z}}-i\sum_{k=0}^{n-1}\left([A^{(k-n-1)}_{r},\partial_uA_z^{(-k)}]+[A^{(1+k-n)}_{z},F^{(-k)}_{z\Bar{z}}]\right)\bigg]\bigg\}, \quad n\geq1 \nonumber
\end{align}
that, in a more compact form, read
\begin{align}
A_r^{(-2)}&=-2\partial_u^{-2}\left(D_{(z}\mathcal{F}_{\bar z)u}\right)^{(0)} \\
    A_r^{(-n)}&=\frac{1}{2-n}\partial_u^{-1}\left[2\left(D_{(z}\mathcal{F}_{\bar{z})r}\right)^{(1-n)}+i\left([\mathcal{A}_r,\partial_u\mathcal{A}_r]\right)^{(-n-1)}\right], \qquad n\geq3 \\
    A_z^{(-n)}&=-\frac{1}{2n}\big\{(D_z\mathcal{A}_r)^{(-n-1)}-\partial_u^{-1}\big((D_z\mathcal{F}_{z\bar{z}})^{(1-n)}-i([\mathcal{A}_r,\partial_u\mathcal{A}_z])^{(-n-1)}\big)\big\}, \qquad n\geq1.
\end{align}

\subsection{Radial gauge}
E.o.m. in radial gauge \eqref{radial_gauge} with fall-off \eqref{radial_fall_off}:
\begin{align}
    \label{eq:EOM flat Bondi radial u component}
    E_u^{(-n)}&=\partial_u F_{ur}^{(-n)}+2\partial_{(z}F_{\Bar{z})u}^{(2-n)} \\
    &\quad -i\sum_{k=0}^{n-1}[A_{u}^{(k-n)},F_{ur}^{(-k)}]-2i\sum_{k=0}^{n-2}[A_{(z}^{(2+k-n)},F_{\Bar{z})u}^{(-k)}], \qquad n\geq2 \nonumber\\
    \label{eq:EOM flat Bondi radial r component}
    E_r^{(-n)}&=(n-3)F_{ur}^{(1-n)}+2\partial_{(z}F_{\Bar{z})r}^{(2-n)}-2i\sum_{k=2}^{n-2}[A_{(z}^{(2+k-n)},F_{\Bar{z})r}^{(-k)}], \qquad n\geq4 \\
    \label{eq:EOM flat Bondi radial z component}
    E_z^{(-n)}&=-\partial_u F_{rz}^{(-n)}+(n-1)F_{uz}^{(1-n)}-\partial_{z}F_{z\Bar{z}}^{(2-n)} \\
    &\quad +i\sum_{k=0}^{n-1}[A_{u}^{(k-n)},F_{rz}^{(-k)}]+i\sum_{k=0}^{n-2}[A_{z}^{(2+k-n)},F_{z\Bar{z}}^{(-k)}], \qquad n\geq2 \nonumber\\
    \label{eq:EOM flat Bondi radial zb component}
    E_{\bar{z}}^{(-n)}&=-\partial_u F_{r\bar{z}}^{(-n)}+(n-1)F_{u\bar{z}}^{(1-n)}+\partial_{\bar{z}}F_{z\Bar{z}}^{(2-n)} \\
    &\quad +i\sum_{k=0}^{n-1}[A_{u}^{(k-n)},F_{r\bar{z}}^{(-k)}]-i\sum_{k=0}^{n-2}[A_{\bar{z}}^{(2+k-n)},F_{z\Bar{z}}^{(-k)}], \qquad n\geq2 .\nonumber
\end{align}

\textbf{Recursive formula for the gauge vector in radial gauge:} Following the same line of reasoning of paragraph \ref{subsubsec:recursion_radial_full_bondi} we obtain
\begin{align}
    A_u^{(-1)}&=2\partial_{(z}A_{\bar{z})}^{(0)}+2i\partial_u^{-1}[\partial_uA_{(z}^{(0)},A_{\bar z)}^{(0)}] \nonumber\\
    A_u^{(-n)}&=-\frac{2}{n}\left(\partial_{(z}A_{\Bar{z})}^{(1-n)}+\sum_{k=1}^{n-1}\frac{ik}{1-n}[A^{(1+k-n)}_{(z},A_{\Bar{z})}^{(-k)}]\right), \quad n\geq2 \nonumber \\
    A_z^{(-n)}&=\frac{1}{2}\partial_u^{-1}\bigg(\partial_{z}A_{u}^{(-n)}+\frac{1}{n}\partial_zF^{(1-n)}_{z\bar{z}}-\frac{i}{n}\sum_{k=1}^{n}[A_z^{(k-n)},(2n-k)A_u^{(-k)}+F^{(1-k)}_{z\Bar{z}}]\bigg), \quad n\geq1. \nonumber
\end{align}
Some recursive formulas for the field strength are
\begin{align}
   \label{Fur_recs flat}   F_{ur}^{(-2)}&=2\left(\partial_{(z}A_{\bar{z})}^{(0)}+i\partial_u^{-1}[\partial_uA_{(z}^{(0)},A_{\bar z)}^{(0)}]\right) \\
    F_{ur}^{(-n)}&=-2\left(\partial_{(z}A_{\Bar{z})}^{(2-n)}+\sum_{k=1}^{n-2}\frac{ik}{2-n}[A^{(2+k-n)}_{(z},A_{\Bar{z})}^{(-k)}]\right), \qquad n\ge3 \nonumber \\
    \label{F_rz recursion flat}
    F_{rz}^{(-n)}&=-\frac{1}{2}\partial_u^{-1}\left(\partial_z(F_{ur}^{(-n)}+F_{z\bar z}^{(2-n)})-i\sum_{k=2}^n[A_z^{(k-n)},\frac{2n-k-1}{k-1}F_{ur}^{(-k)}+F_{z\bar z}^{(2-k)}]\right), \quad n\geq2. 
\end{align}
The first two equations above can be recast in the more compact form
\begin{align}
F_{ur}^{(-2)}&=-2\partial_u^{-1}\left(D_{(z}\mathcal{F}_{\bar z)u}\right)^{(0)} \\
    F_{ur}^{(-n)}&=\frac{2}{2-n}\left(D_{(z}\mathcal{F}_{\bar z)r}\right)^{(1-n)}, \qquad n\geq3.
\end{align}
Finally, it is easy to prove that
\begin{equation}
    F_{uz}^{(-n)}=\frac{1}{n}\left[(D_u\mathcal{F}_{rz})^{(-n-1)}+(D_z\mathcal{F}_{z\bar z})^{(1-n)}\right].
\end{equation}

\section{Sub$^n$-leading charge $\I^+$ limit} \label{sub_n_leading_charge_limit}

In this appendix, we will explicitly show how the charges defined on a spatial Cauchy slice can be ``pushed'' to $\Ib^+$ as depicted in \autoref{scri_plus_diag}, via the limiting procedure $t \rightarrow +\infty$ at fixed $u$ and a renormalization on $t$.

Start with the charge defined in the standard way,
\be 
\begin{aligned}
\tilde{\rho}_\lambda
&=-\partial_\mu\tr\left(\sqrt{g}\lambda\mathcal{F}^{t\mu}\right)\\
&=-\partial_\mu\tr\left(\sqrt{g}\lambda\left(\mathcal{F}^{r\mu}+\mathcal{F}^{u\mu} \right) \right)\\
&=\tr\left\{\gamma \left[\partial_r(r^2\lambda\F_{ur}) 
-\partial_u(r^2\lambda\F_{ur})\right]
+r^2 \partial_z(\gamma\lambda\F_u^{\ z})
+r^2 \partial_{\zb}(\gamma\lambda\F_u^{\ \zb})\right\},
\end{aligned}
\ee
where $\lambda$ is the large gauge parameter. We will derive the relation between $\lambda$ and our $\Lambda$ (see e.g. \eqref{simple_lambda_plus}) below.

The last two terms will drop out when integrating over the sphere. Let us then normalise by a factor of $\gamma$ and define our density as 
\be
\label{eq:normalized_charge}
\rho=\tr\left[\partial_r(r^2\lambda\F_{ur}) 
-\partial_u(r^2\lambda\F_{ur})\right]
\ee
Assume a tree-level type fall-off for $\lambda$ and $\F_{ur}$:
\be
\label{eq:lambda_F_falloff}
\lambda(r,u,z,\zb) =\sum_{k=0}^{n}r^k \lambda^{(k)} (u,z,\zb)\ ,\qquad 
\F_{ur}=\sum_{m=0}^{\infty}\frac{F_{ur}^{(-2-m)}(u,z,\zb) }{r^{2+m}}
\ee 
then, keeping only the positive powers of $r$ in $\rho$, we have
\be
\label{eq:rho}
\begin{aligned}
\rho&=\tr\left[\partial_r\left(\sum_{k\geq m\geq 0}^n\lambda^{(k)}F_{ur}^{(-2-m)}r^{k-m}\right)-\partial_u\left(\sum_{k\geq m\geq 0}^n\lambda^{(k)}F_{ur}^{(-2-m)}r^{k-m}\right)\right]\\
&=\tr\sum_{k\geq m\geq 0}^n (k-m)\lambda^{(k)}
F_{ur}^{(-2-m)}r^{k-m-1}
-\tr\sum_{k\geq m\geq 0}^n\partial_u\left(\lambda^{(k)}F_{ur}^{(-2-m)} \right)r^{k-m}
\end{aligned}
\ee
Now we recall that $r=t-u$. We plug this into the above, and take 
the limit $t\to\infty$, keeping $u$ fixed. 

We discard the terms divergent in $t$, as they correspond to sub$^k$-leading theorems, for $k< n$ (by virtue of Lemma \autoref{lemma:tdivergent terms}, see \autoref{subsec:t divergent terms}). These terms can be written as contributing to the symplectic potential as total variations of the Lagrangian or a partial divergence ($\partial^i f_i$ over spatial indexes). These types of terms were shown to be renormalizable, see e.g. \cite{Peraza:2023ivy, Freidel:2019ohg}. Thus, we are left with
\be
\label{eq:rho_finite}
\begin{aligned}
\rho_{finite}&=\tr\sum_{k\geq m\geq 0}^n
(k-m)\lambda^{(k)}F_{ur}^{(-2-m)}(-1)^{k-m-1}u^{k-m-1} \\
&\quad -\tr\sum_{k\geq m\geq 0}^n \partial_u\left(\lambda^{(k)}F_{ur}^{(-2-m)}\right)(-1)^{k-m}u^{k-m}\\
&=\partial_u\tr \left(\sum_{k\geq m\geq 0}^n \lambda^{(k)}F_{ur}^{(-2-m)}(-1)^{k-m-1}u^{k-m} \right)\\
&=\partial_u\tr \left(\sum_{m=0}^n\left(\sum_{k=m}^n
\lambda^{(k)}(-1)^{k-m-1}u^{k-m} 
\right) F_{ur}^{(-2-m)} \right)
\end{aligned}
\ee 
Then we take our charge density constructed in \autoref{charges} and write it as a charge density on $\I^+$, keeping only the 0th order in the Stueckelberg fields:
\be 
\overset{0}{q}_{\I^+}=\partial_u\tr\left(\sum_{m=0}^n \Lambda^{(m)} F_{ru}^{(-2-m)}\right)
\ee 
Then we can match the above via
\be 
\Lambda^{(m)}=\sum_{k=m}^n
\lambda^{(k)}(-1)^{k-m}u^{k-m}
\ee
At sub-leading order (i.e. $n=1$), we then have
\be 
\Lambda^{(0)}=\lambda^{(0)}-u\lambda^{(1)},\qquad
\Lambda^{(1)}=\lambda^{(1)}
\ee 
The above gives the explicit relation between our definition of the large gauge parameter, and that in \cite{Campiglia2016a}, working to first sub-leading order.

\subsection{$t$-divergent terms in the charge}
\label{subsec:t divergent terms}
\begin{lemma}
\label{lemma:tdivergent terms}
    Let $\rho$ be the charge \eqref{eq:normalized_charge} with gauge parameter $\lambda$ and field strength $\mathcal{F}_{ur}$ as in \eqref{eq:lambda_F_falloff}. Moreover, assume $\lambda^{(k)}\propto u^{n-k}$ for all $k\in\{0,\dots,n\}$. Then, for each integer $s\in\{0,\dots,n\}$, the coefficient of $t^s$ in the divergent part of $\lim_{t\rightarrow\infty}\rho$ is equal to the finite part of the charge in a sub$^{(n-s)}$-leading theorem.
\end{lemma}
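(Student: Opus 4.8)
The plan is to collapse the full charge density $\rho$ into a single total $u$-derivative and then read off its $t$-expansion coefficient by coefficient. Reading \eqref{eq:normalized_charge} as $\rho=\partial_r\Phi-\partial_u\Phi$ with $\Phi:=\tr(r^2\lambda\F_{ur})$, I would pass to coordinates $(t,u)$ with $t=r+u$ and set $\Psi(t,u):=\Phi(t-u,u)$. The chain rule gives $\partial_r\Phi|_u=\partial_t\Psi|_t$ and $\partial_u\Phi|_r=\partial_t\Psi|_u+\partial_u\Psi|_t$, so that
\be
\rho=-\,\partial_u\Psi\big|_t=-\,\partial_u\Big[\tr\sum_{k=0}^{n}\sum_{m\geq 0}\lambda^{(k)}F_{ur}^{(-2-m)}(t-u)^{k-m}\Big]_t .
\ee
This is the crucial structural simplification: the entire $t$-dependence sits inside the powers $(t-u)^{k-m}$, and $\partial_u$ at fixed $t$ commutes with extracting a fixed power of $t$.

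Next I would expand $(t-u)^{k-m}=\sum_j\binom{k-m}{j}t^j(-u)^{k-m-j}$ and collect the coefficient of $t^s$; only the terms with $k-m\geq s$ survive, so the field-strength sum automatically truncates at $m\leq n-s$. This gives
\be
[\rho]_{t^s}=-\,\partial_u\,\tr\sum_{\substack{0\leq m\leq k-s\\ s\leq k\leq n}}\lambda^{(k)}F_{ur}^{(-2-m)}\binom{k-m}{s}(-u)^{k-m-s}.
\ee
At this point I would invoke the hypothesis $\lambda^{(k)}\propto u^{n-k}$, writing $\lambda^{(k)}=\mu^{(k)}(z,\zb)\,u^{n-k}$. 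The explicit $u$-power multiplying each $F_{ur}^{(-2-m)}$ then becomes $(n-k)+(k-m-s)=n-m-s$, \emph{independent of $k$}, which is precisely what makes the sum collapse. Grouping by $m$ (so $0\leq m\leq n-s$ and $m+s\leq k\leq n$) yields
\be
[\rho]_{t^s}=\partial_u\,\tr\sum_{m=0}^{n-s}\hat\Lambda^{(m)}\,F_{ur}^{(-2-m)},\qquad
\hat\Lambda^{(m)}=-\,u^{\,(n-s)-m}\sum_{k=m+s}^{n}\mu^{(k)}\binom{k-m}{s}(-1)^{k-m-s}.
\ee

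The key point is that this is exactly the structural form of the finite part of a charge in a sub$^{(n-s)}$-leading theorem: a single $u$-derivative of the coefficients $F_{ur}^{(-2-m)}$ with $0\leq m\leq n-s$, smeared against parameters $\hat\Lambda^{(m)}$ that are homogeneous in $u$ of degree $(n-s)-m$ — i.e.\ the $s=0$ computation \eqref{eq:rho_finite} with $n$ replaced by $n-s$. To complete the identification with a genuine sub$^{(n-s)}$ parameter, I would exhibit $\hat\lambda^{(k')}\propto u^{(n-s)-k'}$ reproducing these $\hat\Lambda^{(m)}$ via $\hat\Lambda^{(m)}=\sum_{k'=m}^{n-s}\hat\lambda^{(k')}(-1)^{k'-m}u^{k'-m}$; since this relation is triangular in $m$ (the equation at $m=n-s$ fixes $\hat\lambda^{(n-s)}$, then $m=n-s-1$ fixes $\hat\lambda^{(n-s-1)}$, and so on) it is uniquely solvable, and the homogeneity of the original data reproduces itself at the lower level. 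The main obstacle I anticipate is purely combinatorial bookkeeping: tracking the binomial factors and the $k$-summation ranges when isolating $[\rho]_{t^s}$, and verifying cleanly that it is the homogeneity hypothesis alone that renders the surviving $u$-power $k$-independent — without it the coefficient of $t^s$ would not collapse onto a single lower charge. The boundary cases $s=0$ (which must reproduce \eqref{eq:rho_finite}) and $s=n$ (which must reproduce the leading charge) furnish useful consistency checks.
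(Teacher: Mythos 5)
Your proposal is correct and follows essentially the same route as the paper's proof: rewriting $\rho$ as a total $\partial_u$-derivative at fixed $t$, binomially expanding $(t-u)^{k-m}$ to isolate the $t^s$ coefficient, using the homogeneity $\lambda^{(k)}\propto u^{n-k}$ to make the surviving $u$-power $k$-independent, and then solving the resulting triangular system to exhibit a genuine sub$^{(n-s)}$ gauge parameter (the paper's system \eqref{eq:solution_system_tildefnminuss}). The only differences are cosmetic: you make the total-derivative structure and the role of the homogeneity hypothesis more explicit up front, while the paper carries the same combinatorics through its $\tilde f_{n,s,m}$ notation.
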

\begin{proof}
    In order to prove this result, it is useful to introduce some notation: we add the superscript $[n]$ to the charge in \eqref{eq:normalized_charge} and to the gauge parameter in \eqref{eq:lambda_F_falloff}, to show explicitly that they refer to a sub$^n$-theorem:
    \begin{align}
        \rho^{[n]}=\partial_r\tr(r^2\lambda^{[n]}\F_{ur})-\partial_u\tr(r^2\lambda^{[n]}\F_{ur}) \ , \qquad
        \lambda^{[n]}=\sum_{k=0}^{n}r^k \lambda^{[n](k)} \ .
    \end{align}
    By assumption, the coefficients $\lambda^{[n](k)}$ in the expansion of $\lambda^{[n]}$ have the form
    \begin{equation}
        \lambda^{[n](k)}(u,z,\bar z)=f_{n,k}(z,\bar z)u^{n-k} \ ,
    \end{equation}
    where $f_{n,k}$ are arbitrary functions on the sphere. From equation \eqref{eq:rho} and $r=t-u$ it follows that
    \begin{equation}
        \begin{aligned}
            \rho^{[n]}&=-\partial_u\tr\sum_{m=0}^n\sum_{k=m}^n\lambda^{[n](k)}F_{ur}^{(-2-m)}(t-u)^{k-m} \\
            &=\partial_u\tr\sum_{m=0}^n\sum_{k=m}^nf_{n,k}(z,\bar z)F_{ur}^{(-2-m)}\sum_{s=0}^{k-m}\binom{k-m}{s}(-1)^{k-m-s-1}\,t^su^{n-m-s} \\
            &=\partial_u\tr\sum_{s=0}^n\sum_{m=0}^{n-s}\sum_{k=m+s}^n(-1)^{k-m-s-1}\binom{k-m}{s}f_{n,k}(z,\bar z)F_{ur}^{(-2-m)}\,t^su^{n-m-s} \; ,
        \end{aligned}
    \end{equation}
    where $\binom{k-m}{s}$ is the binomial coefficient\footnote{In general, the binomial coefficient $\binom{n}{k}$ is well defined only for $n\geq k$. Here, we extend its definition as follows:
    \begin{equation}
        \binom{n}{k}:=\begin{cases}
            \frac{n!}{k!(n-k)!} \quad n\geq k \\
            0 \quad\quad\quad\ \ \, n<k
        \end{cases} \; .
    \end{equation}
    }. In the last step, we make use of
    \begin{equation}
        \begin{aligned}
            &\sum_{m=0}^n\sum_{k=m}^n\sum_{s=0}^{k-m}\binom{k-m}{s}\alpha_{n,s,m,k}=\sum_{m,s=0}^n\sum_{k=m}^n\binom{k-m}{s}\alpha_{n,s,m,k} \\
            &=\sum_{m,s=0}^n\sum_{k=m+s}^n\binom{k-m}{s}\alpha_{n,s,m,k}=\sum_{s=0}^n\sum_{m=0}^{n-s}\sum_{k=m+s}^n\binom{k-m}{s}\alpha_{n,s,m,k} \; ,          
        \end{aligned}
    \end{equation}
    which is true for any arbitrary function $\alpha_{n,s,m,k}$. At this point, it is convenient to define the following objects:
    \begin{align}
        \label{def:rhon_ftilde}
        \rho^{[n]}_s(u,z,\bar z)&:=\partial_u\tr\sum_{m=0}^{n-s}\tilde f_{n,s,m}(z,\bar z)F_{ur}^{(-2-m)}u^{n-m-s} \\
        \label{def:tildef}
        \tilde f_{n,s,m}(z,\bar z)&:=\sum_{k=m+s}^n(-1)^{k-m-s-1}\binom{k-m}{s}f_{n,k}(z,\bar z) \quad \mathrm{for} \ m+s\leq n \;,
    \end{align}
    so that the charge $\rho^{[n]}$ can be written as
    \begin{equation}
    \label{pf:rho_n}
        \begin{aligned}
            \rho^{[n]}=\sum_{s=0}^nt^s\rho_s^{[n]}=\underbrace{\rho_0^{[n]}}_{\mathrm{finite}}+\underbrace{\sum_{s=1}^nt^s\rho_s^{[n]}}_{\mathrm{divergent}} \;.
        \end{aligned}
    \end{equation}
    When taking the limit $t\rightarrow\infty$, the expression above diverge, except for the term with $s=0$, that corresponds to \eqref{eq:rho_finite}. Looking at the expression \eqref{def:rhon_ftilde} for the coefficient of the $t^s$-divergent term, we can see that it has the same $u$-expansion as the finite term in the charge of a sub$^{(n-s)}$-leading theorem, which is
    \begin{equation}
        \label{def:rhonminuss_ftilde}
        \rho^{[n-s]}_0(u,z,\bar z)=\partial_u\tr\sum_{m=0}^{n-s}\tilde f_{n-s,0,m}(z,\bar z)F_{ur}^{(-2-m)}u^{n-m-s} \;.
    \end{equation}
    In particular, given $\rho^{[n]}$, it is always possible to find a choice of the gauge parameter $\lambda^{[n-s]}$ for the sub$^{(n-s)}$-leading theorem such that $\rho_s^{[n]}=\rho_0^{[n-s]}$. Indeed, \eqref{def:rhon_ftilde} and \eqref{def:rhonminuss_ftilde} coincide if $\tilde f_{n-s,0,m}=\tilde f_{n,s,m}$ for all $m=0,\dots,n-s$. Using the definition \eqref{def:tildef}, and after some algebra, one can show that this condition corresponds to
    \begin{equation}
        \sum_{k=m+s}^{n}(-1)^{k-m-s}\left[f_{n-s,k-s}-\binom{k-m}{s}f_{n,k}\right]=0\;, \quad \forall m\in\{0,\dots,n-s\}\;,
    \end{equation}
    which is a system of $n-s+1$ equations. Solving it recursively for for $f_{n-s,k-s}$ we obtain
    \begin{equation}
        \label{eq:solution_system_tildefnminuss}
        \begin{aligned}
            f_{n-s,n-s}&=f_{n,n} \\
            f_{n-s,n-s-j}&=f_{n,n-j}+\sum_{k=n-j+1}^n(-1)^{k+j-n-1}\left[f_{n-s,k-s}-\binom{k+s+j-n}{s}f_{n,k}\right]\;,
        \end{aligned}
    \end{equation}
    with $j\in\{1,\dots,n-s\}$. Thus, choosing a gauge parameter $\lambda^{[n-s]}$ whose expansion is given by functions $f_{n-s,k}$ as in \eqref{eq:solution_system_tildefnminuss} gives $\rho_s^{[n]}=\rho_0^{[n-s]}$.
    
\end{proof}

\section{Sub-leading charges from the Ward identity perspective} \label{last_appendix}

In this section we derive the explicit form of the gauge transformation $\Lambda$ under the hypothesis that its associated charge matches a Ward identity-type expression.

More specifically, we assume that the charge splits into two contributions, a soft part that is linear in both $A_z^{(0)}$ and $A_{\zb}^{(0)}$ and a hard part with quadratic and higher terms in $A_z^{(0)}$ and $A_{\zb}^{(0)}$. In particular, the modes should correspond to a vanishing energy limiting process of a soft gluon,
\be 
\lim_{\omega \rightarrow 0} \int du  e^{iu\omega} F_{uz}^{(0)} \rightarrow  \int du  F_{uz}^{(0)}.
\ee

We impose radial gauge on the gauge field, but recall that we do not restrict the over-leading gauge parameters, as they only act on the Stueckelberg fields (see \autoref{sec_ext_phase_space} for more details).

\subsection{Leading order charge}

As a warm up, we first re-derive the $u$ dependence of the large gauge transformation corresponding to the leading charge. Let us start with a general parameter, $\Lambda=\Lambda^{(0)}(u,z,\bar z)$.
\begin{equation}
    \begin{aligned}
        Q_\Sigma[\Lambda]&=\frac{1}{e^2}\int_{\partial\Sigma}\tr(\Lambda * F)^{(0)}=\frac{1}{e^2}\int_{\partial\Sigma}dz^2\tr(\sqrt{g}\Lambda F_{ru})^{(0)} \\
        &=\frac{1}{e^2}\int_{\partial\Sigma}dz^2\tr\left(\gamma\Lambda^{(0)}F_{ru}^{(-2)}\right)=\frac{1}{e^2}\int_{\mathcal{I}^+}dudz^2\partial_u\tr\left(\gamma\Lambda^{(0)}F_{ru}^{(-2)}\right) \\
        &=\frac{1}{e^2}\int_{\mathcal{I}^+}dudz^2\gamma\tr\left(\partial_u\Lambda^{(0)}F_{ru}^{(-2)}\right)+\frac{1}{e^2}\int_{\mathcal{I}^+}dudz^2\gamma\tr\left(\Lambda^{(0)}\partial_uF_{ru}^{(-2)}\right)
    \end{aligned}
\end{equation}

From the e.o.m. \eqref{eq:EOM full Bondi radial u component} at order $n=2$ we obtain
\begin{equation}
\label{pf:F_ru-2}
    F_{ru}^{(-2)}=2\gamma^{-1}\int_{-\infty}^ud\tilde{u}\left(\partial_{(z}F_{\bar z)u}^{(0)}-i[A_{(z}^{(0)},F_{\bar z)u}^{(0)}]\right)\;,
\end{equation}
so that
\begin{equation}
    \begin{aligned}
        Q_\Sigma[\Lambda]&=\frac{2}{e^2}\int_{\mathcal{I}^+}dudz^2\tr\left[\partial_u\Lambda^{(0)}\int_{-\infty}^ud\tilde{u}\left(\partial_{(z}F_{\bar z)u}^{(0)}-i[A_{(z}^{(0)},F_{\bar z)u}^{(0)}]\right)\right] \\
        &\quad+\frac{2}{e^2}\int_{\mathcal{I}^+}dudz^2\tr\left[\Lambda^{(0)}\left(\partial_{(z}F_{\bar z)u}^{(0)}-i[A_{(z}^{(0)},F_{\bar z)u}^{(0)}]\right)\right]\;.
    \end{aligned}
\end{equation}
The top line in the previous equation does not have the right form corresponding to the insertion of a standard soft mode. Therefore, we need to set $\partial_u\Lambda^{(0)}=0$, so
\begin{equation}
    \begin{aligned}
        Q_\Sigma[\Lambda]&=-\frac{2}{e^2}\int_{\mathcal{I}^+}dudz^2\tr\left[\partial_{(z}\Lambda^{(0)}F_{\bar z)u}^{(0)}+i\Lambda^{(0)}[A_{(z}^{(0)},F_{\bar z)u}^{(0)}]\right]\;.
    \end{aligned}
\end{equation}

\subsection{sub-leading charge}
Next, we write $\Lambda=r\Lambda^{(1)}(u,z,\bar z)+\Lambda^{(0)}(u,z,\bar z)$, and we want to obtain the expression for $\Lambda^{(0)}$ in terms of $\Lambda^{(1)}$.

The sub-leading charge is
\begin{equation}
    \begin{aligned}
        Q_\Sigma^\mathrm{sub}[\Lambda]&=\frac{1}{e^2}\int_{\partial\Sigma}dz^2\;\tr\left(r^2\gamma\Lambda F_{ru}\right)^{(0)}=\frac{1}{e^2}\int_{\partial\Sigma}dz^2\gamma\tr\left(\Lambda^{(0)}F_{ru}^{(-2)}+\Lambda^{(1)}F_{ru}^{(-3)}\right) \\
        &=\frac{1}{e^2}\int_{\mathcal{I}^+}dudz^2\gamma\tr\left(\partial_u\Lambda^{(0)}F_{ru}^{(-2)}+\Lambda^{(0)}\partial_uF_{ru}^{(-2)}+\partial_u\Lambda^{(1)}F_{ru}^{(-3)}+\Lambda^{(1)}\partial_uF_{ru}^{(-3)}\right)\;.
    \end{aligned}
\end{equation}
The first two terms in the last line have been already studied for the leading order charge. In order to rewrite the other two terms, consider e.o.m. \eqref{eq:EOM full Bondi radial u component} with $n=3$:
\begin{equation}
\label{pf:partial_u_F_ru-3}
    \begin{aligned}
        \partial_{u}F^{(-3)}_{ru}&=2\gamma^{-1}\partial_{(z}F^{(-1)}_{\Bar{z})u}-i[A^{(-1)}_{u},F^{(-2)}_{ur}]-2i\gamma^{-1}\sum_{k=0}^{1}[A^{(k-1)}_{(z},F^{(-k)}_{\Bar{z})u}]\;.
    \end{aligned}
\end{equation}
Now, subtract Bianchi identity \eqref{eq:Bianchi radial gauge 1} from e.o.m. \eqref{eq:EOM full Bondi radial z component} with $n=2$, to find
\begin{equation}
\label{pf:F_uz_-1}
    \begin{aligned}
        F^{(-1)}_{uz}&=\frac{1}{2}\partial_z\left(\gamma^{-1}F^{(0)}_{z\bar{z}}+F_{ru}^{(-2)}\right)-\frac{i}{2}[A^{(0)}_{z},\gamma^{-1}F^{(0)}_{z\Bar{z}}+F_{ru}^{(-2)}]\;;
    \end{aligned}
\end{equation}
similarly, subtract \eqref{eq:Bianchi radial gauge 2} from \eqref{eq:EOM full Bondi radial zb component} with $n=2$, obtaining
\begin{equation}
\label{pf:F_uzb_-1}
    \begin{aligned}
        F^{(-1)}_{u\bar{z}}&=-\frac{1}{2}\partial_{\bar{z}}\left(\gamma^{-1}F^{(0)}_{z\bar{z}}-F_{ru}^{(-2)}\right)+\frac{i}{2}[A^{(0)}_{\bar{z}},\gamma^{-1}F^{(0)}_{z\bar{z}}-F_{ru}^{(-2)}]\;.
    \end{aligned}
\end{equation}
Use equations \eqref{pf:F_uz_-1} and \eqref{pf:F_uzb_-1} to rewrite \eqref{pf:partial_u_F_ru-3} as
\begin{equation}
    \begin{aligned}
        \partial_{u}F^{(-3)}_{ru}&=-\gamma^{-1}\partial_{z}\partial_{\bar{z}}F_{ru}^{(-2)}+\mathfrak{nl}\;,
    \end{aligned}
\end{equation}
where we defined
\begin{equation}
    \begin{aligned}
           \mathfrak{nl}:&=-i\gamma^{-1}\Big(\partial_{[z}[A^{(0)}_{\bar{z}]},\gamma^{-1}F^{(0)}_{z\bar{z}}]-\partial_{(z}[A^{(0)}_{\bar{z})},F_{ru}^{(-2)}]+\gamma[A^{(-1)}_{u},F^{(-2)}_{ur}] \\
           &\quad +2\sum_{k=0}^{1}[A^{(k-1)}_{(z},F^{(-k)}_{\Bar{z})u}]\Big)\;;
    \end{aligned}
\end{equation}
then we obtain:
\begin{equation}
    \begin{aligned}
        F^{(-3)}_{ru}&=-\gamma^{-1}\int_{-\infty}^ud\tilde u\left(\partial_{z}\partial_{\bar{z}}F_{ru}^{(-2)}+\mathfrak{nl}\right)\;.
    \end{aligned}
\end{equation}
At this point, the sub-leading charge is
\begin{equation}
    \begin{aligned}
        Q_\Sigma^\mathrm{sub}[\Lambda]&=\frac{1}{e^2}\int_{\mathcal{I}^+}dudz^2\tr\Big[  -\partial_u\Lambda^{(1)}\int_{-\infty}^ud\tilde u\left(\partial_{z}\partial_{\bar{z}}F_{ru}^{(-2)}+\mathfrak{nl}\right) \\
        &\quad +\gamma\partial_u\Lambda^{(0)}F_{ru}^{(-2)}+\gamma\Lambda^{(0)}\partial_uF_{ru}^{(-2)}-\Lambda^{(1)}\left(\partial_{z}\partial_{\bar{z}}F_{ru}^{(-2)}+\mathfrak{nl}\right)\Big]\;.
    \end{aligned}
\end{equation}
The term in the first line, when written in terms of $F_{uz}^{(0)}$ and $F_{u\bar z}^{(0)}$, is of the form \; \; \; $\int_{-\infty}^{+\infty} du \int_{-\infty}^ud\tilde u\int_{-\infty}^{\tilde u}d\tilde{\tilde u}$ (recall equation \eqref{pf:F_ru-2} in the computation of the leading charge above), which as in the previous subsection, does not correspond to a standard soft gluon insertion. The only way to make it vanish is to impose the condition
\begin{equation}
\label{pf:Lambda_condition_1}
    \partial_u\Lambda^{(1)}=0\;.
\end{equation}
We are now left with
\begin{equation}
    \begin{aligned}
        Q_\Sigma^\mathrm{sub}[\Lambda]&=\frac{1}{e^2}\int_{\mathcal{I}^+}dudz^2\tr\left[\left(\gamma\partial_u\Lambda^{(0)}-\partial_{z}\partial_{\bar{z}}\Lambda^{(1)}\right)F_{ru}^{(-2)}+\gamma\Lambda^{(0)}\partial_uF_{ru}^{(-2)}-\Lambda^{(1)}\mathfrak{nl}\right]\;,
    \end{aligned}
\end{equation}
where the first term is of the form $\int_{-\infty}^{+\infty} du \int_{-\infty}^ud\tilde u$, so again is of the wrong form, and we require it to vanish, thus arriving at
\begin{equation}
\label{pf:Lambda_condition_2}
    \partial_u\Lambda^{(0)}=\gamma^{-1}\partial_{z}\partial_{\bar{z}}\Lambda^{(1)}\;.
\end{equation}
Solving the previous equation, using condition \eqref{pf:Lambda_condition_1}, we finally conclude that
\begin{equation}
    \begin{cases}
        \Lambda^{(1)}=\mu(z,\bar z) \\
        \Lambda^{(0)}= \lambda(z,\zb) + u\gamma^{-1}\partial_z\partial_{\bar z}\mu(z,\bar z)=\lambda(z,\zb) + \frac{u}{2}\Delta\mu(z,\bar z)
    \end{cases}\;,
\end{equation}
where $\lambda(z,\zb)$ is a constant of integration from \eqref{pf:Lambda_condition_2}. This parameter contributes to the leading charge and therefore we discard it for the sub-leading charge.

Finally, using \eqref{pf:F_ru-2} again, we write the sub-leading charge as
\begin{equation}
    \begin{aligned}
        Q_\Sigma^\mathrm{sub}[\Lambda]&=\frac{1}{e^2}\int_{\mathcal{I}^+}dudz^2\tr\left(2\Lambda^{(0)}\partial_{(z}F_{\bar z)u}^{(0)}-2i\Lambda^{(0)}[A_{(z}^{(0)},F_{\bar z)u}^{(0)}]-\Lambda^{(1)}\mathfrak{nl}\right)\;.
    \end{aligned}
\end{equation}
Via this procedure we obtain a simple form of the sub-leading charges, avoiding the intricacies that follow from gauge fixing the gauge parameters, in particular the field dependence that can arise in certain gauges (e.g. Lorentz gauge).

\providecommand{\href}[2]{#2}\begingroup\raggedright\endgroup

\end{document}